\def\blfootnote{\gdef\@thefnmark{}\@footnotetext}
\title{Bayes-Optimal Estimation in Generalized Linear Models \\  
via Spatial Coupling}
\author{Pablo {Pascual Cobo}, Kuan Hsieh, Ramji Venkataramanan}
\def\<{\langle}
\def\>{\rangle}
\definecolor{maroon}{rgb}{0.5, 0.0, 0.0}
\newtheorem{theorem}{Theorem}
\newtheorem{lemma}{Lemma}[section]
\newtheorem{defi}{Definition}[section]
\newtheorem{remark}{Remark}[section]
\newtheorem{prop}{Proposition}[section]
\newtheorem{corr}{Corollary}[section]
\newcommand{\abs}[1]{\left \lvert #1\right \rvert}
\DeclareMathOperator*{\argmin}{arg\,min}
\DeclareMathOperator*{\argmax}{arg\,max}
\newcommand{\bA}{\boldsymbol{A}}
\newcommand{\bW}{\boldsymbol{W}}
\newcommand{\bM}{\boldsymbol{M}}
\newcommand{\bh}{\boldsymbol{h}}
\newcommand{\be}{\boldsymbol{e}}
\newcommand{\bx}{\boldsymbol{x}}
\newcommand{\bz}{\boldsymbol{z}}
\newcommand{\by}{\boldsymbol{y}}
\newcommand{\bq}{\boldsymbol{q}}
\newcommand{\bp}{\boldsymbol{p}}
\newcommand{\bi}{\boldsymbol{i}}
\newcommand{\tf}{\tilde{f}}
\newcommand{\tg}{\tilde{g}}
\newcommand{\tq}{\tau^q}
\newcommand{\tp}{\tau^p}
\newcommand{\htq}{\hat{\tau}^q}
\newcommand{\htp}{\hat{\tau}^p}
\newcommand{\veps}{\varepsilon}
\newcommand{\ups}{\upsilon}
\newcommand{\bveps}{\boldsymbol{\varepsilon}}
\newcommand{\bbeta}{\boldsymbol{\beta}}
\newcommand{\bgamma}{\boldsymbol{\gamma}}
\newcommand{\sB}{\textsf{B}}
\newcommand{\sD}{\textsf{D}}
\newcommand{\sT}{\textsf{T}}
\newcommand{\C}{\textsf{C}}
\newcommand{\R}{\textsf{R}}
\newcommand{\Lc}{\textsf{C}}
\newcommand{\Lr}{\textsf{R}}
\newcommand{\sfr}{\textsf{r}}
\newcommand{\sfc}{\textsf{c}}
\newcommand{\delin}{\normalfont \delta_\text{in}}
\newcommand{\bttq}{\boldsymbol{\alpha^q}}
\newcommand{\bttp}{\boldsymbol{\alpha^p}}
\newcommand{\om}{\omega}
\newcommand{\lam}{\Lambda}
\newcommand{\Aiid}{\bA_{\rm{iid}}}
\newcommand{\gin}{{g_{\normalfont \textrm{in}}}}
\newcommand{\ginpr}{{g'_{\normalfont \textrm{in}}}}
\newcommand{\gout}{{g_{\normalfont \textrm{out}}}}
\newcommand{\goutpr}{{g'_{\normalfont \textrm{out}}}}
\newcommand{\ginbayes}{{g^*_{\normalfont \textrm{in}}}}
\newcommand{\goutbayes}{{g^*_{\normalfont \textrm{out}}}}
\newcommand{\bargin}{{\bar{g}_{\normalfont \textrm{in}}}}
\newcommand{\bargout}{{\bar{g}_{\normalfont \textrm{out}}}}
\newcommand{\psiin}{\psi_{\normalfont \textrm{in}}}
\newcommand{\psiout}{\psi_{\normalfont \textrm{out}}}
\newcommand{\pot}{U}
\newcommand{\frs}{f_{\normalfont \textrm{RS}}}
\newcommand{\mc}{\mathcal}
\newcommand{\normal}{\mathcal{N}}
\newcommand{\reals}{\mathbb{R}}
\newcommand{\beq}{\begin{equation}}
\newcommand{\eeq}{\end{equation}}
\newcommand{\sign}{\textrm{\sign}}
\newcommand{\E}{\mathbb{E}}
\newcommand{\var}{\normalfont \textrm{Var}}
\newcommand{\mmse}{{\sf{mmse}}}
\newcommand{\MMSE}{\textsf{MMSE}}
\newcommand{\disteq}{\stackrel{\rm d}{=}}
\newcommand{\Pout}{P_{\normalfont \textrm{out}}}
\begin{document}

\date{}
\maketitle

\blfootnote{This work was supported in part by an Engineering and Physical Sciences Research Council Doctoral Training Award. This paper was presented in part at the 2023 IEEE International Symposium on Information Theory.
P.~Pascual~Cobo and R.~Venkataramanan are with the Department of Engineering, University of Cambridge,  UK ({\tt pp423@cam.ac.uk}; {\tt rv285@cam.ac.uk}).
K.~Hsieh was with the Department of Engineering, University of Cambridge,  UK 
({\tt kuanhsieh95@gmail.com}).}

\begin{abstract}
  We consider the problem of signal estimation in a generalized linear model (GLM). GLMs include many canonical problems in statistical estimation, such as  linear regression, phase retrieval, and 1-bit compressed sensing. Recent work has precisely characterized the asymptotic minimum mean-squared error (MMSE)  for  GLMs  with i.i.d.~Gaussian sensing matrices. However, in many models there is a significant gap between the MMSE and the performance of the best known feasible estimators.  We address this issue by considering GLMs defined via \emph{spatially coupled} sensing matrices. We propose an efficient approximate message passing (AMP) algorithm for estimation and prove that with a simple choice of spatially coupled design, the MSE of a carefully tuned AMP estimator approaches the asymptotic MMSE as the dimensions of the signal and the observation grow proportionally.
To prove the result, we first rigorously characterize the asymptotic  performance of AMP for a GLM with a generic spatially coupled design. This characterization is in terms of a deterministic recursion (`state evolution') that depends on the parameters defining the spatial coupling. Then, using a simple spatially coupled design and a judicious choice of  functions for the AMP algorithm, we analyze the fixed points of the resulting state evolution and show that it achieves the asymptotic MMSE. 
Numerical results for phase retrieval and rectified linear regression  show that spatially coupled designs can yield substantially lower MSE than i.i.d.~Gaussian designs at finite  dimensions when used with AMP algorithms.  
  \end{abstract}

  \section{Introduction}
\label{sec:intro}
Consider a generalized linear model (GLM), where the goal is to estimate a signal $\bx \in \reals^n$ from an observation  $\by \in \reals^m$  obtained as: 
\begin{equation}
    \by = \varphi( \bz, \, \bveps), \quad \text{ where } \bz = \bA \bx.
    \label{eq:GLM_def}
\end{equation}
Here  $\bA \in  \reals^{m \times n}$ is a known sensing matrix, $\bveps \in \reals^m$ is an unknown  noise vector and $\varphi: \reals^2 \to \reals$ is a known function applied row-wise to the input. The model \eqref{eq:GLM_def} covers many widely studied problems in statistical estimation and signal processing, including linear regression \cite{Donoho1,eldar2012compressed} ($\by = \bA \bx  + \bveps$), phase retrieval \cite{shechtman2015phase,fannjiang2020numerics} ($\by = \abs{\bA \bx}^2  + \bveps$), and 1-bit compressed sensing \cite{boufounos20081} ($ \by = \text{sign}( \bA \bx + \bveps)$).

In this paper, we consider the high-dimensional setting where $m,n$ are both large with $\lim_{n \to \infty} \frac{m}{n}$ $=\delta$, a positive constant that does not scale with $n$.
The components of the signal $\bx$ are assumed to be distributed according to a prior $P_{X}$, and the components of the noise $\bveps$   according to a distribution $P_{\bar{\veps}}$.  
If the entries of $\bA$ are also assumed to be drawn from a given distribution,  the  minimum mean-squared error is:
\begin{equation}
   \MMSE_n :=  \frac{1}{n} \E\{ \| \bx - \E\{\bx \mid \bA, \, \by \} \|^2\}.
   \label{eq:MMSE_def}
\end{equation}
Two key questions in this setting are: i)  What is the limiting behavior of  $\MMSE_n$ as $m,n \to \infty$ with their ratio fixed at a given $\delta$?
ii) Noting that the MMSE estimator $\E\{\bx \, | \, \bA, \, \by \} $ is computationally infeasible for large $n$, what is the smallest mean-squared error (MSE) achievable by  efficient estimators? 

The first question has been precisely answered for i.i.d.~Gaussian sensing matrices, first for linear models \cite{reeves2019thereplica,barbier2020mutual} and then for generalized linear models in \cite{barbier2019optimal}. In \cite{barbier2019optimal}, Barbier et al. proved that under mild technical conditions, $\MMSE_n$ converges to a well-defined limit  that can be numerically computed (see Appendix \ref{app:pot_fun_equiv}). Regarding computationally efficient estimators, a
variety of estimators based on convex relaxations, spectral methods, and non-convex methods have been proposed for specific GLMs, such as sparse linear regression \cite{Tibs96,hastie2019statistical}, phase retrieval \cite{netrapalli2013phase,mondelli2017fundamental,lu2020phase} and one-bit compressed sensing \cite{plan2013one, jacques2013robust}. Most of these techniques are generic and  are not well-equipped to exploit the signal prior, so their estimation error (MSE) is generally much higher than the MMSE.

 Approximate message passing (AMP) is a family of iterative algorithms which can be tailored to take advantage of the signal prior. AMP algorithms were first proposed for estimation in linear models \cite{Kab03,donoho2009message,bayati2011thedynamics,krzakala2012statistical} and have since been applied to a range of statistical estimation problems, including GLMs and their variants \cite{rangan2010generalized,schniter2014compressive,ma2019optimization, maillard2020phase, Tan23c}, and low-rank matrix and tensor estimation \cite{deshpande2014information,fletcher2018iterative,lesieur2017constrained,MV21estimation, LiFanWei_Z2, rossetti2023approximate}. 
 An attractive feature of AMP is that under suitable model assumptions, its high-dimensional estimation performance  is precisely characterized by a succinct deterministic recursion called \emph{state evolution}.   Using the state evolution analysis, it has been proved that  the AMP is Bayes-optimal for certain problems such as symmetric rank-1 matrix estimation \cite{deshpande2014information,MV21estimation}, i.e., its MSE converges to the limiting MMSE.  
 However, for many  GLMs, including phase retrieval,  the MSE of AMP can be substantially higher than the limiting MMSE  \cite{barbier2019optimal}. 

 In this work, we show that for any GLM satisfying certain mild conditions, the limiting MMSE corresponding to an i.i.d. Gaussian sensing matrix can be achieved efficiently using a \emph{spatially coupled} sensing matrix and an AMP algorithm for estimation. A spatially coupled Gaussian sensing matrix is composed of blocks with different variances; it consists of  zero-mean Gaussian entries that are independent  but not identically distributed across different blocks. 
 To ensure a fair comparison with i.i.d.~Gaussian matrices and their limiting MMSE, the variances in 
the spatially coupled sensing matrix $\bA$ are chosen so that the signal strength $ \frac{1}{m}\E\{ \| \bA \bx \|^2 \}$ is the same as for the i.i.d.~Gaussian design.

Spatial coupling was introduced by Felstrom and Zigangirov in the context of LDPC codes  in  \cite{felstrom1999time}. Spatially coupled LDPC codes have since been shown to achieve the capacity of a large class of binary-input channels \cite{kudekar2011threshold, kudekar2013spatially}. An overview of spatially coupled LDPC codes can be found in \cite{costello2014spatially}, and   we refer the reader to \cite{mitchell2015spatially} for a list of references. Spatially coupled matrices for compressed sensing were introduced by Kudekar and Pfister \cite{kudekar2010theeffect}, and then studied in a number of works \cite{krzakala2012statistical,takeuchi2011cdmaspatialcoupling, takeuchi2015performance, donoho2013information}. 
Donoho et al.  \cite{donoho2013information} proved that the Bayes-optimal  error of linear models with standard Gaussian design can be achieved using spatially coupled sensing matrices  with estimation via AMP. Specifically, they showed that for noiseless linear models, perfect signal recovery is possible with spatial coupling and AMP provided the sampling ratio $\delta$ exceeds the R{\'e}nyi information dimension of the  signal prior. Spatially coupled matrices have also been used for constructing rate-optimal communication schemes for both AWGN channels  \cite{barbier2015approximate, barbier2017approximate,venkataramanan19monograph,rush2020capacity,hsieh2022GMAC} and  general memoryless  channels \cite{barbier2019universal}.

The focus of the above works is mainly on spatial coupling for estimation in \emph{linear} models. Our work significantly expands the scope of spatial coupling beyond linear models, and shows how to efficiently achieve the Bayes-optimal error  for  GLMs, including canonical examples such as phase retrieval. We must mention that to use spatial coupling we need some flexibility in constructing the sensing matrix, which is a reasonable assumption in many signal processing and imaging applications.  


\paragraph{Structure of the paper and main contributions.} 
In Section \ref{sec:spatially-coupled-glms}, we describe the construction of a spatially coupled Gaussian sensing matrix and present the AMP algorithm for estimation in a   GLM defined via a spatially coupled matrix. Our first result,  Theorem \ref{thm:gen_sc_gamp_SE}, gives performance guarantees for AMP applied to a GLM with a generic spatially coupled design. It shows that in each AMP iteration, the joint empirical distribution of the signal and AMP estimate converges to the law of a pair of random variables, defined in terms of the signal prior, the noise distribution and the deterministic state evolution parameters. The state evolution  parameters are computed via a  recursion that depends on the `denoising' functions used to define the AMP algorithm and on the parameters used to define the spatially coupled design. 

In Section \ref{sec:mmse-fp-bayes}, we state our second main result, Theorem \ref{thm:SC_FPs}, which 
shows that for any GLM and signal prior (that satisfy certain mild regularity conditions), a simple spatially coupled design with AMP estimation can efficiently achieve an MSE arbitrarily close to the limiting  Bayes-optimal error for an i.i.d. design (with the same signal-to-noise ratio). This is done with specific choices for denoising functions, and suitably large choice of coupling parameters. Theorem \ref{thm:SC_FPs} also establishes the limiting MSE achieved by AMP with an i.i.d.~Gaussian design, thereby characterizing the gap between the performance of spatially coupled and i.i.d.~designs.

In Section \ref{sec:numerical_results}, we present numerical results for phase retrieval and rectified linear regression, which confirm  that the MSE can be  significantly smaller with a spatially coupled design compared to an i.i.d.~Gaussian one, with AMP being used for estimation in both settings. Section \ref{sec:proofs} contains the proofs of the main results, and Section \ref{sec:conclusion} concludes the paper. 

\paragraph{Technical Ideas.} The AMP performance characterization in Theorem \ref{thm:gen_sc_gamp_SE}, for a GLM with a generic spatially coupled design, is proved using a change of variables that maps the proposed  algorithm to an abstract AMP iteration with matrix-valued iterates. A state evolution characterization for the abstract AMP iteration is obtained using the results in  \cite{donoho2013information}, and then translated via the change of variables to obtain the state evolution characterization for the proposed AMP. 

For our second result, Theorem \ref{thm:SC_FPs}, we use  specific choices for the spatially coupled design and the AMP denoising functions. The spatially coupled matrix has a band-diagonal structure, with all the non-zero entries being independent and identically distributed. The denoising functions are chosen based on the Bayes-optimal choices in the uncoupled setting, and adapted to the spatially coupled design. Theorem \ref{thm:SC_FPs} is obtained by analyzing the fixed point of the state evolution recursion obtained with the above choices. A key tool in this analysis is a result of Yedla et al. \cite{yedla2014asimple} characterizing the fixed points of a general coupled recursion in terms of a suitable `potential' function. 

We note that both our spatially coupled design and the proof of Bayes optimality of AMP (Theorem \ref{thm:SC_FPs}) are substantially simpler than those in \cite{donoho2013information} for spatially coupled linear models. In \cite{donoho2013information}, the fixed points of the  coupled state evolution  recursion for the linear model were analyzed using a continuum version of the state evolution and a perturbation argument. Our simpler proof  is facilitated by the powerful result of \cite{yedla2014asimple} which gives a recipe for constructing potential functions for a general class of coupled recursions.

\paragraph{Notation.} We write $[n]$ for the set $\{1, \ldots, n \}$.
For a vector $\bx$, we write $x_i$ for the $i$th component. If $\bx$ is a matrix, $x_i$ denotes its $i$th row. We write $\mathbf{1}_n$ for the all-ones vector of length $n$, and $\mathbf{0}_n$ for the all-zeros vector.  For random variables $X, Y$, we write $X \perp Y$ to denote that they are independent.

\section{Spatially Coupled GLMs}
\label{sec:spatially-coupled-glms}

\subsection{Model assumptions} \label{subsec:model_assump}
As $n, m \to\infty$, we assume that $ \frac{m}{n}  \to \delta$, for some constant $\delta > 0$. Both the signal $\bx$ and the noise vector $\bveps$ are independent of the sensing matrix $\bA$.  As $n \to \infty$, the empirical distributions of the signal and the noise vectors are assumed to converge in Wasserstein distance to well-defined limits. More precisely, let $\nu_n(\bx)$ and $\nu_m(\bveps)$ denote the empirical distributions of $\bx$ and $\bveps$, respectively. Then for some $k \in [2, \infty)$,  there exist scalar random variables $X \sim P_{X}$  and $\bar{\veps}\sim P_{\bar{\veps}}$ with $\E\{ |X|^k \},\E\{ |\bar{\veps}|^k \}<\infty$, such that  $d_k(\nu_n(\bx),P_{X}) \rightarrow 0$ and $d_k(\nu_m(\bveps), P_{\bar{\veps}}) \rightarrow 0$ almost surely, where $d_k(P,Q)$ is the $k$-Wasserstein distance between distributions $P, Q$ defined on the same Euclidean probability space.  We note that this assumption on the empirical distributions of $\bx$ and $\bveps$ is more general than assuming that their entries are i.i.d.~according to $P_X$ and 
$P_{\bar{\veps}}$, respectively (which it includes as a special case). 

To avoid sign-symmetry issues, we assume that \emph{at least} one of the following two conditions holds: 
\begin{align}
  \mathbb{P}\left(\varphi(Z, \bar{\veps}) = \varphi(-Z, \bar{\veps}) \right) \, \neq  \, 1, \ \  \text{ where } \  Z \sim \normal(0,1) \perp \bar{\veps} \sim P_{\bar{\veps}} \, ,   
  \label{eq:sign_anti_sym}
\end{align}
or
\begin{align}
    \E\{ X\} \neq 0.
    \label{eq:exp_ne_0}
\end{align}
When the condition \eqref{eq:sign_anti_sym} does not hold, as in phase retrieval  where $Y=\abs{Z}^2+ \bar{\veps}$, AMP requires an initialization with non-zero asymptotic empirical correlation with the signal $\bx$. In this case, \eqref{eq:exp_ne_0} ensures that the natural initialization $\E\{X\} \mathbf{1}_n$ has non-zero asymptotic empirical correlation with the signal. 
For GLMs which satisfy the condition in \eqref{eq:sign_anti_sym}, e.g.,  linear models ($Y= Z + \bar{\veps}$) and rectified linear regression ($Y= \max\{0, Z+ \bar{\veps}\}$), a random initialization suffices for AMP to produce non-trivial estimates of the signal. For GLMs with i.i.d.~Gaussian designs, when neither of the two conditions  holds, AMP can be initialized with a spectral estimator \cite{mondelli2021approximate}.  Spectral estimators for spatially coupled designs are beyond the scope of this paper, and we leave this as a direction for future work.

\subsection{Spatially Coupled Sensing Matrix}
\label{subsec:sc-design-matrix}
Consider the GLM \eqref{eq:GLM_def} with  sensing matrix $\bA \in\mathbb{R}^{m \times n}$.
A spatially coupled Gaussian sensing matrix 
$\bA$ consists of independent zero-mean normally distributed entries whose variances are specified by a \emph{base matrix} $\bW$ of dimension $\Lr \times \Lc$.
The sensing matrix $\bA$ is obtained by replacing each entry of the base matrix $W_{\sfr \sfc}$ 
by an $\frac{m}{\Lr} \times \frac{n}{\Lc}$ matrix with entries drawn i.i.d.~$\sim \mc{N}(0, \frac{W_{\sfr \sfc}}{m/\Lr})$, 
for $\sfr\in[\Lr]$, $\sfc\in[\Lc]$. This is similar to the ``graph lifting'' procedure for constructing spatially coupled LDPC codes from protographs \cite{mitchell2015spatially}.
See Figure \ref{fig:spatial_coupling_example} for an example.

From the construction, $\bA$ has independent Gaussian entries
\begin{equation}
    \label{eq:construct_sc_A}
A_{ij} \sim \mc{N}\bigg(0,\frac{1}{m/\Lr} 
W_{\sfr(i), \sfc(j)} \bigg),  \quad \text{for }  i \in [m], \ j\in [n].
\end{equation}
Here the operators $\sfr(\cdot):[m]\rightarrow[\Lr]$ and $\sfc(\cdot):[n]\rightarrow[\Lc]$ in \eqref{eq:construct_sc_A} map a particular row or column index to its corresponding \emph{row block} or \emph{column block} index in $\bW$. Each entry of the base matrix corresponds to an $\frac{m}{\Lr} \times \frac{n}{\Lc}$ block of the sensing matrix $\bA$, and each block can be viewed as an (uncoupled) i.i.d.~Gaussian sensing matrix with sampling ratio
\begin{equation}
\label{eq:del_inner}
\delin := \frac{m/\Lr}{n/\Lc}
= \frac{\Lc}{\Lr} \, \delta.
\end{equation}
\begin{figure}[!t]
\centerline{\begin{tikzpicture}[scale=0.2, font=\large]
       \foreach \y in {0,...,9}
               \draw[thick, color=black!25] (10,-2*\y) -- (31,-2*\y);
        \foreach \x in {0,1, 2, 3, 4, 5, 6}
                \draw[thick, color=black!25] ({10+3*\x},0) -- ({10+3*\x},-18); 
        \foreach \x in {0,1, 2, 3, 4, 5, 6}{
                \draw[fill=cyan!65, thick] (10+3*\x, {-2*\x}) rectangle (13+3*\x, {-2-2*\x});
                \draw[fill=cyan!65, thick] (10+3*\x, {-2-2*\x}) rectangle (13+3*\x, {-4-2*\x});
                \draw[fill=cyan!65, thick] (10+3*\x, {-4-2*\x}) rectangle (13+3*\x, {-6-2*\x});
        }
        \draw[thick] (10, 0) -- (31, 0) -- (31, -18) -- (10, -18) -- cycle;
        \draw (20.5, -18.5) node[anchor=north] {\small{Sensing matrix $\bA$}};
        \draw[to-to] (9.5, 0) -- (9.5, -2);
        \draw (9.5, -1) node[anchor=east] {$\frac{m}{\R}$};
        \draw[to-to] (6.8, 0) -- (6.8, -18);
        \draw (5.8, -5.5) node[anchor=east, rotate=90] {$m$};
        \draw[to-to] (10, 0.5) -- (13, 0.5);
        \draw (11.5, 0.3) node[anchor=south] {\small{$n/\C$}};
        \draw[to-to] (10, 2.8) -- (31, 2.8);
        \draw (20.5, 2.5) node[anchor=south] {$n$};
       \foreach \y in {0,...,9}
               \draw[thick, color=black!25] (36,-4-\y) -- (43,-4-\y);
        \foreach \x in {0,1, 2, 3, 4, 5, 6}
                \draw[thick, color=black!25] ({36+\x},-4) -- ({36+\x},-13);       
        \draw[thick] (36, -4) -- (43, -4) -- (43, -13) -- (36, -13) -- cycle;
        \draw (39.5, -14) node[anchor=north] {\small{Base matrix $\bW$}};
        \draw[to-to] (43.9, -4) -- (43.9, -13);
        \draw (43.9, -8.5) node[anchor=west] {$\R$};
        \draw[to-to] (36, -3.3) -- (43, -3.3);
        \draw (39.5, -3.3) node[anchor=south] {$\C$};
        \foreach \x in {0,1, 2, 3, 4, 5, 6}{
                \draw[fill=cyan!65, thick] (36+\x, {-4-\x}) rectangle (37+\x, {-5-\x});
                \draw[fill=cyan!65, thick] (36+\x, {-5-\x}) rectangle (37+\x, {-6-\x});
                \draw[fill=cyan!65, thick] (36+\x, {-6-\x}) rectangle (37+\x, {-7-\x});
        }
        \draw[fill=maroon!70, thick] (37,-5) rectangle (38,-6);
        \draw[fill=maroon!70, thick] (13,-2) rectangle (16,-4);
        \draw[maroon!70, thick] (37.5, -5.5) -- (16, -2);
        \draw[maroon!70, thick] (37.5, -5.5) -- (16, -4);
        \draw[fill=maroon!70, thick] (36+5, {-5-5}) rectangle (37+5, {-6-5});
        \draw[fill=maroon!70, thick] (25,-12) rectangle (28,-14);
        \draw[maroon!70, thick] (41.5, -10.5) -- (28, -12);
        \draw[maroon!70, thick] (41.5, -10.5) -- (28, -14);
\end{tikzpicture}}
\caption{The  entries of $\bA$ are independent with $A_{ij} \sim \mathcal{N}(0,\frac{1}{m/\R} W_{\sfr(i), \sfc(j)})$, where $\bW$ is the base matrix. Here $\bW$ is an $(\omega, \Lambda)$ base matrix with $\omega=3, \Lambda=7$ (see Definition \ref{def:ome_lamb_rho}). The white parts of $\bA$ and $\bW$ correspond to zeros.}
\label{fig:spatial_coupling_example}
\end{figure}
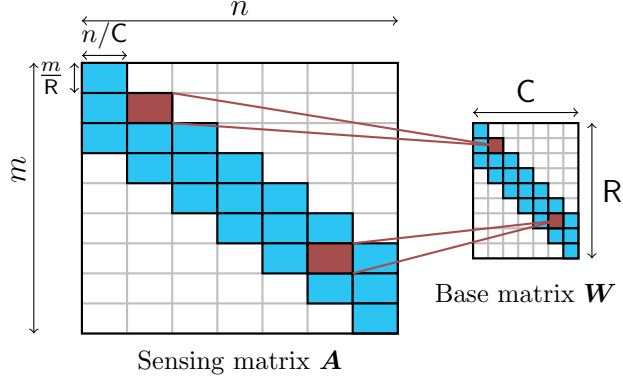

As in \cite{donoho2013information}, we will assume that entries of the base matrix $\bW$ are scaled to satisfy:%
\begin{equation}
    \sum_{\sfr=1}^{\Lr} W_{\sfr \sfc} = 1 \quad \text{ for } \sfc \in [\Lc], \qquad \kappa_1 \le \sum_{\sfc=1}^{\Lc} W_{\sfr \sfc} \le \kappa_2,
    \label{eq:Wrc_assumptions}
\end{equation}
for some $\kappa_1, \kappa_2 >0$.
The first condition in \eqref{eq:Wrc_assumptions} ensures that  the columns of the sensing matrix $\bA$ have expected squared norm equal to one, to match the standard assumption for AMP with i.i.d.~Gaussian matrices.   The second condition ensures that the variance of each entry of $\bz=\bA \bx$ is bounded above and below for all $i \in [m]$. Indeed, it can be verified that the $\sfr$th block of $\bz$,  for $\sfr \in [\R]$, has i.i.d. entries distributed according to the law of a zero-mean  Gaussian  random variable $Z_\sfr$ with variance
\begin{equation}
    \E\{ Z_{\sfr}^2 \} =  \frac{\E\{ X^2 \}}{\delin} \sum_{\sfc =1}^{\C} W_{\sfr\sfc} \in \left[ \frac{\E\{ X^2 \}}{\delin} \kappa_1, \, \frac{\E\{ X^2 \}}{\delin} \kappa_2 \right].
    \label{eq:Zr_sq}
\end{equation}

%
The trivial base matrix with $\Lr=\Lc=1$ (single entry equal to 1) corresponds to the sensing matrix with i.i.d.~$\mc{N}(0, \frac{1}{m})$ entries. The quantity, $\E\{ \| \bz \|^2 \}/m$, which can be interpreted as the measurement strength, is equal to $\E\{ X^2 \}/\delta$ for both the i.i.d.~Gaussian and spatially coupled matrices. Indeed, from \eqref{eq:Zr_sq}, we have that with spatial coupling,
\begin{equation}
    \frac{\E\{ \| \bz \|^2 \}}{m} = 
    \frac{\E\{ X^2 \}}{ \R \, \delin} \sum_{\sfr =1}^{\R} \sum_{\sfc =1}^{\C} W_{\sfr\sfc}
    = \frac{\E\{ X^2 \}}{\delta},
\label{eq:meas_strength}
\end{equation}
where the second equality follows from \eqref{eq:del_inner} and \eqref{eq:Wrc_assumptions}.
Thus the spatially coupled design can be fairly compared  to the i.i.d.~Gaussian one  as they have the same measurement strength for a given $\delta$.

To prove the Bayes optimality of spatial coupling with AMP,  we will use the following base matrix,  first used for sparse superposition coding over Gaussian channels \cite{rush2020capacity, hsieh2022GMAC}, and inspired by the construction of protograph based spatially coupled LDPC codes \cite{mitchell2015spatially}.

\begin{defi}
\label{def:ome_lamb_rho}
An $(\omega , \Lambda)$ base matrix $\bW$ is described by two parameters: the coupling width $\omega\geq1$ and the coupling length $\Lambda\geq 2\omega-1$. The matrix has $\Lr=\Lambda+\omega-1$ rows and $\Lc=\Lambda$ columns, with the $(\sfr,\sfc)$th entry of the base matrix, for  $\sfr \in [\Lr], \sfc\in[\Lc]$, given by
\begin{equation}
\label{eq:W_rc}
W_{\sfr \sfc} =
\begin{cases}
 	\ \frac{1}{\omega} \quad &\text{if} \ \sfc \leq \sfr \leq \sfc+\omega-1,\\
	\ 0 \quad &\text{otherwise}.
\end{cases}
\end{equation}
\end{defi}
The base matrix in Figure \ref{fig:spatial_coupling_example} has parameters $(\omega=3, \Lambda=7)$.

With an $(\omega, \Lambda)$ base matrix, from \eqref{eq:del_inner} we have $\delin= \frac{\Lambda}{\Lambda + \omega - 1}\delta $. As $\omega>1$ in spatially coupled systems,  the difference $\delta - \delin$ is often referred to as a ``rate loss'' in the literature of spatially coupled error correcting codes \cite{kudekar2011threshold, kudekar2013spatially, costello2014spatially}, and becomes negligible when $\Lambda$ is much larger than $\omega$.

\subsection{Spatially Coupled Generalized AMP} \label{subsec:scGAMP}

The Generalized AMP (GAMP) algorithm for estimation in a GLM with i.i.d.~Gaussian design was proposed by Rangan \cite{rangan2010generalized}. We now describe the AMP algorithm for estimating $\bx$ from $\by$ when $\bA$ is a spatially coupled Gaussian matrix. We call this algorithm Spatially Coupled Generalized AMP (SC-GAMP). For iteration $t \ge 1$, SC-GAMP iteratively computes $\bp(t) \in \reals^m$ and $\bq(t) \in \reals^n$ via  functions $\bargin(\cdot \, ; \, t): \reals^n \to \reals^n$ and $\bargout(\cdot, \cdot \,  ; \, t): \reals^m \times \reals^m \to \reals^m$.   Starting from an initialization $\bx(0) \equiv \bargin(\bq(0) \, ; \, 0 ) \in \reals^n$ and $\bp(0) = \bA \bx(0)$, for $t \ge 0$ it computes:
\begin{align}
      &  \bq(t+1)=\bargin(\bq(t) \, ; \, t )
      \, +  \, \bttq(t+1)\odot \bA^\top \bargout\left(\bp({t}),\by \, ; \, t\right),\label{eq:SC-GAMP1} \\
    & \bp(t+1)=\bA \bargin(\bq(t+1); \,  t+1)\, - \, \bttp(t+1) \odot \bargout(\bp(t),\by \, ; \, t ).
        \label{eq:SC-GAMP2}
\end{align}
Here the  vectors $\bttp(t+1) \in \reals^n$ and $\bttq(t+1) \in \reals^m$ are defined below via state evolution, and $\odot$ denotes the Hadamard (element-wise) product. The vector $\bargin(\bq(t) \, ; \, t )$ is the estimate of the signal at the end of iteration $t$.

The functions $\bargin$ and $\bargout$ are assumed to have a separable block-wise structure, defined as follows.  Partition the sets $[m]$ and $[n]$ into $\R$ and $\C$ equal-sized blocks (as shown in Figure \ref{fig:spatial_coupling_example}), respectively,
denoted by 
    \begin{equation}
       [n] = \cup_{\sfc=1}^{\Lc} \, \mc{J}_{\sfc}, \quad   [m] = \cup_{\sfr=1}^{\Lr} \, \mc{I}_{\sfr},
       \label{eq:nm_part}
    \end{equation}
where
\begin{align}
&\mc{J}_{\sfc} = \{(\sfc-1) \frac{n}{\C} +1, \ldots, \sfc \,  \frac{n}{\C} \},  \ \text{ for } \sfc \in [\C], \nonumber\\
&\mc{I}_{\sfr} = \{(\sfr-1) \frac{m}{\R} +1, \ldots, \sfr \,  \frac{m}{\R} \},  \ \text{ for } \sfr \in [\R].
\label{eq:part_def}
\end{align}
Then, for each $t \ge 0$, we assume that there exist functions $\gin(\cdot \, ; \, t+1): \reals\times [\C] \to \reals$ and $\gout( \cdot \,  ; \,  t): \reals^2 \times [\R] \to \reals$ such that 
\begin{align}
    & \bar{g}_{\text{in}, j} \left( \bq(t+1) \, ; \, t+1 \right)=
    \gin(q_j(t+1), \sfc \, ; \, t+1), \   \text{ for } j \in \mc{J}_{\sfc}, \nonumber \\
   &  \bar{g}_{\text{out}, i}\left(\bp(t),\by \, ; \, t\right) =
    \gout\left( p_i(t), y_i, \sfr \, ; \, t \right), \   \text{ for } i \in \mc{I}_{\sfr}. \label{eq:gingout_def}
\end{align}
In words, the functions $\bar{g}_{\text{in}} \left( \cdot \,  ; \, t+1 \right): \reals^{n} \to \reals^n$ and $\bar{g}_{\text{out}}\left(\cdot ,\cdot \, ; \, t\right): \reals^{m} \times \reals^m \to \reals^m$ act row-wise, with each component depending only on the index of the column block or row block.

\paragraph{State Evolution.}  We will show in Theorem \ref{thm:gen_sc_gamp_SE}  that under suitable assumptions, for each $t \ge 0$, the empirical distribution of $(\bq_{\sfc}(t) - \mu^{q}_\sfc(t) \,  \bx_{\sfc})$ converges to a Gaussian $\normal(0, \tq_{\sfc}(t))$, for each $\sfc \in [\C]$. Here $\bq_{\sfc}(t), \bx_{\sfc} \in \reals^{n/\C}$ denote the $\sfc$th block of $\bq(t), \bx \in \reals^n$, respectively. Thus the $\sfc$th block of the function $\bargin(; t)$ can be viewed as estimating $\bx_{\sfc}$ from an observation of the form $ \mu^{q}_\sfc(t) \,  \bx_{\sfc}$ plus  Gaussian noise of variance $\tq_{\sfc}(t)$. Analogously, the joint empirical distribution of the rows of $(\bp_\sfr(t), \, \bz_{\sfr}  ) \in \reals^{(m/\R) \times 2}$
converges to a bivariate Gaussian distribution $\normal(0, \Lambda_\sfr(t))$, 
for $\sfr \in [\R]$. Here $\bp_{\sfr}(t), \bz_{\sfr} \in \reals^{m/\R}$ denote the $\sfr$th block of $\bp(t), \bz \in \reals^m$, respectively. The constants  $\mu^{q}_\sfc(t), \tq_{\sfc}(t) \in \reals$, and $\Lambda_\sfr(t) \in \reals^{2 \times 2}$
are defined via the state evolution recursion below.

For $t \ge 1$, given the coefficients $\mu^{q}_\sfc(t), \tau^q_{\sfc}(t)$ for $\sfc \in [\C]$, and $\Lambda_{\sfr}(t)$ for $\sfr \in [\R]$,   define the random variables $Q_{\sfc}(t)$ and $P_{\sfr}(t)$ as follows:
\begin{align} 
    & Q_{\sfc}(t)  =  \mu^{q}_\sfc(t) X \, +  \, G^q_{\sfc}(t), \quad  \text{ where } X \sim P_X \, \perp \,  
    G^q_{\sfc}(t) \sim \normal(0, \tau^q_{\sfc}(t)), \nonumber\\
    & (P_{\sfr}(t),  \, Z_\sfr) \, \sim \,  \normal(0, \Lambda_\sfr(t)).\label{eq:PtQt_def}
\end{align}
Recalling that $\by = \varphi(\bz, \bveps)$ where $\bz = \bA \bx$, we write $\gout(p, y, \sfr \, ; t) =
\gout(p, \varphi(z, \veps), \sfr \, ; \, t)$ and let 
$ \goutpr(p, \varphi(z, \veps), \sfr \, ; \, t)$ denote the derivative with respect to  the first argument and
$\partial_z \, \gout(p, \varphi(z, \veps), \sfr \, ; \, t)$ the derivative with respect to $z$. Then the coefficients for $(t+1)$ are recursively  computed as follows, for $\sfc \in [\C]$ and $\sfr \in [\R]$: 
\begin{align}
    & \tq_{\sfc}(t+1) = \frac{\sum_{\sfr =1}^{\R} W_{\sfr \sfc} \, \E\{ \gout( P_{\sfr}(t), \, \varphi(Z_{\sfr}, \bar{\veps}), \sfr \,  ; \, t )^2 \}}{ \big[ \sum_{\sfr =1}^{\R} W_{\sfr \sfc} \, \E\{ \goutpr( P_{\sfr}(t), \, \varphi(Z_{\sfr}, \bar{\veps}), \sfr \,  ; \, t ) \} \big]^2}, \label{eq:tq_t1_def} \\
   &  \mu_{\sfc}^q(t+1) = \frac{ \sum_{\sfr =1}^{\R} W_{\sfr \sfc} \, \E\{ \partial_z \gout( P_{\sfr}(t), \, \varphi(Z_{\sfr}, \bar{\veps}), \sfr \,  ; \, t ) \}}{- \sum_{\sfr =1}^{\R} W_{\sfr \sfc} \, \E\{ \goutpr( P_{\sfr}(t), \, \varphi(Z_{\sfr}, \bar{\veps}), \sfr \,  ; \, t ) \}  }, \label{eq:muq_t1_def} \\
  & [\Lambda_\sfr(t+1)]_{11} =   \frac{1}{\delin} \sum_{\sfc =1}^{\C} W_{\sfr\sfc} 
  \E\{ \gin(Q_{\sfc}(t+1),  \sfc \, ;  \, (t+1))^2   \}, \,  
   \label{eq:Lambda11} \\
 & [\Lambda_\sfr(t+1)]_{12} =  [\Lambda_\sfr(t+1)]_{21} = \frac{1}{\delin} \sum_{\sfc =1}^{\C} W_{\sfr\sfc} \E\{ X \, \gin(Q_{\sfc}(t+1), \sfc \, ;  \, (t+1))\}, \, \label{eq:Lambda12} \\
    & [\Lambda_\sfr(t+1)]_{22} = \E\{ Z_{\sfr}^2 \} = \frac{\E\{ X^2 \}}{\delin} \sum_{\sfc =1}^{\C} W_{\sfr\sfc} \, . \label{eq:Lambda22}
\end{align}

To initialize the state evolution, we make the following assumption on the SC-GAMP initialization $\bx(0) \in \reals^n$. 

\textbf{(A0)} Denoting by $\bx(0)_\sfc \in \reals^{n/\C}$ the $\sfc$th block of $\bx(0) \in  \reals^n$, 
we assume that there exists a symmetric non-negative definite $\Xi_\sfc \in \reals^{2 \times 2}$ for each $\sfc \in [\C]$ such that  we almost surely have
\begin{equation}
\lim_{n \to \infty}  \,    \frac{1}{n/\C} 
\begin{bmatrix}
\< \bx(0)_\sfc , \bx(0)_\sfc \> & \< \bx_\sfc , \bx(0)_\sfc \> \\
\< \bx_\sfc , \bx(0)_\sfc \> & \< \bx_\sfc , \bx_\sfc \> 
\end{bmatrix} = \Xi_\sfc.
\label{eq:init_limits}
\end{equation}
From the assumptions on signal in Section \ref{subsec:model_assump}, we have that $[\Xi_\sfc]_{22} = \E\{ X^2\}$.

The state evolution iteration is initialized with 
$\Lambda_{\sfr}(0)$ for $\sfr \in [\R]$, whose entries are given by:
\begin{align}
& [\Lambda_{\sfr}(0)]_{11}=  \frac{1}{\delin} \sum_{\sfc =1}^{\C} W_{\sfr\sfc} \left[ \Xi_\sfc \right]_{11}, \nonumber \\ 
& [\Lambda_{\sfr}(0) ]_{22} = \frac{1}{\delin} \sum_{\sfc =1}^{\C} W_{\sfr\sfc} \left[ \Xi_\sfc \right]_{22}= \E\{ Z_{\sfr}^2 \}, \nonumber \\
& [\Lambda_{\sfr}(0)]_{12}= [\Lambda_{\sfr}(0)]_{21}
= \frac{1}{\delin} \sum_{\sfc =1}^{\C} W_{\sfr\sfc} \left[ \Xi_\sfc \right]_{12}.
\label{eq:Lambda0_def}
\end{align}

The entries of the coefficient vectors $\bttp(t+1) \in \reals^m$ and $\bttq(t+1) \in \reals^{n}$ in \eqref{eq:SC-GAMP1}-\eqref{eq:SC-GAMP2} are defined block-wise as follows. Starting with $\alpha^p_{\sfr}(0)=0$ for all $\sfr$, for $t \ge 0$, we  recursively compute the following for $\sfc \in [\C ]$ and $\sfr \in [\R]$:
\begin{align}
    & \alpha^q_{\sfc}(t+1)   = -\left( \sum_{\sfr=1}^\R W_{\sfr\sfc} \E\{ \goutpr(P_{\sfr}(t), Y_\sfr, \sfr \, ; \, t) \} \right)^{-1}, 
     \label{eq:alpha_qtq_def} \\
     & \alpha^p_{\sfr}(t+1)  =   \frac{1}{\delin}\sum_{\sfc=1}^\C 
       W_{\sfr\sfc} \, \alpha^q_\sfc(t+1) \,
       \E\{ \ginpr\left( Q_\sfc(t+1), \sfc \, ; \, t+1 \right)\}. \label{eq:alpha_pt_def}
\end{align} 
We then set 
\begin{align}
   &\alpha^p_{i}(t+1) = \alpha^p_{\sfr}(t+1), \ \  \text{ for } i \in \mc{I}_{\sfr},\nonumber\\ 
   &\alpha^q_{j}(t+1) = \alpha^q_{\sfc}(t+1), \ \ \text{ for } j \in \mc{J}_{\sfc}.
\end{align}

We note that for $(P_\sfr(t), \, Z_{\sfr} ) \sim \normal(0, \Lambda_{\sfr}(t))$, by standard properties of Gaussian random vectors we have:
\begin{equation}
    (P_\sfr(t), \,  Z_{\sfr} )  \, \disteq \,
    (  P_{\sfr}(t), \, \mu_{\sfr}^p(t) P_{\sfr}(t)  +  G^p_\sfr(t) ),
    \label{eq:PZ_dist_alt}
\end{equation}
where  $G^p_\sfr(t) \sim \normal(0, \tp_{\sfr}(t))$ is independent of $P_{\sfr}(t)$ with
\begin{align}
    \mu_{\sfr}^p(t) = \frac{[\Lambda_{\sfr}(t)]_{12}}{[\Lambda_{\sfr}(t)]_{11}}, \qquad
    \tp_{\sfr}(t) = [\Lambda_{\sfr}(t)]_{22} \, - \,  \frac{[\Lambda_{\sfr}(t)]_{12}^2}{[\Lambda_{\sfr}(t)]_{11}}.
    \label{eq:mup_t1_def}
\end{align}

\paragraph{Performance Characterization of SC-GAMP.}
 For the state evolution result  (Theorem \ref{thm:gen_sc_gamp_SE} below), we make the following assumption on the functions $\gin, \gout$ in \eqref{eq:gingout_def} used to define SC-GAMP.

\textbf{(A1)}  For $t \ge 0$, the function $\gin( \cdot, \, \sfc \, ; t+1 )$ is Lipschitz on $\reals$ for $\sfc \in [\C]$, and the function $\tilde{g}_{\sfr, t}: (p, z, \veps) \mapsto \gout(p, \, \varphi(z, \veps), \, \sfr; \, t)$ is Lipschitz on $\reals^{3}$ for $\sfr \in [\R]$.

 The state evolution result is stated in terms of \emph{pseudo-Lipschitz} test functions. A function $\Psi:\reals^d\rightarrow\reals$ is called pseudo-Lipschitz of order $k$ if $|\Psi(x)-\Psi(y)|\leq C (1+\|x\|_2^{k-1}+\|y\|_2^{k-1})\|x-y\|_2$ for all $x,y\in\reals^d$. Pseudo-Lipschitz functions of order $k=2$ include $\Psi(u)=u^2$ and $\Psi(u,v)=uv$.

\begin{theorem} Consider a GLM  with a spatially coupled sensing matrix defined via a base matrix $\bW$ \eqref{eq:Wrc_assumptions},
and the SC-GAMP algorithm in \eqref{eq:SC-GAMP1}-\eqref{eq:SC-GAMP2}. Assume that the model assumptions
in Section \ref{subsec:model_assump} and the SC-GAMP assumptions in \textbf{(A0)},\textbf{(A1)} hold. 
Let $\Psi: \reals^2 \to \reals$ and $\Phi: \reals^3 \to \reals$ be any pseudo-Lipschitz functions of order $k$, where $k$ is specified in Section \ref{subsec:model_assump}. Then, for all $t \ge 1$ and each $\sfr \in [\Lr]$ and $\sfc \in [\Lc]$, we almost surely have:
\begin{align}
 &  \lim_{n \to \infty} \,  \frac{1}{n/\C} \sum_{j \in  \mc{J}_\sfc} \Psi(q_j(t) \, , x_j)  = \E\{ \Psi( Q_{\sfc}(t) \, , X) \},     \label{eq:Qt_result} \\
 &  \lim_{m \to \infty} \,  \frac{1}{m/\R} \sum_{i \in \mc{I}_\sfr } \Phi(p_i(t) \, , z_i \, , \veps_i)  = \E\{ \Phi( P_{\sfr}(t) \, , Z_{\sfr}, \, \bar{\veps} ) \}.
      \label{eq:Pt_result}
\end{align}
\label{thm:gen_sc_gamp_SE}
The laws of the random variables in \eqref{eq:Qt_result} and \eqref{eq:Pt_result} are given by \eqref{eq:PtQt_def}.
\end{theorem}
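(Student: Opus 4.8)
The plan is to prove Theorem~\ref{thm:gen_sc_gamp_SE} by reducing SC-GAMP to a known abstract AMP recursion whose state evolution is already established, and then translating the conclusion back. SC-GAMP in \eqref{eq:SC-GAMP1}--\eqref{eq:SC-GAMP2} is a two-sided iteration that alternates between the signal space $\reals^n$ (the iterate $\bq(t)$) and the observation space $\reals^m$ (the iterate $\bp(t)$), with block-dependent denoisers $\gin,\gout$ and block-dependent Onsager coefficients $\bttq,\bttp$. The first step is a change of variables that absorbs these features into a single abstract AMP iteration. Concretely, I would (i) handle the asymmetry of the rectangular matrix $\bA$ by working with the bipartite / symmetrized iteration, and (ii) collect the per-block iterates into matrix-valued variables indexed by the row- and column-block labels $(\sfr,\sfc)$, so that the block-separable structure of $\gin,\gout$ in \eqref{eq:gingout_def} becomes a denoiser acting coordinate-wise on these matrix iterates. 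The block-dependent variance profile of $\bA$ in \eqref{eq:construct_sc_A}, encoded by the base matrix $\bW$, is exactly the kind of non-constant variance profile that the abstract framework of \cite{donoho2013information} accommodates.

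Once SC-GAMP is written in this abstract form, I would invoke the general state evolution theorem of \cite{donoho2013information} for AMP iterations with matrix-valued iterates and a variance-profiled Gaussian matrix. This requires verifying its hypotheses: the denoisers are block-separable and Lipschitz, which is guaranteed by assumption \textbf{(A1)}; the empirical distributions of $\bx$ and $\bveps$ converge in $k$-Wasserstein distance, which holds by the model assumptions of Section~\ref{subsec:model_assump}; and the initial iterate has the prescribed second-moment limits, which is assumption \textbf{(A0)} via \eqref{eq:init_limits} and yields $\Lambda_\sfr(0)$ as in \eqref{eq:Lambda0_def}. The conclusion of the abstract theorem is that, for any pseudo-Lipschitz test function of order $k$, block-wise empirical averages converge almost surely to expectations under Gaussian laws whose covariances evolve according to an abstract matrix recursion.

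The final step is to translate this abstract recursion back to the concrete state evolution \eqref{eq:tq_t1_def}--\eqref{eq:Lambda22} in the original variables. Here I would identify the abstract covariance parameters with the scalars $\mu^q_\sfc(t),\tq_\sfc(t)$ and the $2\times 2$ matrices $\Lambda_\sfr(t)$, and check two kinds of correspondences. First, the variance and mean updates must reproduce \eqref{eq:tq_t1_def}--\eqref{eq:Lambda22}; this amounts to computing second moments of $\gin(Q_\sfc,\sfc)$ and $\gout(P_\sfr,\cdot,\sfr)$ weighted by $W_{\sfr\sfc}$, using the representation \eqref{eq:PZ_dist_alt}. Second, the Onsager correction terms produced by the abstract framework (divergences of the denoisers) must match the coefficients $\bttq,\bttp$ in \eqref{eq:alpha_qtq_def}--\eqref{eq:alpha_pt_def}, i.e. the quantities $\E\{\goutpr\}$, $\E\{\partial_z\gout\}$, and $\E\{\ginpr\}$. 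Gaussian integration by parts (Stein's identity) is the tool that converts the raw divergence terms into these derivative expectations. The convergence statements \eqref{eq:Qt_result}--\eqref{eq:Pt_result} then follow directly for the test functions $\Psi,\Phi$.

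The main obstacle I anticipate is the bookkeeping of the change of variables: correctly symmetrizing the two-sided GAMP, verifying that the Onsager coefficients $\bttq,\bttp$ arising in \eqref{eq:SC-GAMP1}--\eqref{eq:SC-GAMP2} coincide with the memory/divergence terms dictated by the abstract AMP, and confirming that the block-separable Lipschitz structure is preserved under the reshaping into matrix-valued iterates. A secondary difficulty is ensuring that the variance profile $\bW$ (with its non-square, band-diagonal block structure and the normalization \eqref{eq:Wrc_assumptions}) satisfies the regularity conditions of the cited theorem uniformly as $m,n\to\infty$ with the block sizes $m/\R,\, n/\C$ growing proportionally, so that the per-block limits hold simultaneously for all $\sfr\in[\R]$ and $\sfc\in[\C]$.
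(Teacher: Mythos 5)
Your proposal follows essentially the same route as the paper: a change of variables mapping SC-GAMP to an abstract AMP with matrix-valued iterates and block-separable denoisers carrying the signal and $\bz=\bA\bx$ as side information, an appeal to an existing state evolution result for that abstract iteration (the paper invokes \cite{javanmard2013state} via symmetrization rather than \cite{donoho2013information}, but this is immaterial), and an inductive identification of the abstract covariance recursion and Onsager/Jacobian terms with \eqref{eq:tq_t1_def}--\eqref{eq:Lambda22} and \eqref{eq:alpha_qtq_def}--\eqref{eq:alpha_pt_def}. The only mechanical difference is that the paper normalizes $\bA$ to an i.i.d.\ matrix $\bM$ and absorbs the variance profile into the denoisers through $\sqrt{W_{\sfr\sfc}}$ factors, rather than invoking an abstract theorem for variance-profiled matrices directly; both instantiations are valid.
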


The proof of the theorem is given in Section \ref{sec:proof_gen_sc_gamp_SE}. The main idea is to show, via a suitable change of variables, that the spatially coupled GAMP in \eqref{eq:SC-GAMP1}-\eqref{eq:SC-GAMP2} is an instance of an abstract AMP iteration for i.i.d.~Gaussian matrices with \emph{matrix-valued} iterates. A state evolution characterization for the abstract AMP iteration can be obtained using the result in  \cite{javanmard2013state}. This result is translated via the change of variables to establish the state evolution  characterization in Theorem \ref{thm:gen_sc_gamp_SE}.

Theorem \ref{thm:gen_sc_gamp_SE} allows us to compute the performance measures such as asymptotic MSE of SC-GAMP. Indeed, in \eqref{eq:Qt_result} taking $\Psi(q,x) = (x- \gin(q, \sfc ; t) )^2$ for $\sfc \in [\C]$,  we obtain:
\begin{equation}
\begin{split}
   &  \lim_{n \to \infty} \,  \frac{\| \bx - \bargin(\bq(t))\|^2}{n}  =  \frac{1}{\C} \sum_{\sfc=1}^{\C} \, \E\{ (X - \gin(Q_{\sfc}(t), \sfc ; t ) )^2\} \ \text{ a.s.}
    \label{eq:SC-GAMP-MSE}
    \end{split}
\end{equation}

\section{MMSE and the performance  of Bayes SC-GAMP} \label{sec:mmse-fp-bayes}

The limiting value of $\MMSE_n$ in  \eqref{eq:MMSE_def} for an i.i.d.~Gaussian sensing matrix is characterized in terms of a \emph{potential function} \cite{barbier2019optimal}.
 \begin{defi}[Potential function]\label{def:pot}
For $x \in [0, \var(X)]$, let 
\begin{equation}
    \ups(x;  \delta) :=1 - \frac{\delta}{x} \, \E_{P,Y}  \left\{\var\left(Z\middle|P, Y; \, \frac{x}{\delta} \right)\right\}, 
    \label{eq:ups_x}
\end{equation}
where 
$Y=\varphi(Z, \, \bar{\veps})$ with $\bar{\veps}\sim P_{\bar{\veps}} \perp Z$, and $\var( Z | P, Y; \, \frac{x}{\delta})$ denotes the conditional variance computed with  $(P, Z) \sim \normal(0, \Lambda)$ where
\begin{equation}
   \Lambda = \frac{1}{\delta}\begin{bmatrix}
   \E\{X^2 \} -x  &  \E\{X^2\} -x \\
   \E\{X^2\} -x & \E\{X^2\}
   \end{bmatrix}.
   \label{eq:ups_x_def}
\end{equation}
 Then the scalar potential function $\pot(x; \delta)$
  is defined as
 \begin{equation} \label{eq:pot}
 \begin{split}
    &  U(x;  \delta) :=  - {\delta} \ups(x;  \delta)   \, +  \, \int_0^x \frac{\delta}{z} \ups(z;  \delta) \,  dz  \,  +  \, 2I\left(X \, ;  \sqrt{( \delta/x) \ups(x; \delta)} \,  X + G_0 \right),
\end{split}
 \end{equation} 
 for $x \in [0, \var(X)]$, where the mutual information is computed with $X \sim P_X \, \perp G_0 \sim \normal(0,1)$.
 \end{defi}

Potential functions are widely used in statistical physics to characterize the limiting free energy (or equivalently, the  mutual information) in high-dimensional estimation problems; see, e.g.,  \cite{zdeborova2016statistical}.  We will not use the statistical physics interpretation here, and use the potential function only to characterize the fixed points of the state evolution recursion, both with and without spatial coupling (see Theorem \ref{thm:SC_FPs}). Moreover, the following  result from \cite{barbier2019optimal} shows that the limiting MMSE with an i.i.d.~Gaussian sensing matrix is given by the minimum of the potential function.

 \begin{theorem}[MMSE for i.i.d.~Gaussian design \cite{barbier2019optimal}] Consider a GLM with sensing matrix $\Aiid$ whose entries are i.i.d.~$\sim \normal(0, \frac{1}{m})$. Suppose that the components of $\bx$ and $\bveps$ are i.i.d.~according to $P_X$ and $P_{\bar{\veps}}$, respectively, and $\frac{m}{n} \to \delta$ as $n \to \infty$. Furthermore, assume that one of the conditions \eqref{eq:sign_anti_sym} or \eqref{eq:exp_ne_0} holds, and that $U(x;  \delta)$ has  a unique minimum. Then,
 \begin{equation}
     \lim_{n \to \infty}  \frac{1}{n} \E\{ \| \bx - \E\{\bx \mid \Aiid, \, \by \} \|^2\} = \argmin_{ x \in [0, \var(X)]} \, U(x; \delta).
 \end{equation}
 \label{thm:iidMMSE}
 \end{theorem}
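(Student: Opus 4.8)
The statement is the rigorous single-letter characterization of the GLM minimum mean-squared error established by Barbier et al.\ \cite{barbier2019optimal}, so the plan is to deduce it from their main theorem after reconciling the two potential functions. First I would verify that $U(x;\delta)$ in Definition \ref{def:pot} agrees, up to the admissible reparametrization of the scalar order parameter, with the replica-symmetric potential of \cite{barbier2019optimal}; this identification is the content of Appendix \ref{app:pot_fun_equiv}. Once the two functionals are matched and their sets of minimizers shown to coincide, the limiting value of \eqref{eq:MMSE_def} equals $\argmin_{x} U(x;\delta)$ under the stated uniqueness hypothesis.

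For the underlying argument I would proceed via the adaptive interpolation method. The first step is to pass from the MMSE in \eqref{eq:MMSE_def} to the normalized mutual information $\frac{1}{n} I(\bx;\by \mid \Aiid)$ using the I-MMSE relation of Guo, Shamai and Verd\'u: the MMSE is a derivative of the mutual information with respect to an added Gaussian-channel signal-to-noise parameter, so a formula for the mutual information yields one for the MMSE. The second step establishes the replica-symmetric formula for this mutual information by constructing an interpolating inference problem indexed by $t\in[0,1]$, which at $t=0$ is the original GLM and at $t=1$ decouples into two scalar channels: a scalar Gaussian channel $\sqrt{(\delta/x)\ups(x;\delta)}\,X + G_0$ for the signal (producing the $2I(X\,;\cdot)$ term of \eqref{eq:pot}) and the scalar output channel $Y=\varphi(Z,\bar{\veps})$ for the measurements (producing the $\ups$ and integral terms).

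The core of the interpolation is to differentiate the interpolated free energy in $t$ and show, using the Nishimori identity together with concentration of the overlap $\frac{1}{n}\langle \bx, \bx' \rangle$ around its mean, that this derivative equals a sign-definite quadratic in the overlap mismatch up to a vanishing error. Integrating over $t$ then yields matched upper and lower bounds, and the ``adaptive'' choice of the perturbation trajectory is what forces these to meet, delivering a variational formula that expresses $\lim_{n\to\infty}\frac{1}{n} I(\bx;\by\mid\Aiid)$ as an extremum of $U(x;\delta)$. Applying the I-MMSE step at the optimizer then shows that the minimizer $x^\star$ over $[0,\var(X)]$ is itself the asymptotic per-coordinate MMSE, which is exactly why the statement equates the limit with $\argmin_{x} U(x;\delta)$.

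The main obstacle is the overlap concentration: proving that the self-overlap and the signal-estimate overlap concentrate on their expectations, uniformly enough to control the $t$-derivative, requires the small random-perturbation (side-information) argument and a careful bound on the fluctuations. A secondary difficulty, specific to GLMs rather than linear models, is handling the nonlinear output channel $\varphi$, whose effective scalar model must be shown to decouple correctly; this is precisely where the conditional-variance term $\var(Z \mid P, Y;\, x/\delta)$ in \eqref{eq:ups_x} enters, and care is needed so that the two coupled scalar fixed-point equations close consistently.
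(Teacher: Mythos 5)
Your proposal matches the paper's route: Theorem \ref{thm:iidMMSE} is obtained by citing the replica-symmetric MMSE formula of \cite{barbier2019optimal} (stated as Theorem \ref{thm:iidMMSE_alt} in Appendix \ref{app:pot_fun_equiv}) and then proving Proposition \ref{prop:pot_fn_equiv}, which shows via the change of variables $q=\rho-x$, $r=(\delta/x)\ups(x;\delta)$ that $U(x;\delta)=-2\frs(q(x),r(x);\delta)+C_\delta$, so the unique minimizer of $U$ corresponds to the unique maximizer of $\frs$ over $\Gamma$. Your additional sketch of the adaptive-interpolation/I-MMSE/overlap-concentration argument is an accurate outline of how the cited theorem is proved in \cite{barbier2019optimal}, but the paper does not reproduce that argument and it is not needed here.
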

   The assumption that one of the conditions \eqref{eq:sign_anti_sym} or \eqref{eq:exp_ne_0} holds is to avoid issues of sign-symmetry in the model, as discussed in Section \ref{subsec:model_assump}. Another way to avoid the sign-symmetry issue without further assumptions  is by stating the result  in terms of the MMSE of estimating the rank-one matrix $\bx \bx^{\sT}$; see \cite[Theorem 2]{barbier2019optimal}. The result in \cite{barbier2019optimal} is expressed in terms of a maximization over a different potential function, which is equivalent to $U(x; \delta)$ up to a change of sign and additive constants. The alternative potential function formulation  is described in Appendix \ref{app:pot_fun_equiv}, where we also show its equivalence to the one in Definition \ref{def:pot}.
We mention that Theorem \ref{thm:iidMMSE} requires a few additional technical conditions \cite[Sec. 4]{barbier2019optimal}, which are listed in Appendix \ref{app:pot_fun_equiv}.

The significance of Theorem \ref{thm:iidMMSE} is illustrated in Figure \ref{fig:pot-pr}, which plots the potential function $\pot(x;\delta)$ for noiseless phase retrieval as a function of  $x\in[0,\var(X)]$, for several sampling ratios $\delta$.   From Theorem \ref{thm:iidMMSE}, for each $\delta$, the global minimizer of the potential function gives the asymptotic MMSE  with an i.i.d.~Gaussian design. For $\delta=0.2$, the global minimizer is close to 1, which implies it is impossible to meaningfully estimate the signal $\bx$. For $\delta=0.27$, there are two global minimizers of the potential function. For $\delta>0.27$, we observe that the global minimizer lies at 0, so the Bayes-optimal estimator asymptotically achieves perfect reconstruction for these sampling ratios. Details of how the potential function and its minimizers are computed can be found in Appendix \ref{app:pot}.

\begin{figure}[t]
		\centering
		\includegraphics[width=0.6\textwidth]{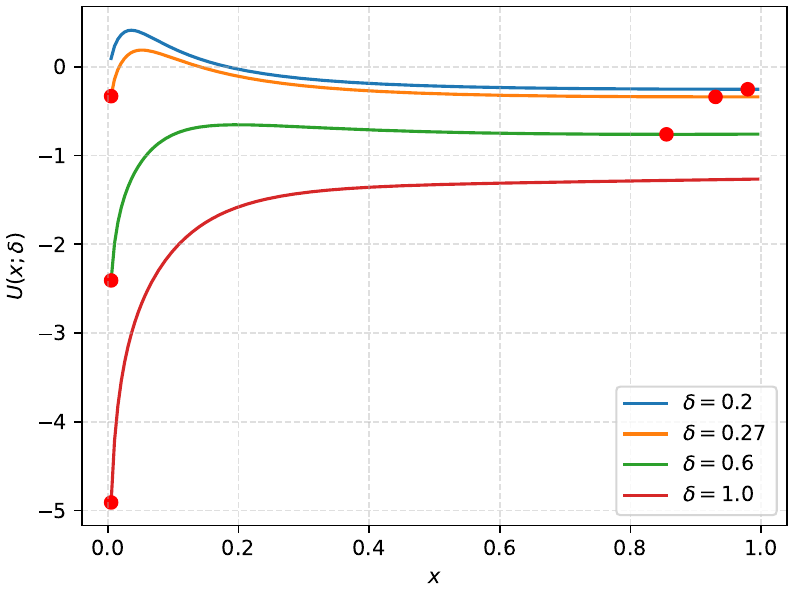}
		\caption{Potential function $\pot(x;\delta)$ for noiseless phase retrieval, for different values of sampling ratio $\delta$ and $x\in[0,\var(X)]$. The signal prior is $P_X(a)=1- P_X(-a)=0.6$, with $a$ chosen so that $\var(X)=1$. The red dots represent the largest stationary point and global minimizer of each function.}
		\label{fig:pot-pr}
  \vspace{-7pt}
\end{figure}

\subsection{Bayes  SC-GAMP}
Returning to the spatially coupled setting,  consider SC-GAMP with the following choice of functions:
\begin{align}
    & \ginbayes(q, \sfc \, ; \, t) = \E\{ X \mid Q_{\sfc}(t) =q \}, \ \text{ for } \sfc \in [\C], \label{eq:ginbayes}\\
    &  \goutbayes(p, y, \sfr \,  ; \, t) = \frac{1}{\tp_{\sfr}(t)}\left( \E\{ Z_{\sfr} \mid P_{\sfr}(t) =p, \, Y_{\sfr}=y \} \, - \, p \right), \  \text{ for } \sfr \in [\R]. \label{eq:goutbayes}
\end{align}
For the uncoupled case, i.e., $\R = \C =1$, these choices maximize the correlations $\frac{(\E\{P(t)Z\})^2} {\E\{ P(t)^2 \} \E\{ Z^2 \}}$ and $\frac{(\E\{ X Q(t) \})^2}{\E\{X^2\} \E\{ Q(t)^2 \}}$ for each $t \ge 1$  \cite[Sec. 4.2]{fengAMP22}. We therefore refer to the algorithm with this choice of functions as Bayes SC-GAMP. We remark that the choices in \eqref{eq:ginbayes}-\eqref{eq:goutbayes} do not necessarily optimize the correlations for the coupled case, i.e., $\frac{(\E\{P_\sfr(t)Z_\sfr\})^2} {\E\{ P_\sfr(t)^2 \} \E\{ Z_\sfr^2 \}}$ and $\frac{(\E\{ X Q_\sfc(t) \})^2}{\E\{X^2\} \E\{ Q_\sfc(t)^2 \}}$ for $\sfr \in [\R]$ and $\sfc \in [\C]$.

Our second result shows that with an $(\omega, \Lambda)$ base matrix (see Definition \ref{def:ome_lamb_rho}), the MSE of SC-GAMP after a large number of iterations is bounded in terms  of the minimizer of  $U(x ; \delin)$, where $\delin = \delta \frac{\Lambda}{\Lambda+\omega -1}$. Moreover, with an i.i.d.~Gaussian sensing matrix, the MSE of GAMP is characterized by the largest stationary point of  
$U(x ; \delta)$ in $x \in [0, \var(X)]$. 
\begin{theorem}
    Consider a GLM with the assumptions in Section \ref{subsec:model_assump}, and estimation via Bayes SC-GAMP initialized with 
    $\bx(0) = \E\{ X\} \boldsymbol{1}_n$. Also assume that  \textbf{(A1)} is satisfied by  the functions $\ginbayes$ and $\goutbayes$, and recall that the vector-valued functions $\bargin^*, \bargout^*$ are defined according to \eqref{eq:gingout_def}. Then:

    1) With a spatially coupled sensing matrix defined via an $(\om, \lam)$ base matrix, for any $\gamma>0$ there exist $\omega_0<\infty$ and $t_0 < \infty$ such that for all $\omega>\omega_0$ and $t > t_0$, the asymptotic MSE of Bayes SC-GAMP almost surely  satisfies:
    \begin{align}
    & \lim_{n\to\infty}\frac{1}{n}\| \bx \, - \, \bargin^*(\bq(t); \, t ) \|^2 
  \leq  \left( \max\left(\argmin_{x\in [0, \var(X)]} U(x; \delin)\right)+\gamma\right) \frac{\Lambda+\omega}{\Lambda}.
    \label{eq:omLam_MSE}
\end{align}
2) With an i.i.d.~Gaussian sensing matrix (i.e., $1 \times 1$ base matrix with $W_{11}=1$), the asymptotic MSE of Bayes GAMP converges almost surely as:
\begin{equation}
\begin{split}
    & \lim_{t\to\infty}\lim_{n\to\infty}\frac{1}{n}\| \bx \, - \, \bargin^*(\bq(t); \, t )\|^2    = \max\left\{ x \in [0, \var(X) \}] \, : \, \frac{\partial U(x; \delta)}{\partial x}  =0 \right\}.
\end{split} \label{eq:SC_FPs_iid}
\end{equation}
\label{thm:SC_FPs}
\end{theorem}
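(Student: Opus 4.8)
The plan is to specialize the state-evolution recursion \eqref{eq:tq_t1_def}--\eqref{eq:Lambda22} to the Bayes-optimal choices \eqref{eq:ginbayes}--\eqref{eq:goutbayes}, collapse it to a single scalar recursion indexed by the per-block input MSE, and then analyze the fixed points through the potential function $U$. Part 2 will follow from a monotonicity argument for the resulting uncoupled scalar recursion, and Part 1 from the threshold-saturation machinery of \cite{yedla2014asimple}, with $U(\cdot;\delin)$ playing the role of the single-system potential.

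First I would record the Bayes-optimal (Nishimori) identities that make the recursion collapse. Because $\ginbayes(\cdot,\sfc;t)=\E\{X\mid Q_\sfc(t)\}$, the tower property gives $\E\{\ginbayes^2\}=\E\{X\,\ginbayes\}$, so \eqref{eq:Lambda11} and \eqref{eq:Lambda12} coincide, forcing $\mu^p_\sfr(t)=1$ for all $\sfr,t$; hence $\tp_\sfr(t)=[\Lambda_\sfr(t)]_{22}-[\Lambda_\sfr(t)]_{12}=\frac{1}{\delin}\sum_\sfc W_{\sfr\sfc}E_\sfc(t)$, where $E_\sfc(t):=\E\{(X-\ginbayes(Q_\sfc(t),\sfc;t))^2\}$ is the input MSE of column block $\sfc$. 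On the output side, Gaussian integration by parts applied to $\goutbayes$ yields the matched identities $\E\{(\goutbayes)^2\}=\E\{\partial_z\goutbayes\}=-\E\{\partial_p\goutbayes\}=:\chi_\sfr(t)$, so from \eqref{eq:muq_t1_def}--\eqref{eq:tq_t1_def} we get $\mu^q_\sfc(t+1)=1$ and input SNR $1/\tq_\sfc(t+1)=\sum_\sfr W_{\sfr\sfc}\,\chi_\sfr(t)$. Finally a law-of-total-variance computation gives $\E\{(\goutbayes)^2\}=\frac{1}{\tp_\sfr}\big(1-\E\{\var(Z_\sfr\mid P_\sfr,Y_\sfr)\}/\tp_\sfr\big)$, expressing $\chi_\sfr(t)$ through the conditional variance in \eqref{eq:ups_x}. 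Combining these, the state evolution reduces to the closed coupled recursion
\[
E_\sfc(t+1)=\mmse\Big(\sum_\sfr W_{\sfr\sfc}\,\chi_\sfr(t)\Big),\quad \chi_\sfr(t)=\frac{1}{\tp_\sfr(t)}\Big(1-\frac{\E\{\var(Z_\sfr\mid P_\sfr,Y_\sfr)\}}{\tp_\sfr(t)}\Big),\quad \tp_\sfr(t)=\frac{1}{\delin}\sum_\sfc W_{\sfr\sfc}E_\sfc(t),
\]
where $\mmse(s)=\E\{(X-\E\{X\mid\sqrt{s}\,X+G_0\})^2\}$. I would then verify the link to the potential: writing $s(x)=\frac{\delta}{x}\ups(x;\delta)$ for the single-block map and using the I-MMSE relation, a short calculation gives $\partial_x U(x;\delta)=s'(x)\big(\mmse(s(x))-x\big)$, so away from critical points of $s$ the stationary points of $U(\cdot;\delta)$ are exactly the fixed points of $x\mapsto\mmse(s(x))$.

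For Part 2 ($\R=\C=1$) the recursion is the scalar map $E(t+1)=\mmse(s(E(t)))$. The initialization $\bx(0)=\E\{X\}\mathbf{1}_n$ corresponds, via \eqref{eq:Lambda0_def}, to $[\Xi]_{11}=[\Xi]_{12}=(\E X)^2$ and $[\Xi]_{22}=\E X^2$, hence to $E(0)=\var(X)$, the top of the interval. Since $\mmse(\cdot)$ is nonincreasing and (as one verifies) $s(\cdot)$ is nonincreasing, the composition is nondecreasing, and because $\mmse(s(\var(X)))\le\var(X)$ the sequence $E(t)$ is monotonically nonincreasing and converges to the largest fixed point in $[0,\var(X)]$. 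By the identity above this is the largest stationary point of $U(\cdot;\delta)$, giving \eqref{eq:SC_FPs_iid} once Theorem \ref{thm:gen_sc_gamp_SE} and \eqref{eq:SC-GAMP-MSE} are invoked to pass from $E(t)$ to the empirical MSE.

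For Part 1 I would recognize the coupled recursion as an admissible spatially coupled system in the sense of \cite{yedla2014asimple}: the band coupling $W_{\sfr\sfc}=\frac1\omega$ of Definition \ref{def:ome_lamb_rho} is the uniform window of width $\omega$, the single-system update is $x\mapsto\mmse(s(x))$ at inner rate $\delin$, and its potential is precisely $U(\cdot;\delin)$. After checking the monotonicity and continuity (``degradation'') hypotheses Yedla's theorem requires, threshold saturation implies that for $\omega$ large the coupled recursion, initialized at the top and seeded by the boundary, converges as $t\to\infty$ to a fixed point whose per-block values are at most $\max(\argmin_x U(x;\delin))+\gamma'$. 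Averaging the per-block MSEs via \eqref{eq:SC-GAMP-MSE}, and accounting for the conversion between the coupled-system variables (indexed over the $\Lr=\Lambda+\omega-1$ row blocks) and the column-block average (over $\Lc=\Lambda$ blocks), which contributes the factor $\frac{\Lr+1}{\Lc}=\frac{\Lambda+\omega}{\Lambda}$, yields \eqref{eq:omLam_MSE}. The main obstacle is exactly this last step: casting the reduced recursion into the precise normal form demanded by \cite{yedla2014asimple}, verifying its admissibility/monotonicity conditions (in particular that $s(\cdot)$ is monotone and the maps are sufficiently smooth), and controlling the boundary blocks so that the averaged bound carries the clean overhead factor. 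Establishing the monotonicity of $s(\cdot)$ and the matched-identity computations for $\goutbayes$ in the coupled geometry is where most of the care will be needed.
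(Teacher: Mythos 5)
Your overall strategy is the same as the paper's: use the Nishimori identities for $\ginbayes,\goutbayes$ to collapse state evolution to a scalar coupled recursion in the per-block MSE, link its fixed points to $U$ via I-MMSE, prove Part~2 by monotone convergence of the uncoupled scalar map from the initialization $\var(X)$, and prove Part~1 by invoking the threshold-saturation result of Yedla et al. Your Part~2 argument and your matched identities for $\goutbayes$ essentially reproduce the paper's Corollary~\ref{cor:BayesSE} and its Part~2 proof (one caveat: the monotonicity of $x\mapsto \frac{\delta}{x}\ups(x;\delta)$ is not something "one verifies" in a line --- the paper has to rewrite $g(x)$ as $-\frac{\delin}{x}\E\{\E\{G\mid P,Y\}^2\}$ and invoke \cite[Proposition 18]{barbier2019optimal} to get $g'(x)>0$).

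The genuine gap is in Part~1, exactly where you flag "controlling the boundary blocks." The coupled SE recursion \eqref{eq:GAMP-cr1}--\eqref{eq:GAMP-cr2} is \emph{not} an admissible instance of the recursion in \cite{yedla2014asimple}, because the output map depends on $\E\{Z_\sfr^2\}=\frac{\E\{X^2\}}{\delin}\sum_\sfc W_{\sfr\sfc}$, which is strictly smaller than $\E\{X^2\}/\delin$ on the $2(\omega-1)$ boundary row blocks of the $(\omega,\Lambda)$ base matrix; so the single function $g$ in \eqref{eq:coupled-rec1} is replaced by a block-dependent family, and Lemma~\ref{lem:Yedla1} does not apply directly. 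The paper resolves this by introducing a modified recursion $\{x_\sfr(t)\}$ in which $\E\{Z_\sfr^2\}$ is uniformly replaced by $\E\{X^2\}/\delin$ (this one \emph{is} of Yedla's form), and then proving by induction that $\bar{x}_\sfr(t)\le x_\sfr(t)$ for all $t$; the induction step needs the comparison \eqref{eq:mmse_comp}, i.e., that $\E_{P,Y}\{\var(Z\mid P,Y)\}$ is increasing in $\var(Z)$, which in turn rests on an argument along the lines of \cite[Proposition 23]{barbier2019optimal}. Without this domination step your appeal to threshold saturation bounds the wrong recursion. Separately, your accounting for the overhead factor ("the conversion between row-block and column-block indexing contributes $\frac{\Lr+1}{\Lc}$") is not a derivation: the paper obtains $\frac{\Lambda+\omega}{\Lambda}$ by partitioning $\sum_{\sfc=1}^\Lambda \psi_\sfc(t)$ into $\lceil\Lambda/\omega\rceil$ windows of $\omega$ consecutive column blocks, bounding each window sum by $\omega\max_\sfr \bar{x}_\sfr(t)$ via \eqref{eq:barxr_exp}, and using $\lceil\Lambda/\omega\rceil\omega\le\Lambda+\omega$. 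Both of these pieces need to be supplied for the proof to close.
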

The proof of the theorem is given in Section \ref{subsec:proof_SC_FPs}.

If $U(x; \delin)$ has  multiple  minimizers in $x$ (e.g., the $\delta=0.27$ curve in  Figure \ref{fig:pot-pr}), 
then  according to \eqref{eq:omLam_MSE}, the asymptotic MSE of SC-GAMP will be upper bounded in terms of the largest of these  minimizers. If $U(x; \delin)$ has a unique minimizer, then we have the following corollary.

\begin{corr}[Bayes-optimality of SC-GAMP]
    Consider the setup of Theorem \ref{thm:SC_FPs}, part 1, and suppose that there exists some $\delta_0 > \delta$ such that: i) $U(x; \delin)$ has a unique minimizer in $x$ for $\delin \in [\delta, \delta_0]$, and ii) the minimizer in $x$ of $U(x; \delin)$ is continuous in  $\delta$ for $\delin \in [\delta, \delta_0]$. Then for any $\epsilon>0$, there exist finite $\omega_0, t_0 $ such that for all $\omega > \omega_0$, $t > t_0$, and $\Lambda$ sufficiently large, the asymptotic MSE of Bayes SC-GAMP almost surely satisfies:
    \begin{equation}
    \begin{split}
          \lim_{n\to\infty}\frac{1}{n}\| \bx \, - \, \bargin^*(\bq(t); \, t ) \|^2  
         & \le 
            \lim_{n \to \infty}  \frac{1}{n} \E\{ \| \bx - \E\{\bx \mid \Aiid, \, \by \} \|^2\}  + \epsilon \\
            & = \argmin_{ x \in [0, \var(X)]} \, U(x; \delta) + \epsilon.
    \end{split}
        \label{eq:SC_MMSE_comp}
    \end{equation}    
    Here $\Aiid$ is a sensing matrix whose entries are i.i.d.~$\sim \normal(0, \frac{1}{m})$.
    \label{corr:BayesOpt}
\end{corr}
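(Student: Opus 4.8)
The plan is to derive \eqref{eq:SC_MMSE_comp} by combining the upper bound of Theorem \ref{thm:SC_FPs}, part 1, with the MMSE characterization of Theorem \ref{thm:iidMMSE}, and then letting the coupling length $\Lambda$ grow with the coupling width $\omega$ held fixed. First I would fix $\gamma$ to be a small multiple of $\epsilon$ and invoke Theorem \ref{thm:SC_FPs}, part 1, to obtain finite $\omega_0, t_0$ such that for every $\omega > \omega_0$ and $t > t_0$ the asymptotic MSE of Bayes SC-GAMP is bounded by $\big(\max(\argmin_{x\in[0,\var(X)]} U(x;\delin)) + \gamma\big)\frac{\Lambda+\omega}{\Lambda}$, as in \eqref{eq:omLam_MSE}. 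Under the uniqueness assumption (i), the minimizing set $\argmin_{x\in[0,\var(X)]} U(x;\delin)$ is a singleton for all $\delin$ in the relevant neighborhood of $\delta$, so the $\max$ collapses to the unique minimizer, which I denote $x^*(\delin)$.

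Next I would pass to the limit in $\Lambda$ with $\omega$ held fixed. Since $\delin = \delta\frac{\Lambda}{\Lambda+\omega-1}$, we have $\delin \to \delta$ as $\Lambda \to \infty$, and also the rate-loss factor $\frac{\Lambda+\omega}{\Lambda} \to 1$. For $\Lambda$ large enough $\delin$ lies in the neighborhood of $\delta$ on which assumptions (i)--(ii) apply, so the continuity assumption (ii) gives $x^*(\delin) \to x^*(\delta)$. Hence, taking $\gamma$ a sufficiently small multiple of $\epsilon$ and $\Lambda$ large enough (with the threshold allowed to depend on $\omega$), the bounded quantity $\big(x^*(\delin) + \gamma\big)\frac{\Lambda+\omega}{\Lambda}$ can be made at most $x^*(\delta) + \epsilon$; here I use that $x^*(\delta) \in [0, \var(X)]$ is bounded, so the multiplicative factor contributes only a controllable error that shrinks with $\Lambda$. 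Finally, Theorem \ref{thm:iidMMSE} identifies $x^*(\delta) = \argmin_{x\in[0,\var(X)]} U(x;\delta)$ with the limiting MMSE of the i.i.d.\ Gaussian design, which yields both the inequality and the equality asserted in \eqref{eq:SC_MMSE_comp}.

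The step I expect to require the most care is the interplay of the three asymptotic parameters $\omega, \Lambda, n$ and the order of quantifiers. The threshold $\omega_0$ produced by Theorem \ref{thm:SC_FPs} is fixed once $\gamma$ is chosen, whereas $\delin$ depends on \emph{both} $\omega$ and $\Lambda$; the argument works because for each fixed $\omega > \omega_0$ one can send $\Lambda \to \infty$ so that $\delin \uparrow \delta$ and $\frac{\Lambda+\omega}{\Lambda}\to 1$ simultaneously. This is exactly where assumptions (i) and (ii) do the work: uniqueness guarantees that the $\max$ in \eqref{eq:omLam_MSE} is a single continuous branch rather than a jump between competing global minimizers, and continuity guarantees $x^*(\delin) \to x^*(\delta)$ so that no discontinuity in the minimizer can occur as $\delin$ approaches $\delta$. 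One should also verify that $\delin$ enters (and stays in) the region where these assumptions hold for all $\Lambda$ beyond the threshold, which is immediate since $\delin$ can be made arbitrarily close to $\delta$ by enlarging $\Lambda$.
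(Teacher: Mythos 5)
Your proposal is correct and follows essentially the same route as the paper's own (much terser) proof: combine the bound of Theorem \ref{thm:SC_FPs}, part 1, with Theorem \ref{thm:iidMMSE}, use uniqueness and continuity of the minimizer to send $\delin \to \delta$ by taking $\Lambda/\omega$ large, and absorb the rate-loss factor $\frac{\Lambda+\omega}{\Lambda} \to 1$ into $\epsilon$. Your added care about the order of the quantifiers over $\omega$, $\Lambda$, $t$, $n$ (and your observation that $\delin$ in fact approaches $\delta$ from below, since $\delin = \delta\frac{\Lambda}{\Lambda+\omega-1} < \delta$) is a faithful elaboration of the same argument.
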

\begin{proof}
 Since the minimizer in $x$ of $U(x; \delin)$ is unique and a continuous function of $\delin$ for $\delin \in [\delta, \delta_0]$, we have that 
$\argmin_{x\in [0, \var(X)]} U(x; \delin) \to  \argmin_{x\in [0, \var(X)]} U(x; \delta)$ as $\delin \to \delta$ from above. 
The inequality in \eqref{eq:SC_MMSE_comp} then follows by combining  Part 1  of Theorem \ref{thm:SC_FPs}  with the statement of Theorem \ref{thm:iidMMSE}, and noting that $\delin = \delta \frac{\Lambda}{\Lambda+\omega-1}$ can be made arbitrarily close to $\delta$ for sufficiently large $\frac{\Lambda}{\omega}$.
\end{proof}

The Bayes-optimality of SC-GAMP  is an instance of 
\emph{threshold saturation}, the phenomenon where the error performance of message passing in a spatially coupled system matches the optimal error performance in the corresponding uncoupled system, The threshold saturation phenomenon has been previously established for LDPC codes, compressed sensing, and sparse superposition codes \cite{kudekar2011threshold,  donoho2013information, yedla2014asimple, kumar2014threshold, kudekar2015wave, barbier2016proof, hsieh2022GMAC}.

If $U(x; \delta)$ has a unique stationary point in $x \in [0, \var(X)]$ at which the minimum is attained, then  Theorem \ref{thm:SC_FPs}  and Corollary \ref{corr:BayesOpt} imply that the MMSE can be achieved by an i.i.d.~Gaussian sensing matrix with estimation via standard GAMP, i.e., spatial coupling is not required in this case. However,  when $U(x; \delta)$  has multiple stationary points (e.g., the $\delta=0.6$ curve in Figure \ref{fig:pot-pr}), the MSE of  a spatially coupled design can be substantially lower than that of an i.i.d.~design. This is illustrated by the numerical results in Figure \ref{fig:sc-gamp-pr} and \ref{fig:sc-gamp-relu}, where there is a significant gap between the MSE of the two designs, in the range $0.5 < \delta \leq 0.8$ for noiseless phase retrieval and  $0.6 < \delta \leq 1$ for rectified linear regression.

\section{Numerical Results} \label{sec:numerical_results}

We evaluate the empirical performance of Bayes SC-GAMP for two different GLMs:  noiseless phase retrieval ($\by=|\bA\bx|^2$) and noiseless rectified linear regression ($\by=\max\left(\bA\bx,\boldsymbol{0}\right)$), shown in Figure \ref{fig:sc-gamp-pr} and Figure \ref{fig:sc-gamp-relu} respectively. 
The signal entries were sampled i.i.d.~from a two-point prior for phase retrieval, and from a sparse three-point prior  for rectified linear regression. 
An estimate of the  signal was obtained using the SC-GAMP algorithm defined in \eqref{eq:SC-GAMP1}-\eqref{eq:SC-GAMP2}, with the Bayes-optimal choices $\ginbayes$ and $\goutbayes$ for each model, as defined in \eqref{eq:ginbayes}-\eqref{eq:goutbayes}. The expressions for $\ginbayes$ and $\goutbayes$ in each case, as well as details of evaluating the potential function and its stationary points, are given in Appendix \ref{app:comp_GLM}. The SC-GAMP algorithm was run for a maximum of 300 iterations, and stopped early 
if the relative difference between the estimated SE parameter $\boldsymbol{\htp}(t)$ and its previous iterate $\boldsymbol{\htp}(t-1)$ fell below a prescribed threshold. 
Details on how these SE parameters can be estimated at runtime can be found in Appendix \ref{app:comp_SE}.

\begin{figure}[!t]
		\centering
		\includegraphics[width=0.73\textwidth]{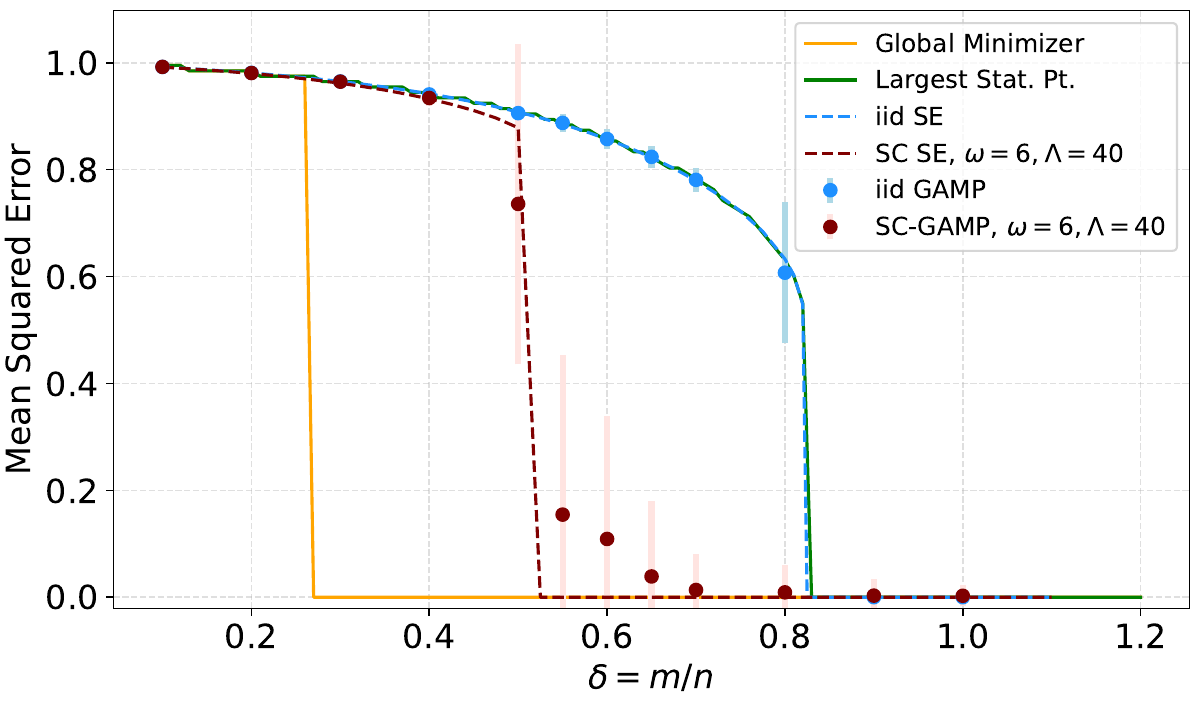} 	
		\caption{MSE of Bayes-optimal  GAMP for noiseless phase retrieval, with i.i.d.~Gaussian and spatially coupled designs. The signal entries are drawn  i.i.d.~from $\{-a,a\}$ with probabilities $\{0.4,0.6\}$, and $a$ chosen such that the variance is $1$. Signal dimension $n=20000$, the base matrix parameters are $(\omega=6,\Lambda=40)$, and the empirical performance is obtained by averaging over 100 trials and the error bars represent $\pm1$ standard deviation.}
		\label{fig:sc-gamp-pr}
\end{figure}

\begin{figure}[!t]
    \centering
    \includegraphics[width=0.73\textwidth]{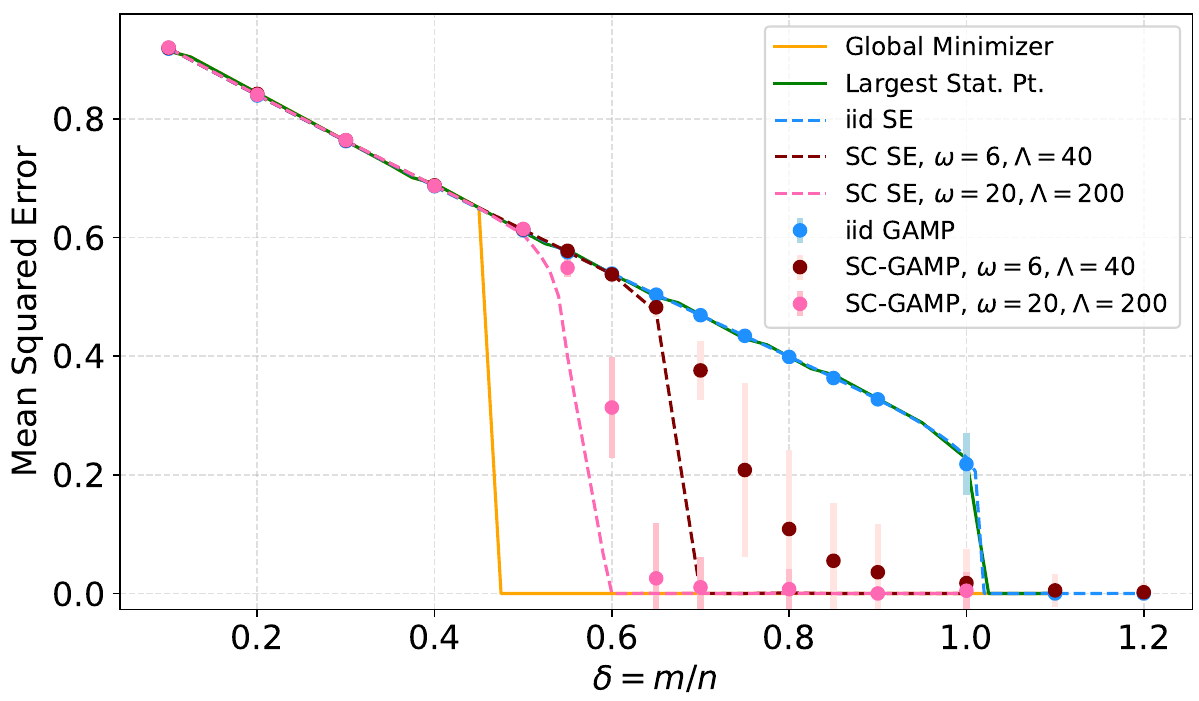}
    \caption{MSE of Bayes-optimal  GAMP for noiseless rectified linear regression, with i.i.d.~Gaussian and 2 different spatially coupled designs. The signal entries are drawn i.i.d.~from $\{-b,0,b\}$ with probabilities $\{0.25,0.5,0.25\}$, and $b$ chosen such that the variance is $1$. Signal dimension $n=20000$ for i.i.d.~GAMP. For SC-GAMP, the block dimension $n/\Lambda$ is fixed to $500$, and two  sets of base matrix parameters are considered: $(\omega=6,\Lambda=40)$ and $(\omega=20,\Lambda=200)$. The empirical performance is obtained by averaging over 100 trials and the error bars represent $\pm1$ standard deviation.}
    \label{fig:sc-gamp-relu}
\end{figure}

Figures \ref{fig:sc-gamp-pr} and \ref{fig:sc-gamp-relu} both plot the MSE of Bayes SC-GAMP and Bayes GAMP (the latter with an i.i.d.~Gaussian matrix) at convergence for a range of $\delta$ values. 
The orange solid line is the global minimizer of $U(x; \delta)$ and the green solid line the largest stationary point of $U(x; \delta)$.
The dashed lines are the state evolution MSE predictions for GAMP, given by \eqref{eq:SC-GAMP-MSE}. Both figures show that the average MSE for an i.i.d.~Gaussian sensing matrix (solid blue) closely matches the SE prediction (dashed blue), and the largest minimizer (green) curves, as expected from \eqref{eq:SC_FPs_iid}.
For both phase retrieval and rectified linear regression,  the  curves  show that the minimum $\delta$ required  for near-perfect reconstruction is significantly lower for the spatially coupled design compared to the i.i.d.~Gaussian one.

The difference between the empirical performance of SC-GAMP and its state evolution curve is due to finite length effects. In both figures, the difference between the $(\omega=6,\Lambda=40)$ state evolution curve (dashed dark red) and the global minimizer (orange) is because of the small values of  $\omega$ and $\Lambda$. The effect of increasing these values is illustrated by the dashed and solid pink lines in Figure \ref{fig:sc-gamp-relu}, which show that increasing the base matrix dimensions to $(\omega=20,\Lambda=200)$ noticeably reduces the minimum $\delta$ required by SC-GAMP for near-perfect reconstruction, approaching the performance of the MMSE estimator. We do not expect that the $\omega > \omega_0$ condition of Corollary \ref{corr:BayesOpt} is satisfied with  $\omega=6$, and even for $\omega=20$ it appears that $\Lambda=200$ is not sufficiently large.

For the $(\omega=20, \Lambda=200)$ setting in Figure \ref{fig:sc-gamp-relu}, the signal dimension is $n=100000$, leading to a design matrix with $\sim 10^{10}$ entries. Since it is challenging to store such a large design Gaussian design matrix in memory, a Discrete Cosine Transform (DCT) based implementation was used for the SC-GAMP in this case. Details of this DCT implementation can be found in Appendix \ref{app:dct}.

\section{Proofs of main results} \label{sec:proofs}

\subsection{Proof of Theorem \ref{thm:gen_sc_gamp_SE}}
\label{sec:proof_gen_sc_gamp_SE}

We first describe in Section \ref{subsec:gen_rect_AMP} an abstract AMP iteration  with matrix-valued iterates.  The state evolution result for this  abstract AMP   was established  in \cite{javanmard2013state}; see also \cite[Sec. 6.7]{fengAMP22}, \cite{Ger21}. We use this result to prove Theorem \ref{thm:gen_sc_gamp_SE} in Section \ref{subsec:proof_main}, by showing that the spatially coupled GAMP in \eqref{eq:SC-GAMP1}-\eqref{eq:SC-GAMP2} is an instance of an abstract AMP iteration with matrix-valued iterates for i.i.d.~Gaussian matrices.

\subsubsection{Abstract  AMP recursion for i.i.d.~Gaussian matrices} \label{subsec:gen_rect_AMP}

Let $\bM \in \reals^{m \times n}$ be a random matrix with entries $M_{ij} \sim_{\text{iid}} \normal(0, \frac{1}{m})$. Consider the following recursion with iterates $\be^{t+1} \in \reals^{m \times \ell_E}$ and $\bh^{t+1} \in \reals^{n \times \ell_H}$, defined recursively as follows. For $t \ge 0$:
\begin{align}
    \bh^{t+1} & = \bM^{\sT} g_t(\be^{t}, \, \bgamma)
    \, - \, f_t(\bh^t, \, \bbeta) \sD^{\sT}_t \, ,\label{eq:Ht_update} \\
    \be^{t+1} &  = \bM f_{t+1}(\bh^{t+1}, \, \bbeta) \, - \, g_{t}(\be^{t}, \, \bgamma) \sB_{t+1}^{\sT}. \label{eq:Et_update} 
 \end{align}
 The iteration is initialized with some $\bi^0 \equiv  f_0(\bh^0, \, \bbeta) \in \reals^{n \times \ell_E}$ and $\be^0 = \bM \bi^0$.
Here, the functions $f_{t+1}: \reals^{n \times (\ell_H +1)} \to \reals^{n \times \ell_E}$ and $g_t: \reals^{m \times (\ell_E +1)} \to \reals^{n \times \ell_H}$ are defined component-wise as follows, for $t \ge 0$. 
Recalling the partition of the sets $[m]$ and $[n]$ in \eqref{eq:nm_part}-\eqref{eq:part_def}, 
 we assume there exist functions $\tf_{t+1} : \reals^{\ell_H} \times \reals \times [\Lc] \to \reals^{\ell_E}$ and 
$\tg_t: \reals^{\ell_E} \times \reals \times [\Lr] \to \reals^{\ell_H}$ such that 
\begin{equation}
    \begin{split}
            & f_{t+1,j}(\bh, \bbeta) = \tf_{t+1}(h_j, \, \beta_j, \ \sfc ),  \quad  \text{ for } j \in \mc{J}_{\sfc}, \quad \bh \in \reals^{n \times \ell_H}, \ \bbeta \in \reals^n \\
            & g_{t,i}(\be, \bgamma) = \tg_{t}(e_i, \, \gamma_i, \ \sfr ), \quad \text{ for } i \in \mc{I}_{\sfr}, \quad \be \in \reals^{m \times \ell_E}, \ \bgamma \in \reals^m.
    \end{split}
    \label{eq:ftj_gti}
\end{equation}
(Here we note that $h_j \in \reals^{\ell_H}$, $e_i \in \reals^{\ell_E}$, and $\beta_j, \gamma_i \in \reals$.) In words, the functions $f_{t+1}, g_t$ act component-wise on their inputs and depend only on the indices of the column block/row block. 

The matrices $\sB_{t+1} \in \reals^{\ell_E \times \ell_H}$ and $\sD_t \in \reals^{\ell_H \times \ell_E}$ in  \eqref{eq:Et_update} and \eqref{eq:Ht_update} are defined as:
\begin{align}
    &\sB_{t+1} = \frac{1}{m} \sum_{j=1}^n \tf_{t+1}'(h^{t+1}_j, \beta_j, \sfc),\nonumber\\
    &\sD_t = \frac{1}{m} \sum_{i=1}^m \tg_t'(e^t_i, \gamma_i, \sfr),
\end{align}
where $\tf_{t+1}'$ and $\tg_t'$ denote the Jacobians with respect to the first argument.

\emph{Assumptions}: 
\begin{enumerate}
\item As the dimensions $m, n \to \infty$, the aspect ratio $\frac{m}{n} \to \delta >0$. We emphasize that $\ell_E, \ell_H$ as well as $\Lr, \Lc$ are positive integers that do not scale with $n$ as $m, n \to \infty$. 

\item The functions $\tf_t( \cdot \, , \cdot \, , \sfc)$ and 
$\tg_t(\cdot \, , \cdot \, , \sfr)$ are Lipschitz for $t \ge 1$.

\item  For  $\sfc \in [\Lc]$ and $\sfr \in [\Lr]$, let $\nu(\bbeta_\sfc)$ and $\pi(\bgamma_\sfr)$ denote the  empirical distributions of $\bbeta_\sfc  \equiv  (\beta_j)_{j \in [\mc{J}_{\sfc}]}  \in \reals^{n/\C}$ and 
$\bgamma_\sfr \equiv (\gamma_i)_{i \in [\mc{I}_{\sfr}]} \in \reals^{m/\R}$, respectively. Then, for some $k \in [2, \infty)$, there exist scalar random variables $\bar{\beta}_{\sfc} \sim \nu_{\sfc}$   and $\bar{\gamma}_{\sfr} \sim \pi_{\sfr}$,
with $\E\{ | \bar{\beta}_{\sfc} |^k \} < \infty$ and $\E\{ | \bar{\gamma}_{\sfr} |^k \} <\infty$, such that $d_k(\nu(\bbeta_\sfc),\nu_{\sfc}) \rightarrow 0$ and $d_k(\pi(\bgamma_\sfr), \pi_{\sfr}) \rightarrow 0$ almost surely.  Here $d_k(P,Q)$ is the $k$-Wasserstein distance between distributions $P, Q$ defined on the same Euclidean probability space.

\item  For $\sfc \in [\C]$, we assume there exists a symmetric non-negative definite $\hat{\Sigma}^{0, \sfc} \in \reals^{\ell_E \times \ell_E}$ such that we almost surely have 
\begin{align}
\frac{1}{\delta  } \lim_{n \to \infty}   \frac{1}{(n / \C)}(\bi^0_{\sfc})^{\sT} \bi^0_{\sfc} = \hat{\Sigma}^{0, \sfc},
\label{eq:hSig0}
\end{align}
where $\bi^0_\sfc  \equiv  (i^0_j)_{j \in [\mc{J}_{\sfc}]}  \in \reals^{(n/\C)}$.
\end{enumerate}

 Theorem \ref{thm:SE_gen} below states that for each $t \ge 1$, the empirical distribution of the rows of $\be^t$ converge to the law of a Gaussian random vector $\sim \normal(0, \Sigma^t)$. Similarly, the empirical distribution of the rows of $\bh^t$ converge to the law of a Gaussian random vector $\sim \normal(0, \Omega^t)$. Here the covariance matrices $\Sigma^t \in \reals^{\ell_E \times \ell_E}$ and $\Omega^t \in \reals^{\ell_H \times \ell_H}$ are defined by the state evolution recursion, described below.

\paragraph{State Evolution.} 
For $t\ge 1$,  we recursively define the matrices $\Omega^{t} \in \reals^{\ell_H \times \ell_H}$ and $\Sigma^{t} \in \reals^{\ell_E \times \ell_E}$ as follows. Given $\Omega^t,\Sigma^t$, compute:
\begin{align}
    \Omega^{t+1} = \frac{1}{\R} \sum_{\sfr \in [\Lr]}  \hat{\Omega}^{t+1, \sfr} \, ,
    \label{eq:Omega_t1}
\end{align}
where for $\sfr \in 
[\Lr]$,
\begin{align}
    \hat{\Omega}^{t+1, \sfr} = \E\left\{ \tg_t(G^t_{\sfr}, \bar{\gamma}_\sfr,  \, \sfr)  \, \tg_t(G^t_{\sfr}, \bar{\gamma}_\sfr,  \, \sfr)^{\sT} \right\}, \quad 
    G^t_{\sfr} \sim \normal(0, \Sigma^t) \,  \perp \, \bar{\gamma}_\sfr \sim \pi_{\sfr}.
\end{align}
Next, 
\begin{align}
    \Sigma^{t+1} = \frac{1}{\C} \sum_{\sfc \in [\Lc]}   \hat{\Sigma}^{t+1, \sfc} \, ,
    \label{eq:Sigma_t1}
\end{align}
where for $\sfc \in  [\Lc]$,
\begin{align}
    \hat{\Sigma}^{t+1, \sfc} = \frac{1}{\delta}\E\left\{ \tf_{t+1}(\bar{G}^{t+1}_{\sfc}, \bar{\beta}_\sfc,  \, \sfc)  \, \tf_{t+1}(\bar{G}^{t+1}_{\sfc}, \bar{\beta}_\sfc,  \, \sfc)^{\sT} \right\}, \quad 
    \bar{G}^{t+1}_{\sfc} \sim \normal(0, \Omega^{t+1}) \,  \perp \, \bar{\beta}_\sfc \sim \nu_{\sfc}.
    \label{eq:Gt1_c}
\end{align}
The state evolution is initialized with $\Sigma^{0} = \frac{1}{\C} \sum_{\sfc \in [\Lc]}
\hat{\Sigma}^{0, \sfc}$, where $\hat{\Sigma}^{0, \sfc}$ is defined in \eqref{eq:hSig0}.

\begin{theorem}
Consider the abstract AMP recursion in \eqref{eq:Ht_update}-\eqref{eq:Et_update}, with the Assumptions (1)-(4) stated above. Let $\Psi: \reals^{\ell_H} \times \reals \to \reals$ and $\Phi: \reals^{\ell_E} \times \reals \to \reals$ be any pseudo-Lipschitz functions. Then, for all $t \ge 1$ and each $\sfr \in [\Lr]$ and $\sfc \in [\Lc]$, we almost surely have:
\begin{align}
 & \lim_{n \to \infty} \,  \frac{1}{n/\C} \sum_{j \in  \mc{J}_\sfc} \Psi(h^t_j \, , \beta_j)  = \E\{ \Psi( \bar{G}^t_{\sfc} \, , \bar{\beta}_\sfc) \}, 
\quad   \bar{G}^t_{\sfc} \sim \normal(0, \Omega^t) \,  \perp \, \bar{\beta}_\sfc \sim \nu_{\sfc} \, ,     \label{eq:Ht_result} \\
 &  \lim_{m \to \infty} \,  \frac{1}{m/\R} \sum_{i \in \mc{I}_\sfr } \Phi(e^t_i \, , \gamma_i)  = \E\{ \Phi( G^{t}_{\sfr} \, , \bar{\gamma}_\sfr) \}, \quad    G^t_{\sfr} \sim \normal(0, \Sigma^t) \,  \perp \, \bar{\gamma}_\sfr \sim \pi_{\sfr} \, .
      \label{eq:Et_result}
\end{align}
\label{thm:SE_gen}
\end{theorem}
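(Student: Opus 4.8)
The plan is to recognize the recursion \eqref{eq:Ht_update}--\eqref{eq:Et_update} as an instance of the matrix-valued (``multivariate'') approximate message passing iteration whose state evolution was established in \cite{javanmard2013state}, and to verify that the present block-structured setup falls within the scope of that framework. Since $\bM$ is rectangular and the iteration alternates between $\bM$ and $\bM^{\sT}$, a convenient first step is to symmetrize: embed $\be^t$ and $\bh^t$ into a single iterate driven by the symmetric matrix $\tilde{\bM} = \begin{bmatrix} \bzero & \bM \\ \bM^{\sT} & \bzero \end{bmatrix} \in \reals^{(m+n)\times(m+n)}$, whose nonzero block has i.i.d.~$\normal(0, 1/m)$ entries. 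Under this embedding the alternating recursion becomes a symmetric AMP with per-coordinate iterates valued in $\reals^{\ell_E}$ (on the first $m$ coordinates) and $\reals^{\ell_H}$ (on the last $n$ coordinates), and the matrices $\sB_{t+1}, \sD_t$ become the divergence (``reaction'') terms of this symmetric iteration. Here it is essential that $\ell_E, \ell_H, \R, \C$ stay fixed as $n \to \infty$ (Assumption 1), so that the per-coordinate limit is a genuinely finite-dimensional Gaussian.

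The mechanism underlying the result is the Gaussian conditioning technique of Bolthausen, as developed for AMP in \cite{bayati2011thedynamics} and extended to matrix-valued and spatially-coupled iterates in \cite{javanmard2013state}. I would proceed by induction on $t$. At step $t$ one conditions on the sigma-algebra generated by all previous iterates together with the deterministic inputs $\bbeta, \bgamma$; these fix the linear functionals $\bM f_s(\cdot)$ and $\bM^{\sT} g_s(\cdot)$ for $s \le t$, so by the standard conditioning lemma the conditional law of $\bM$ is that of a deterministic matrix (determined by the constraints) plus an independent fresh Gaussian matrix supported on the orthogonal complement of the constraint space. The matrices $\sB_{t+1}$ and $\sD_t$ are defined exactly so that the deterministic ``memory'' contribution of the past iterates cancels to leading order, leaving the new iterate equal to (fresh Gaussian) $\times$ (nonlinearity) plus a term that vanishes in the high-dimensional limit. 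This yields the asymptotic Gaussianity of the rows of $\be^{t+1}, \bh^{t+1}$ and the pseudo-Lipschitz test-function limits \eqref{eq:Et_result}--\eqref{eq:Ht_result}.

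The block structure enters only through bookkeeping of covariances. Because $\tf_{t+1}(\cdot,\cdot,\sfc)$ and $\tg_t(\cdot,\cdot,\sfr)$ depend on a coordinate only through its column-block $\sfc$ or row-block $\sfr$, and because the per-block empirical distributions $\nu(\bbeta_\sfc), \pi(\bgamma_\sfr)$ converge in $k$-Wasserstein distance to $\nu_\sfc, \pi_\sfr$ (Assumption 3), a blockwise law of large numbers shows that the empirical second-moment matrices of the iterates converge to the block averages $\Sigma^{t+1} = \frac{1}{\C}\sum_{\sfc} \hat{\Sigma}^{t+1,\sfc}$ and $\Omega^{t+1} = \frac{1}{\R}\sum_{\sfr} \hat{\Omega}^{t+1,\sfr}$ of \eqref{eq:Sigma_t1}--\eqref{eq:Gt1_c}. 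The induction hypothesis at each step is precisely that the rows of $\be^t$ (resp.~$\bh^t$), grouped by block, are asymptotically jointly Gaussian with these covariances, jointly with the inputs; the Lipschitz assumption (Assumption 2) guarantees the nonlinearities propagate this Gaussianity and that the relevant sample averages concentrate.

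I expect the main obstacle to be the rigorous control of the error terms in the conditioning step: one must show that the residual after subtracting the Onsager correction is negligible uniformly over the growing number of coordinates, and that the blockwise empirical covariances of the iterates concentrate around their state-evolution values, which requires moment/concentration bounds compatible with the $k$-Wasserstein ($k \ge 2$) assumption on the inputs. A secondary technical point is ensuring the conditioning lemma is applied where the Gram matrices of past iterates are non-degenerate, handling any degeneracy through a limiting or perturbation argument. Since \cite{javanmard2013state} already carries out exactly this program for spatially-coupled matrix-valued AMP, the cleanest route is to cast \eqref{eq:Ht_update}--\eqref{eq:Et_update} into their canonical form via the symmetrization above and invoke their theorem, after checking that Assumptions (1)--(4) imply the hypotheses required there.
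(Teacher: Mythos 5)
Your proposal follows essentially the same route as the paper: recognize \eqref{eq:Ht_update}--\eqref{eq:Et_update} as a matrix-valued AMP, symmetrize to handle the rectangular $\bM$, and invoke the state evolution theorem of \cite{javanmard2013state} (the paper likewise only sketches this reduction and defers the Gaussian-conditioning machinery to that reference). One technical point where your write-up as stated would not go through: the symmetrized matrix you propose, $\tilde{\bM} = \bigl[\begin{smallmatrix} \bzero & \bM \\ \bM^{\sT} & \bzero \end{smallmatrix}\bigr]$, has exactly-zero diagonal blocks and is therefore not a GOE matrix, so the symmetric-AMP theorem of \cite{javanmard2013state} cannot be invoked for it directly. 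The paper instead forms $\bM^{\text{sym}}$ by filling the diagonal blocks with independent GOE blocks $\bM_1 + \bM_1^{\sT}$ and $\bM_2 + \bM_2^{\sT}$ (with an overall scaling making the whole $(m+n)\times(m+n)$ matrix GOE), and then chooses the nonlinearities of the symmetric iteration so that these added blocks contribute nothing: the odd iterates recover $\be^t$ and the even iterates recover $\bh^{t+1}$. This is a standard and easily repaired omission, but it is the one place your reduction needs amending before the cited theorem applies.
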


Theorem \ref{thm:SE_gen} can be proved by applying \cite[Theorem 1]{javanmard2013state}, which gives an analogous state evolution result for an AMP recursion defined via a \emph{symmetric} Gaussian matrix. To obtain Theorem \ref{thm:SE_gen} for the AMP recursion \eqref{eq:Ht_update}-\eqref{eq:Et_update} defined via the non-symmetric  $\bM$, we  consider an AMP recursion defined on a symmetric matrix $\bM^{\text{sym}} \in \reals^{(m+n) \times (m+n)}$ defined as follows. Let $\bM_1 \in \reals^{m \times m}$ and $\bM_2 \in \reals^{n \times n}$ each be matrices with i.i.d.~$\normal(0, \frac{1}{2m})$ entries. Then
\begin{equation}
     \bM^{\text{sym}} =     \sqrt{\frac{\delta}{1+ \delta}} \begin{bmatrix}
& \bM_1 + \bM_1^{\sT} & \bM \\
& \bM^{\sT} & \bM_2 + \bM_2^{\sT}
    \end{bmatrix}
\end{equation}
is a symmetric $\text{GOE}(m+n)$ matrix. Using steps similar to those in \cite[Sec. 3.5]{javanmard2013state}, we can define an AMP recursion for $\bM^{\text{sym}}$ such that the $(2t+1)$th iterate gives $\be^t$ and the $(2t+2)$th iterate gives $\bh^{t+1}$. Using this together with the state evolution result for a symmetric AMP recursion in \cite[Theorem 1]{javanmard2013state}, we obtain the convergence results in \eqref{eq:Ht_result}, \eqref{eq:Et_result}. We omit the details as they are similar to  \cite[Sec. 3.5]{javanmard2013state}.
We mention that Theorem \ref{thm:SE_gen}  can also be proved using the graph-based AMP framework in \cite{Ger21}.

\begin{remark}
\label{rem:det_onsager}
Theorem \ref{thm:SE_gen} also holds for the modified AMP recursion where the matrices $\sB_{t+1}$ and $\sD_t$ in \eqref{eq:Et_update} and \eqref{eq:Ht_update} are replaced by their limiting values determined by state evolution. That is, for $t \ge 0$:

\begin{align}
    \sB_{t+1} \rightarrow \bar{\sB}_{t+1} &:= \frac{1}{\delta} \frac{1}{\C} \sum_{\sfc =1}^{\C} \E \{ \tf_{t+1}'(\bar{G}^{t+1}_{\sfc}, \bar{\beta}_\sfc,  \, \sfc) \} ,\\
    \sD_t \rightarrow \bar{\sD}_t &:= \frac{1}{\R} \sum_{\sfr =1}^{\R} \E \{ \tg_t'(G^t_{\sfr}, \, \bar{\gamma}_\sfr,  \, \sfr) \}.
    \label{eq:det_onsager}
\end{align}
\end{remark}

\subsubsection{Proof of Theorem \ref{thm:gen_sc_gamp_SE} via reduction of SC-GAMP to Abstract AMP}
\label{subsec:proof_main}
We define the i.i.d.~Gaussian matrix $\bM \in \reals^{m \times n}$ in terms of the spatially coupled matrix  $\bA$ as follows. For $i \in [m], \, j \in [n]$,
\begin{equation}
    M_{ij} =
    \begin{cases}
        \frac{A_{ij}}{\sqrt{\Lr \, W_{\sfr(i), \sfc(j)} }}, & \quad \text{ if } W_{\sfr(i), \sfc(j)} \neq 0, \\
        \stackrel{\text{indep.}}{\sim}\normal(0, \frac{1}{m}), & \quad \text{ otherwise }.
    \end{cases}
    \label{eq:M_def}
\end{equation}
From \eqref{eq:construct_sc_A}, we have that $M_{ij} \sim_{\normalfont\text{iid}} \normal(0, \frac{1}{m})$ for $i \in [m], \, j \in [n]$. Using the matrix $\bM$, we define an abstract AMP recursion for the form in \eqref{eq:Et_update}-\eqref{eq:ftj_gti}, with iterates 
$\be^t \in \reals^{m \times 2 \Lr}$ and $\bh^{t+1} \in \reals^{n \times \Lc}$. This is done via the following choice of functions $\tf_t : \reals^{\Lc} \times \reals \times [\Lc] \to \reals^{2 \Lr}$ and 
$\tg_t: \reals^{2 \Lr} \times \reals \times [\Lr] \to \reals^{\Lc}$,  for $t \ge 0$. In \eqref{eq:ftj_gti}, we  take $\bbeta := \bx $ and $\bgamma := \bveps$, recalling from \eqref{eq:GLM_def} that $\by  = \varphi(\bA\bx, \, \bveps)$, and let
\begin{align}
    & \tf_{t}(h_j, \, x_j, \ \sfc )  = \sqrt{\Lr} \Big[ \, \gin(h_{j, \sfc} + \mu_{\sfc}^q(t)  \, x_j, \sfc \, ; \, t ) \, [\sqrt{W_{1\sfc}}, \ldots, \sqrt{W}_{\Lr \sfc}]  , \   x_j[\sqrt{W_{1\sfc}}, \ldots, \sqrt{W_{\Lr \sfc}}] \Big],
    \nonumber \\ 
    & \hspace{30em} \text{ for } j \in \mc{J}_{\sfc}, \ h_j \in \reals^{\Lc}, \label{eq:ft_choice}  \\
    & \tg_{t}(e_i, \, \veps_i, \ \sfr )   = \sqrt{\Lr}
\,     \gout(e_{i, \sfr}, \varphi(e_{i, \sfr+ \Lr}, \, \varepsilon_i) , \sfr ; t) \, [\sqrt{W_{\sfr 1}} \alpha^q_{1}(t+1), \,  \sqrt{W_{\sfr 2}} \,  \alpha^q_{2}(t+1), \ldots \sqrt{W_{\sfr \Lc}} \, \alpha^q_{\Lc}(t+1)] \nonumber  \\
& \hspace{30em}  \text{ for } i \in \mc{I}_{\sfr}, \ e_i \in \reals^{2\Lr}.
\label{eq:gt_choice}
\end{align}
Here the notation $h_{j, \sfc}$ refers to the $\sfc$th component of the vector $h_j \in \reals^{\Lc}$. Similarly, $e_{i, \sfr}$ and $e_{i, \sfr+\Lr}$ are components of the vector $e_{i} \in \reals^{2\Lr}$.  We also recall that $\mu^q_\sfc(t)$ is the GAMP state evolution parameter defined according to \eqref{eq:muq_t1_def}. 
We note that the vectors in  \eqref{eq:ft_choice}-\eqref{eq:gt_choice} are treated as row vectors as they represent the $j$th row of the functions $f_t, g_t$, as given in \eqref{eq:ftj_gti}.

Consider the AMP algorithm \eqref{eq:Et_update}-\eqref{eq:Ht_update} defined via the matrix $\bM$ in \eqref{eq:M_def}, the functions in \eqref{eq:ft_choice}-\eqref{eq:gt_choice}, and the matrices $\sB_t, \sD_t$ replaced by their deterministic limiting values $\bar{\sB}_t, \bar{\sD}_t$, respectively (see \eqref{eq:det_onsager}). The algorithm is initialized with $\bi^0 \equiv  f_0(\bh^0, \, \bbeta) \in \reals^{n \times 2 \R}$ and $\be^0 = \bM \bi^0$, where the $j$th row
\begin{align}
    i_j^0 = \sqrt{\R} \Big[ \,  x_j(0) \, [\sqrt{W_{1\sfc}}, \ldots, \sqrt{W}_{\Lr \sfc}] \,  ,  \,    x_j \, [\sqrt{W_{1\sfc}}, \ldots, \sqrt{W_{\Lr \sfc}}] \Big] \quad \text{ for }j\in\mc{J}_\sfc, \  i_j^0\in\reals^{1\times 2\R}.
    \label{eq:i0_init}
\end{align}
With this choice of functions, the state evolution recursion is as follows. Given $\Sigma^t \in \reals^{2\R \times 2\R}$ and $\Omega^{t} \in \reals^{\C \times \C}$,  the matrices for iteration $(t+1)$ are computed according to \eqref{eq:Omega_t1}-\eqref{eq:Gt1_c}.  For  $\sfr \in [\R]$, let $G^t_{\sfr} \sim \normal(0,  \Sigma^t) \ \perp \ \bar{\veps} \sim P_{\bar{\veps}}$. Then the entries of $\hat{\Omega}^{t+1, \sfr} \in \reals^{\C \times \C}$ are:
\begin{align}
\left[ \hat{\Omega}^{t+1, \sfr} \right]_{\sfc \, \sfc'} =& \R \,  \E\{ \gout(G^t_{\sfr, \sfr}, \, \varphi(G^t_{\sfr, \sfr+\R}, \bar{\veps} ), \, \sfr \, ; t )^2 \}
\,  \sqrt{W_{\sfr \sfc} W_{\sfr \sfc'}} \alpha^q_{\sfc}(t+1) \alpha^q_{\sfc'}(t+1), \quad  \sfc, \, \sfc' \in [\C].
\label{eq:hOmegat1_def}
\end{align}
Here $G^t_{\sfr, \sfr}$ and $G^t_{\sfr, \sfr+\R}$ denote the $\sfr$th  and $(\sfr+\R)$th components, respectively, of $G^t_{\sfr} \in \reals^{2\R}$. 
For $\sfc \in [\C]$, let $\bar{G}^t_{\sfc} \sim \normal(0, \Omega^t) \ \perp \ X \sim P_{X}$. The entries of $\hat{\Sigma}^{t+1} \in \reals^{2\R \times 2\R}$ are:
\begin{align}
  \left[  \hat{\Sigma}^{t+1, \sfc} \right]_{k \ell} =
  \begin{cases}
 \frac{1}{\delta}\R \E \left\{ \gin( \bar{G}^t_{\sfc, \sfc} + \mu^q_{\sfc}(t)X, \sfc \, ; t)^2 \right\} \sqrt{W_{k \sfc} W_{\ell \sfc}} \, , & k, \ell \in [\R],  \\
 \frac{1}{\delta}\R \E \left\{ X \gin( \bar{G}^t_{\sfc, \sfc} + \mu^q_{\sfc}(t)X, \sfc \, ; t) \right\} \sqrt{W_{k \sfc} W_{(\ell - \R) \sfc}} \, , & k \in [\R], \ \R+1 \le \ell \le 2\R,   \\
 \frac{1}{\delta}\R \E \left\{ X \gin( \bar{G}^t_{\sfc, \sfc} + \mu^q_{\sfc}(t)X, \sfc \, ; t ) \right\} \sqrt{W_{(k-\R) \, \sfc} W_{\ell \sfc}} \, , &  \R+1 \le k \le 2\R, \ , \ell \in [\R],  \\
   \frac{1}{\delta}\R \E \{ X^2 \} \sqrt{W_{(k-\R) \, \sfc} W_{(\ell -\R) \,  \sfc}} \, , &  \R+1 \le k, \ell \le 2\R.
  \end{cases}
  \label{eq:hSigmat1_def}
\end{align}
Here $\bar{G}^t_{\sfc, \sfc}$ denotes the $\sfc$th component of $\bar{G}^t_{\sfc} \in \reals^{\Lc}$. 
The state evolution is initialized with $\hat{\Sigma}^{0, \sfc} = \frac{1}{\delta  } \lim_{n \to \infty}   \frac{1}{(n / \C)}(\bi^0_{\sfc})^{\sT} \bi^0_{\sfc}$, for $\sfc \in [\C]$, with $\bi^0 \in \reals^{n \times 2\R}$  defined in \eqref{eq:i0_init}. Using Assumption (A0) (see \eqref{eq:init_limits}), the entries are given by 
\begin{align}
  \left[  \hat{\Sigma}^{0, \sfc} \right]_{k \ell} =
  \begin{cases}
 \frac{1}{\delta}\R \Xi^{\sfc}_{11} \sqrt{W_{k \sfc} W_{\ell \sfc}} \, , & k, \ell \in [\R],  \\
\frac{1}{\delta}\R \Xi^{\sfc}_{12} \sqrt{W_{k \sfc} W_{(\ell - \R) \sfc}} \, , & k \in [\R], \ \R+1 \le \ell \le 2\R,   \\
\frac{1}{\delta}\R \Xi^{\sfc}_{12}
\sqrt{W_{(k-\R) \, \sfc} W_{\ell \sfc}} \, , &  \R+1 \le k \le 2\R, \  \ell \in [\R],  \\
   \frac{1}{\delta}\R \Xi^{\sfc}_{22}\sqrt{W_{(k-\R) \, \sfc} W_{(\ell -\R) \,  \sfc}} \, , &  \R+1 \le k, \ell \le 2\R.
  \end{cases}
  \label{eq:hSigma0}
\end{align}
We then have $\Sigma^0 = \frac{1}{\C} \sum_{\sfc \in [\C]} \hat{\Sigma}^{0, \sfc}$.

To prove Theorem \ref{thm:gen_sc_gamp_SE}, we first show that for $t \ge 0$ and $\sfr \in [\R]$, $\sfc \in [\C]$:
\begin{align}
 &   \begin{bmatrix}  &  \left[  \Sigma^{t} \right]_{\sfr \sfr} &  \left[  \Sigma^{t} \right]_{\sfr \,  (\sfr+\R)}  \\
  &  \left[  \Sigma^{t} \right]_{(\sfr+\R) \,  \sfr}  &   \left[  \Sigma^{t} \right]_{(\sfr+\R) \,  (\sfr+\R)}
  \end{bmatrix} \,  = \,  \Lambda_\sfr(t),  \quad   \text{  and }   \quad  \left[  \Omega^{t+1} \right]_{\sfc \sfc} \, = \, \tq_{\sfc}(t+1),
  \label{eq:SE_equiv}
\end{align}
where the quantities on the right of the equalities are the state evolution parameters of SC-GAMP defined in \eqref{eq:tq_t1_def}-\eqref{eq:Lambda22}.
This implies that $(G^t_{\sfr, \sfr}, \, G^t_{\sfr, \sfr+R}) \disteq  (P_{\sfr}(t), Z_{\sfr})$ and $\bar{G}^{t+1}_{\sfc, \sfc} \disteq G^q_{\sfc}(t+1)$.
We then show 
that for $t \ge 0$:
\begin{align}
& e^t_{i, \sfr} = p_i(t), \quad e^t_{i, \sfr + \Lr} = z_i,
\quad \text{ for } i \in \mc{I}_{\sfr}, \ \sfr \in [\Lr], \label{eq:eti_claim}\\ 
& h^{t+1}_{j, \sfc} + \mu_{\sfc}^q(t+1)  \, x_j   = q_j(t+1), \quad \text{ for } j \in \mc{J}_{\sfc} , \ \sfc \in [\Lc].
\label{eq:ht1j_claim}
\end{align}
Here $p_i(t), q_j(t+1)$ denote the $i$th and $j$th component, respectively, of the spatially coupled GAMP iterates $\bp(t)$ and $\bq(t)$. Theorem \ref{thm:gen_sc_gamp_SE} follows by using \eqref{eq:SE_equiv}-\eqref{eq:ht1j_claim} in Theorem  \ref{thm:SE_gen}, the state evolution result for the abstract AMP.

We now prove  \eqref{eq:SE_equiv} and then \eqref{eq:eti_claim}-\eqref{eq:ht1j_claim}.
\paragraph{Proof of \eqref{eq:SE_equiv}.} For $t = 0$, from \eqref{eq:hSigma0} we have
\begin{equation}
\begin{split}
      \left[  \hat{\Sigma}^{0, \sfc} \right]_{\sfr \sfr}
  &=  \frac{1}{\delta}\R \, \Xi^{\sfc}_{11} W_{\sfr \sfc} \, , \qquad 
   \left[  \hat{\Sigma}^{0, \sfc} \right]_{\sfr (\sfr+R)}
   = \frac{1}{\delta}\R \, \Xi^{\sfc}_{12} W_{\sfr \sfc},\\
   \left[  \hat{\Sigma}^{0, \sfc} \right]_{(\sfr+R) \sfr}
  &=  \frac{1}{\delta}\R \, \Xi^{\sfc}_{12} W_{\sfr \sfc} \, , \qquad 
   \left[  \hat{\Sigma}^{0, \sfc} \right]_{(\sfr+R) (\sfr+R)}
   = \frac{1}{\delta}\R \, \Xi^{\sfc}_{22} W_{\sfr \sfc}.
   \end{split}
\end{equation}
Therefore
\begin{equation}
\begin{split}
   &\left[  \Sigma^{0} \right]_{\sfr \sfr}
  =  \frac{1}{\delta} \frac{\R}{\C} \sum_{\sfc \in [\C]} \Xi^{\sfc}_{11} W_{\sfr \sfc} \, , \quad 
   \hspace{-0.5em}\left[ \Sigma^{0} \right]_{\sfr (\sfr+R)}=\left[ \Sigma^{0} \right]_{ (\sfr+R)\sfr}
   = \frac{1}{\delta} \frac{\R}{\C} \sum_{\sfc \in [\C]} \Xi^{\sfc}_{12} W_{\sfr \sfc} \, , \\
   &\left[ \Sigma^{0} \right]_{ (\sfr+R)(\sfr+R)}
   = \frac{1}{\delta} \frac{\R}{\C} \sum_{\sfc \in [\C]} \Xi^{\sfc}_{22} W_{\sfr \sfc}.
   \end{split}
   \label{eq:Sig0_step}
\end{equation}
Recalling the definition of $\Lambda_\sfr(0)$ from \eqref{eq:Lambda0_def} and noting that $\delin = \delta \C/\R$, we see that the first claim in \eqref{eq:SE_equiv} holds for $t=0$. Assume towards induction that for some $t \ge 0$ we have 
\begin{align}
 &   \begin{bmatrix}  &  \left[  \Sigma^{t} \right]_{\sfr \sfr} &  \left[  \Sigma^{t} \right]_{\sfr \,  (\sfr+\R)}  \\
  &  \left[  \Sigma^{t} \right]_{(\sfr+\R) \,  \sfr}  &   \left[  \Sigma^{t} \right]_{(\sfr+\R) \,  (\sfr+\R)}
  \end{bmatrix}  = \,  \Lambda_\sfr(t), 
 \quad  \text{ and  } \quad \left[  \Omega^{t} \right]_{\sfc \sfc} \, = \, \tq_{\sfc}(t).
 \label{eq:SE_equiv_hyp}
\end{align}
(The second part of the induction hypothesis \eqref{eq:SE_equiv_hyp} is ignored for $t=0$.) From \eqref{eq:hOmegat1_def} we have: 
\begin{align}
\left[ \Omega^{t+1} \right]_{\sfc \sfc} = 
\frac{1}{\R} \sum_{\sfr \in [\R]} \left[ \hat{\Omega}^{t+1, \sfr} \right]_{\sfc \, \sfc}
&  =  (\alpha^q_{\sfc}(t+1))^2 \sum_{\sfr \in [\R]}
W_{\sfr \sfc}  \E\{ \gout(G^t_{\sfr, \sfr}, \, \varphi(G^t_{\sfr, \sfr+\R}, \bar{\veps} ), \, \sfr \, ; t )^2 \} \nonumber \\
& = \tq_{\sfc}(t+1),
\label{eq:Omegat1_step}
\end{align}
where the last equality is obtained using the definitions of $\alpha^q_{\sfc}(t+1)$  in  \eqref{eq:alpha_qtq_def} and $\tq_{\sfc}(t)$ in  \eqref{eq:tq_t1_def}, and the induction hypothesis  which implies that  $(G^t_{\sfr, \sfr}, \, G^t_{\sfr, \sfr+R}) \disteq  (P_{\sfr}(t), Z_{\sfr})$.

Next, from \eqref{eq:hSigmat1_def} we have that 
\begin{align}
   & \left[  \Sigma^{t+1} \right]_{\sfr \sfr} \, 
= \frac{1}{\delta\C} \sum_{\sfc \in [\C]} \left[  \hat{\Sigma}^{t+1, \sfc} \right]_{\sfr \sfr} = \frac{\R}{\delta \C}  \sum_{\sfc \in [\C]} W_{\sfr \sfc} \, \E \left\{ \gin( \bar{G}^t_{\sfc, \sfc} + \mu^q_{\sfc}(t)X, \sfc \, ; t)^2 \right\} \,  , \label{eq:Sigt1_rr}    \\
   & \left[  \Sigma^{t+1} \right]_{\sfr \, (\sfr+\R)}
   =   \left[  \Sigma^{t+1} \right]_{(\sfr+\R) \, \sfr}  = \frac{1}{\delta\C} \sum_{\sfc \in [\C]} \left[  \hat{\Sigma}^{t+1, \sfc} \right]_{\sfr \, (\sfr+\R)} \nonumber \\
& \hspace{1.8in}= \frac{\R}{\delta \C}  \sum_{\sfc \in [\C]} W_{\sfr \sfc} \, \E \left\{ X \gin( \bar{G}^t_{\sfc, \sfc} + \mu^q_{\sfc}(t)X, \sfc \, ; t) \right\}, \label{eq:Sigt1_rrR}  \\
   & \left[  \Sigma^{t+1} \right]_{(\sfr+\R) \,  (\sfr+\R)} \, 
   =  \frac{1}{\delta \C} \sum_{\sfc \in [\C]} \left[  \hat{\Sigma}^{t+1, \sfc} \right]_{(\sfr + \R) \,  (\sfr+\R)} =\frac{\R \,  \E\{ X^2\}}{\delta \C}  \sum_{\sfc \in [\C]} W_{\sfr \sfc}  = \E\{ Z_{\sfr}^2\}. \label{eq:Sigt1_rRrR}
\end{align}
From the induction hypothesis, we have that $\bar{G}^t_{\sfc, \sfc} = G_{\sfc}^q(t)$. Therefore, comparing the expressions in \eqref{eq:Sigt1_rr}-\eqref{eq:Sigt1_rRrR} with the entries of $\Lambda_{\sfr}(t+1)$ defined in \eqref{eq:Lambda11}-\eqref{eq:Lambda22}, we conclude that
\begin{align}
  &   \begin{bmatrix}  &  \left[  \Sigma^{t+1} \right]_{\sfr \sfr} &  \left[  \Sigma^{t+1} \right]_{\sfr \,  (\sfr+\R)}  \\
  &  \left[  \Sigma^{t+1} \right]_{(\sfr+\R) \,  \sfr}  &   \left[  \Sigma^{t+1} \right]_{(\sfr+\R) \,  (\sfr+\R)}
  \end{bmatrix} \,  = \,  \Lambda_\sfr(t+1).\label{eq:Sigt1_full}
\end{align}
This completes the induction step and hence, proves \eqref{eq:SE_equiv}. 

\paragraph{Proof of \eqref{eq:eti_claim} and \eqref{eq:ht1j_claim}.}  At $t=0$,   the algorithm is initialized with $\bi^0$ defined in \eqref{eq:i0_init},  and   $\be^0 = \bM  \bi^0 \in  \reals^{m \times 2\R}$. For $\sfr \in [\R]$ and $i \in \mc{I}_\sfr$, consider the $\sfr$th and the $(\sfr+R)$th entries of the  $i$th row of $\be^0$. Writing the sum over $j$ from $1$ to $n$ as a double sum, we have that:
\begin{align}
& e^0_{i, \sfr}  = \sum_{\sfc = 1}^{\C}  \sum_{j \in \mc{J}_\sfc}  M_{ij} \sqrt{R \, W_{\sfr\sfc}} \, x_j(0),  \nonumber \\
& e^0_{i, \sfr + \Lr}  = \sum_{\sfc = 1}^{\C}  \sum_{j \in \mc{J}_\sfc} M_{ij} \sqrt{R \, W_{\sfr\sfc}} \, x_j.  \label{eq:e0_irR} 
\end{align}
Recalling the definition of $M_{ij}$ in \eqref{eq:M_def},  we  have that  
\begin{align}
e^0_{i, \sfr} =  \sum_{j=1}^n A_{ij}x_j(0)  = p_i(0), \qquad e^0_{i, \sfr + \Lr} = \sum_{j=1}^n A_{ij}x_j  = z_i,
\end{align}
where we used $\bp(0) = \bA \bx(0)$ and $\bz = \bA \bx$. 
Assume towards induction that for some $t \ge 0$, we have:
\begin{align}
& e^t_{i, \sfr} = p_i(t), \quad e^t_{i, \sfr + \Lr} = z_i,
\quad \text{ for } i \in \mc{I}_{\sfr}, \ \sfr \in [\Lr], \label{eq:eti_hyp}\\ 
& h^{t}_{j, \sfc} + \mu_{\sfc}^q(t)  \, x_j   = q_j(t), \quad \text{ for } j \in \mc{J}_{\sfc} , \ \sfc \in [\Lc].
\label{eq:htj_hyp}
\end{align}
(For $t=0$, we only assume \eqref{eq:eti_hyp}.)

From \eqref{eq:Ht_update}, the $j$th row of $\bh^{t+1}$, for $j \in \mc{J}_{\sfc}$, is given by 
\begin{equation}
h^{t+1}_j = \sum_{i=1}^n M_{ij} \,  \tg_t(e_i^t,\veps_i, \sfr(i) )
\, - \, \tf_t(h_j^t, x_j, \sfc) \bar{\sD}_t^{\sT},
\label{eq:htj_exp}
\end{equation}
where from \eqref{eq:det_onsager},
\begin{align}
   \bar{\sD}_t =  \frac{1}{\R} \sum_{\sfr =1}^{\R} \E \{ \tg_t'(G^t_{\sfr}, \, \bar{\veps},  \, \sfr) \} \, \in \reals^{\Lc \times 2 \Lr }.
\end{align}
Here $G^t_{\sfr} \sim \normal(0, \Sigma^t)$ is independent of the noise distribution $W$, with $\Sigma^t \in \reals^{2\Lr \times 2\Lr}$ defined according to the state evolution recursion in 
\eqref{eq:Omega_t1}-\eqref{eq:Gt1_c}. Using the definition of $\tg_t$ in \eqref{eq:gt_choice}, the entries of its Jacobian are:  
\begin{align}
[ \tg_t'(G^t_{\sfr}, \, \bar{\veps},  \, \sfr) ]_{k \ell} 
= \begin{cases}
    \sqrt{R} \sqrt{W_{\sfr k}} \, \alpha^q_k(t+1) \, \goutpr(G^t_{\sfr, \sfr}, \, \varphi(G^t_{\sfr, \sfr+\R}, W ), \, \sfr \, ; t ), & \text{ for } k \in [\C],  \, \ell = \sfr, \\
    \sqrt{R} \sqrt{W_{\sfr k}} \,  \alpha^q_k(t+1) \, \partial_z \gout(G^t_{\sfr, \sfr}, \, \varphi(G^t_{\sfr, \sfr+\R}, \bar{\veps}),  \, \sfr \, ; t ), & \text{ for } k \in [\C],  \, \ell = \sfr + \R, \\
     0, & \text{ otherwise.}
\end{cases}
\end{align}
Here we recall that $ \goutpr(p, \varphi(z, \veps) \, ; \, t)$ denotes the derivative with respect to  the first argument and $\partial_z \, \gout(p, \varphi(z, \veps) \, ; \, t)$ the derivative with respect to $z$.
Using this, we obtain that 
\begin{align}
   &[ \bar{\sD}_t ]_{\sfc \sfr}
   =  
    \frac{1}{\sqrt{\R}} 
        \sqrt{W_{\sfr \sfc}} \, \alpha^q_{\sfc}(t+1) \,
        \E\{ \goutpr(G^t_{\sfr, \sfr}, \, \varphi(G^t_{\sfr, \sfr+\R}, \bar{\veps}), \, \sfr \, ; t ) \},\quad  \sfc \in [\C],   \sfr \in [\R], \nonumber \\
        &[ \bar{\sD}_t ]_{\sfc (\sfr+\R)} =  \frac{1}{\sqrt{\R}} 
        \sqrt{W_{\sfr \sfc}} \, \alpha^q_{\sfc}(t+1) \, \E\{ \partial_z \gout(G^t_{\sfr, \sfr}, \, \varphi(G^t_{\sfr, \sfr+\R}, \bar{\veps}), \, \sfr \, ; t ) \}, \quad \sfc \in [\C],   \sfr \in [\R]. \label{eq:barDt_exp01}
\end{align}
Using \eqref{eq:barDt_exp01} in \eqref{eq:htj_exp} along with the definition of $\tf_t$ from \eqref{eq:ft_choice}, we obtain
\begin{equation}
\begin{split}
   &h^{t+1}_{j, \sfc}  =  \sum_{ \sfr = 1}^{\R} \sum_{i \in \mc{I}_\sfr} M_{ij}  \sqrt{\R W_{\sfr c}} \, \alpha_{\sfc}^q(t+1) \, \gout(e^t_{i, \sfr}, \varphi(e^t_{i, \sfr + \Lr}, \, \veps_i) , \sfr ; t)  \\
   &   \qquad  - \, \gin(h_{j, \sfc}^t + \mu_{\sfc}^q(t)  \, x_j, \sfc \, ; \, t ) \,  \alpha_{\sfc}^q(t+1) \sum_{\sfr=1}^{\Lr} W_{\sfr \sfc} \E\{ \goutpr(G^t_{\sfr, \sfr}, \, \varphi(G^t_{\sfr, \sfr+\R}, \bar{\veps}), \, \sfr \, ; t ) \} 
   \,  \\
   & \qquad   - \, x_j 
    \alpha_{\sfc}^q(t+1) \sum_{\sfr=1}^{\Lr} W_{\sfr \sfc} \E\{ \partial_z \gout(G^t_{\sfr, \sfr}, \, \varphi(G^t_{\sfr, \sfr+\R}, \bar{\veps}), \, \sfr \, ; t ) \}.
    \end{split}
    \label{eq:ht1_jc0}
\end{equation}
By the induction hypothesis, we have that
$e^{t}_{i, \sfr} = p_i(t)$ and $ e^t_{i, \sfr + \Lr} =  z_i$, for $i \in \mc{I}_{\sfr}$, and $h^t_{j, \sfc} + \mu^q_{\sfc}(t)x_j    = q_j(t) $ for $j \in \mc{J}_{\sfc}$. (For $t=0$,  we only assume the hypothesis on $e^t$, and the formula in \eqref{eq:ht1_jc0} holds for $h^{1}_{j, \sfc}$ with  the term $\gin(h_{j, \sfc}^t + \mu_{\sfc}^q(t)  x_j, \sfc  ; t)$  replaced by $x_j(0)$.) Moreover by \eqref{eq:SE_equiv}, we have $(G^t_{\sfr, \sfr+\Lr}, G^t_{\sfr, \sfr}) \disteq (Z_\sfr, P_\sfr(t))$.  Using these along with the formulas for $\mu^q_{\sfc}(t+1)$ and $\alpha_{\sfc}^q(t+1)$ in \eqref{eq:muq_t1_def} and \eqref{eq:alpha_qtq_def}, we obtain:
\begin{align}
    h^{t+1}_{j, \sfc} & = \alpha_{\sfc}^q(t+1) \,  \sum_{i=1}^n A_{ij} \gout(p_i(t), \, y_i, \, \sfr(i); \, t) \,  + \, \gin(q_j(t), \sfc; \, t)
    \, - \, \mu^q_{\sfc}(t+1)  x_j \nonumber \\
    & = q_j(t+1) \, - \, \mu^q_{\sfc}(t+1)  x_j.
    \label{eq:ht1_step}
\end{align}
(For $t=0$, the term $\gin(q_j(t), \sfc; \, t)$ is replaced by $x_j(0)$, and both equalities in  \eqref{eq:ht1_step} still hold because of the SC-GAMP initialization $\bx(0) \equiv \bargin(\bq(0) \, ; \, 0 )$.)

Next consider the $i$th row of $\be^{t+1}$, for $i \in \mc{I}_\sfr$, which from \eqref{eq:Et_update}, is given by 
\begin{equation}
e^{t+1}_i = \sum_{j=1}^n M_{ij} \, \tf_{t+1}(h_j^{t+1}, x_j, \sfc(j)) \, - \, \tg_{t}(e^{t}_i, \gamma_i, \sfr) \,  \bar{\sB}_{t+1}^{\sT}
\label{eq:eti_exp}
\end{equation}
where from \eqref{eq:det_onsager},  
\begin{align}
   \bar{\sB}_{t+1} = \frac{1}{\delta} \frac{1}{\C} \sum_{\sfc =1}^{\C} \E \{ \tf_{t+1}'(\bar{G}^{t+1}_{\sfc}, X,  \, \sfc) \} \, \in \reals^{2 \Lr \times \Lc}.
\end{align}
Here $\bar{G}^{t+1}_{\sfc} \sim \normal(0, \Omega^{t+1})$ is independent of the signal prior $X$, with $\Omega^{t+1} \in \reals^{\Lc \times \Lc}$ defined according to the state evolution recursion in 
\eqref{eq:Omega_t1}-\eqref{eq:Gt1_c}. From the definition of $\tf_t$ in \eqref{eq:ft_choice}, the entries of its Jacobian are: 
\begin{equation}
   [\tf_{t+1}'(\bar{G}^{t+1}_{\sfc}, X,  \, \sfc)]_{k   \ell} = \begin{cases}
        \sqrt{\Lr} \, \ginpr(\bar{G}^{t+1}_{\sfc, \sfc} + \mu^q_{\sfc}(t+1) X, \sfc \, ; t+1) \sqrt{W_{k \sfc}},  &    \text{ for } k \in [\Lr],  \ell= \sfc, \\
       0, &  \text{ otherwise}.
   \end{cases}
\end{equation}
Therefore, 
\begin{align}
   [ \bar{\sB}_{t+1} ]_{k \sfc}  =  
   \frac{1}{\delta} \frac{1}{\C} 
   \begin{cases}
        \sqrt{\Lr} \, \E\{ \ginpr(\bar{G}^{t+1}_{\sfc, \sfc} + \mu^q_{\sfc}(t+1)X, \, \sfc \, ; t+1) \} \sqrt{W_{k \sfc}},  &    \text{ for } k \in [\Lr],  \sfc \in [\Lc], \\
       0, &  \text{ otherwise}.
   \end{cases}
   \label{eq:barBt_exp}
\end{align}
We now use \eqref{eq:barBt_exp} in \eqref{eq:eti_exp} along with the definition of $\tg_{t}$ in \eqref{eq:gt_choice}. For $i \in \mc{I}_{\sfr}$, $r \in [\Lr]$, we have:
\begin{align}
& e^{t+1}_{i, \sfr + \Lr}  = \sum_{\sfc = 1}^{\C}  \sum_{j \in \mc{J}_\sfc} M_{ij} \sqrt{R \, W_{\sfr\sfc}} \, x_j = \sum_{j=1}^n A_{ij}x_j  = z_i \, , \label{eq:et_irR} \\
& e^{t+1}_{i, \sfr} = \sum_{\sfc = 1}^{\C}  \sum_{j \in \mc{J}_\sfc}  M_{ij} \sqrt{R \, W_{\sfr\sfc}}
\, \gin( h^{t+1}_{j, \sfc} + \mu^q_{\sfc}(t+1)x_j, \sfc; \, t+1) \nonumber \\
& \qquad  - \gout(e^{t}_{i, \sfr}, \varphi(e^t_{i, \sfr+ \Lr}, \, \veps_i) , \sfr \, ;  t) \,  \frac{\Lr}{\delta \Lc} \sum_{\sfc =1}^{\Lc} W_{\sfr \sfc} \,  \alpha_{\sfc}^q(t+1) \, \E\{ \ginpr(\bar{G}^{t+1}_{\sfc, \sfc} + \mu^q_{\sfc}(t+1)X, \sfc \, ; t+1) \}. \label{eq:et_ir}
\end{align}
Next, by the induction hypothesis we have that 
$e^{t}_{i, \sfr} = p_i(t)$ for $i \in \mc{I}_\sfr$, and we have shown above  that $h^{t+1}_{j, \sfc} + \mu^q_{\sfc}(t+1)x_j    = q_j(t+1) $  for $j \in \mc{J}_{\sfc}$. and from \eqref{eq:SE_equiv} we have $\bar{G}^{t+1}_{\sfc, \sfc} + \mu^q_{\sfc}(t+1)X \disteq Q_\sfc(t+1)$. Using these in \eqref{eq:et_ir} along with $\delin = \frac{\Lr}{\delta \Lc}$ and the formula for $\alpha_{\sfr}^p(t+1)$ from \eqref{eq:alpha_pt_def},  we obtain: 
\begin{align}
   e^{t+1}_{i, \sfr} &  =  \sum_{j=1}^n A_{ij} \, \gin( q_j(t+1), \sfc(j); \, t+1)
   \, - \, \gout(p_i(t), y_i,  \sfr \, ;  t) \,  \alpha^p_{\sfr}(t)\nonumber \\
   & = p_i(t+1),
\end{align}
where the last equality follows from \eqref{eq:SC-GAMP2}.

This completes the proof of the claims in \eqref{eq:eti_claim} and \eqref{eq:ht1j_claim}, and Theorem \ref{thm:gen_sc_gamp_SE} follows. \qed

\subsection{Proof of Theorem \ref{thm:SC_FPs}} \label{subsec:proof_SC_FPs}

\subsubsection{Proof of Part 1} 
From \eqref{eq:SC-GAMP-MSE}, for $t \ge 1$ we almost surely have:
\begin{equation}
\begin{split}
   &  \lim_{n \to \infty} \,  \frac{\| \bx - \bargin^*(\bq(t))\|^2}{n}  =  \frac{1}{\C} \sum_{\sfc=1}^{\C} \, \E\{ (X - \ginbayes(Q_{\sfc}(t), \sfc ; t ) )^2\} \ \text{ a.s.}
    \label{eq:Bayes-SC-GAMP-MSE}
    \end{split}
\end{equation}
Here we recall from \eqref{eq:PtQt_def} that   $Q_{\sfc}(t)  =  \mu^{q}_\sfc(t) X \, +  \, G^q_{\sfc}(t)$ where $X \sim P_X \, \perp \,   G^q_{\sfc}(t)\sim  \normal(0, \tau^q_{\sfc}(t))$. The following corollary  specifies the state evolution recursion for Bayes SC-GAMP. 
\begin{corr} 
\label{cor:BayesSE}
For  a base matrix $\bW$ satisfying \eqref{eq:Wrc_assumptions}, the state evolution parameters of the Bayes SC-GAMP (defined via the functions $\ginbayes,\goutbayes$ in \eqref{eq:ginbayes}-\eqref{eq:goutbayes}) satisfy the following for $t \ge 1$ and $\sfr \in [\R]$ and $\sfc \in [\C]$:
\begin{align}
   &  \mu_{\sfc}^q(t) = \mu_{\sfr}^p(t) =1, \label{eq:muc_mur_Bayes}\\
   &\frac{1}{ \tq_{\sfc}(t)} =  \sum_{\sfr =1}^{\R} W_{\sfr \sfc}  \frac{1}{\tp_{\sfr}(t-1)}\left(1  -     \frac{1}{\tp_{\sfr}(t-1)}\E_{P_{\sfr}, Y_{\sfr}}  \left\{\var{(Z_{\sfr} \mid P_{\sfr}, Y_\sfr ;  \, \tp_{\sfr}(t-1))} \right\}\right),
  \label{eq:mu_tau_q_Bayes} \\
    &  \tp_{\sfr}(t) = \frac{1}{\delin} \sum_{\sfc =1}^
    {\C}  W_{\sfr \sfc} \,  \mmse\big(\tq_{\sfc}(t)^{-1} \big),  \label{eq:mu_tau_p_Bayes} \\
       & \alpha^q_{\sfc}(t) = \tq_{\sfc}(t),   \qquad  \alpha^p_{\sfr}(t)   
       =  \tp_\sfr(t).   
       \label{eq:alpha_pq_bayes}
\end{align}
Here, for $s >0$,
\begin{align}
        \mmse\big(s \big) =& \E\left\{ \left(X \, - \, \E\Big\{  X  \mid \sqrt{s} X + \,  G \Big\} \right)^2 \right\}, \quad G \sim \normal(0,1) \, \perp \,  X \sim P_X,
        \label{eq:mmse_scalar_def}
\end{align}
and 
$Y_{\sfr}=\varphi(Z_{\sfr}, \, \bar{\veps})$ with $\veps \sim P_{\bar{\veps}} \perp Z_{\sfr}$. The term $\var{(Z_{\sfr} \mid P_{\sfr}, Y_\sfr ;  \tp_{\sfr}(t-1))}$ denotes the conditional variance computed with   $(P_{\sfr}, Z_{\sfr}) \sim \normal(0, \Lambda_r(t-1))$ where
\begin{align}
   & \Lambda_r(t-1) =   \begin{bmatrix}
   \E\{Z_{\sfr}^2 \} - \tp_{\sfr}(t-1)  &   \E\{Z_{\sfr}^2 \} - \tp_{\sfr}(t-1)  \\
  \E\{Z_{\sfr}^2 \} - \tp_{\sfr}(t-1)  &  \E\{Z_{\sfr}^2 \} 
   \end{bmatrix}.
   \label{eq:Lrt1_def}
\end{align}
The iteration \eqref{eq:mu_tau_q_Bayes}-\eqref{eq:mu_tau_p_Bayes} is initialized with 
\begin{align} 
\tp_{\sfr}(0) =  \frac{\var(X)}{\delin} \sum_{\sfc=1}^{\C} W_{\sfr \sfc} \,. 
\label{eq:BayesSEinit}
\end{align}
\end{corr}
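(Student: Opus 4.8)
The plan is to substitute the Bayes denoisers \eqref{eq:ginbayes}--\eqref{eq:goutbayes} into the generic state evolution \eqref{eq:tq_t1_def}--\eqref{eq:Lambda22} and the Onsager coefficients \eqref{eq:alpha_qtq_def}--\eqref{eq:alpha_pt_def}, then simplify the resulting expectations using the tower property of conditional means and Gaussian integration by parts. I would organize the whole argument as an induction whose load-bearing hypothesis is the single scalar identity $\mu^p_\sfr(t)=1$ for all $\sfr$. For the base case $t=0$, compute $\Xi_\sfc$ from the initialization $\bx(0)=\E\{X\}\mathbf{1}_n$, which gives $[\Xi_\sfc]_{11}=[\Xi_\sfc]_{12}=(\E\{X\})^2$ and $[\Xi_\sfc]_{22}=\E\{X^2\}$; inserting these into \eqref{eq:Lambda0_def} forces $[\Lambda_\sfr(0)]_{11}=[\Lambda_\sfr(0)]_{12}$, hence $\mu^p_\sfr(0)=1$ by \eqref{eq:mup_t1_def}, while $\tp_\sfr(0)=[\Lambda_\sfr(0)]_{22}-[\Lambda_\sfr(0)]_{11}$ reduces immediately to \eqref{eq:BayesSEinit}.

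For the inductive step, assume $\mu^p_\sfr(t)=1$. By \eqref{eq:PZ_dist_alt} this lets me write $Z_\sfr=P_\sfr(t)+G^p_\sfr(t)$ with $G^p_\sfr(t)\sim\normal(0,\tp_\sfr(t))$ independent of $P_\sfr(t)$, so the output denoiser collapses to $\goutbayes(P_\sfr(t),Y_\sfr,\sfr;t)=\tfrac{1}{\tp_\sfr(t)}\E\{G^p_\sfr(t)\mid P_\sfr(t),Y_\sfr\}$. I then establish three output-channel identities. First, differentiating the posterior $\pi(z\mid p,y)\propto e^{-(z-p)^2/(2\tp_\sfr(t))}\Pout(y\mid z)$ in $p$ yields $\E\{\goutpr\}=\frac{1}{\tp_\sfr(t)^2}\E\{\var(Z_\sfr\mid P_\sfr,Y_\sfr)\}-\frac{1}{\tp_\sfr(t)}$. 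Second, the law of total variance, using $\var(Z_\sfr\mid P_\sfr,Y_\sfr)=\var(G^p_\sfr(t)\mid P_\sfr,Y_\sfr)$ and $\E\{G^p_\sfr(t)\}=0$, gives $\E\{(\goutbayes)^2\}=\frac{1}{\tp_\sfr(t)^2}\E\{(\E\{G^p_\sfr(t)\mid P_\sfr,Y_\sfr\})^2\}=-\E\{\goutpr\}$. Third, Gaussian integration by parts in $G^p_\sfr(t)$ gives $\E\{G^p_\sfr(t)\,\goutbayes\}=\tp_\sfr(t)\E\{(\goutbayes)^2\}$ on one side (tower property) and $\E\{G^p_\sfr(t)\,\goutbayes\}=\tp_\sfr(t)\E\{\partial_z\goutbayes\}$ on the other (Stein), whence $\E\{\partial_z\goutbayes\}=\E\{(\goutbayes)^2\}$. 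Identity three inserted in \eqref{eq:muq_t1_def} gives $\mu^q_\sfc(t+1)=1$; identity two makes the summands of \eqref{eq:tq_t1_def} and \eqref{eq:alpha_qtq_def} agree, giving $\alpha^q_\sfc(t+1)=\tq_\sfc(t+1)$; and identity one rewrites $\alpha^q_\sfc(t+1)=-(\sum_\sfr W_{\sfr\sfc}\E\{\goutpr\})^{-1}$ as exactly \eqref{eq:mu_tau_q_Bayes}.

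Next I would run the input-channel half at iteration $t+1$. Since $\mu^q_\sfc(t+1)=1$ we have $Q_\sfc(t+1)=X+G^q_\sfc(t+1)$, and rescaling by $\tq_\sfc(t+1)^{-1/2}$ shows that estimating $X$ from $Q_\sfc(t+1)$ is equivalent to the scalar channel $\sqrt{\tq_\sfc(t+1)^{-1}}\,X+G_0$ defining $\mmse(\cdot)$ in \eqref{eq:mmse_scalar_def}. The tower property gives $\E\{X\,\ginbayes(Q_\sfc(t+1))\}=\E\{(\ginbayes(Q_\sfc(t+1)))^2\}$, so \eqref{eq:Lambda11}--\eqref{eq:Lambda12} force $[\Lambda_\sfr(t+1)]_{11}=[\Lambda_\sfr(t+1)]_{12}$; this closes the induction via $\mu^p_\sfr(t+1)=1$ and simultaneously gives $\tp_\sfr(t+1)=[\Lambda_\sfr(t+1)]_{22}-[\Lambda_\sfr(t+1)]_{11}=\frac{1}{\delin}\sum_\sfc W_{\sfr\sfc}(\E\{X^2\}-\E\{(\ginbayes(Q_\sfc(t+1)))^2\})=\frac{1}{\delin}\sum_\sfc W_{\sfr\sfc}\,\mmse(\tq_\sfc(t+1)^{-1})$, which is \eqref{eq:mu_tau_p_Bayes}. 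Finally, Stein's lemma applied to $G^q_\sfc(t+1)$ gives $\tq_\sfc(t+1)\,\E\{\ginpr(Q_\sfc(t+1))\}=\mmse(\tq_\sfc(t+1)^{-1})$; substituting this together with $\alpha^q_\sfc(t+1)=\tq_\sfc(t+1)$ into \eqref{eq:alpha_pt_def} collapses it to $\alpha^p_\sfr(t+1)=\tp_\sfr(t+1)$, completing \eqref{eq:alpha_pq_bayes}.

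The main obstacle is the block of output-channel identities, since this is where the Bayes structure genuinely enters and where technical care is required. Identity one demands differentiating a conditional expectation under the integral sign, and identity three is an integration-by-parts statement in which the $z$-derivative is a \emph{composite} derivative through $y=\varphi(z,\veps)$, so the chain-rule step $\partial_G[\goutbayes(P_\sfr(t),\varphi(P_\sfr(t)+G^p_\sfr(t),\bar{\veps}),\sfr;t)]=\partial_z\goutbayes$ and the Gaussian integration by parts must be justified through the possibly non-smooth link $\varphi$. The needed regularity is supplied by the Gaussian smoothing inherent in the posterior means together with the Lipschitz assumption \textbf{(A1)}. By contrast, the input-channel identities are routine consequences of the tower property and Stein's lemma.
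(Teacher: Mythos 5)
Your proposal is correct and follows essentially the same route as the paper's proof in Appendix B: substitute the Bayes denoisers into the generic state evolution, use the tower property to get $[\Lambda_\sfr(t)]_{11}=[\Lambda_\sfr(t)]_{12}$ (hence $\mu^p_\sfr(t)=1$), and establish the chain $\E\{(\goutbayes)^2\}=-\E\{\goutpr\}=\E\{\partial_z\goutbayes\}$ via the conditional-variance computation and Stein's lemma applied to $G^p_\sfr(t)$. The only cosmetic differences are that you frame the argument as an explicit induction on $\mu^p_\sfr(t)=1$ (deriving the base case from $\bx(0)=\E\{X\}\mathbf{1}_n$, which the paper leaves implicit) and you derive the derivative identity \eqref{eq:goutB_der} from first principles where the paper cites Rangan.
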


The proof of the corollary is given in Appendix \ref{app:corr_proof}.  Using  Corollary \ref{cor:BayesSE} and recalling the expression for $\ginbayes$ from \eqref{eq:ginbayes}, the limiting MSE of Bayes SC-GAMP given by  \eqref{eq:Bayes-SC-GAMP-MSE} for $t \ge 1$ is given by:
\begin{align}
\lim_{n \to \infty} \,  \frac{\| \bx - \bargin^*(\bq(t))\|^2}{n} = \frac{1}{\C} \sum_{\sfc=1}^{\C}  \mmse\big(\tq_{\sfc}(t)^{-1} \big),
\label{eq:MSE_exp}
\end{align}
where $\tq_{\sfc}(t)$ for $c \in  [\C]$ is computed as specified in Corollary \ref{cor:BayesSE}. To prove \eqref{eq:omLam_MSE}, we  show that for all sufficiently large $t$, the term $\frac{1}{\C} \sum_{\sfc=1}^{\C}  \mmse\big(\tq_{\sfc}(t)^{-1} \big)$ is bounded by the right side of \eqref{eq:omLam_MSE}. We do this by analyzing the fixed point of the state evolution recursion in Corollary \ref{cor:BayesSE} using a general result by Yedla et al. \cite{yedla2014asimple} on fixed points of coupled recursions. 

\paragraph{Fixed points of a general coupled recursion.}

We review the coupled recursion analyzed in \cite{yedla2014asimple}. Consider a matrix $\boldsymbol{B}$ with dimensions $L_2\times L_1$ and $L_1 = L_2 + \omega - 1$, whose entries are:
\begin{equation}
    B_{j,k}\triangleq \begin{cases}
    \frac{1}{\omega}, & \text{if $j\leq k \leq j +\om- 1$}.\\
    0, & \text{otherwise.}\end{cases}
    \label{eq:coupledmatrix}
\end{equation} 
Let $f:\mathcal{Y}\to \mathcal{X}$ be a non-decreasing $C^1$ function  and $g:\mathcal{X}\to \mathcal{Y}$ be a strictly increasing $C^2$ function  with $y_\text{max}=g(x_{\max})$.  (For $k \ge 1$, a function is said to be $C^k$ if all derivatives up to order $k$ exist and are continuous.)
Define a coupled recursion of the form 
\begin{align}
    y_i{(t+1)}= &\  g\left(x_i{(t)}\right), \label{eq:coupled-rec1}\\
    x_i{(t+1)}= & \sum_{j=1}^{L_2} B_{j,i}\ f\left(\sum_{k=1}^{L_1} B_{j,k}y_k{(t+1)}\right), \label{eq:coupled-rec2}
\end{align} 
for $i \in [L_1]$.

The recursion is initialized by choosing $x_i{(0)} = x_{\max}$ for $i \in [L_1]$. This initialization, along with the monotonicity of $f$ and $g$, ensures that the coupled recursion converges to a fixed point \cite{yedla2014asimple}. The fixed point $\{ \lim_{t \to \infty} x_i(t) \}_{i \in [L_1]}$ is  characterized in terms of a \emph{potential} function $V(x)$ defined as:
\begin{equation}
    V(x) := xg(x) - G(x) - F(g(x)),
    \label{eq:PotFn_formula}
\end{equation}
where $F(x)=\int_{0}^x f(z) dz$ and $G(x)=\int_{0}^x g(z) dz$.
Then the fixed point of the coupled recursion can be bounded as follows.
\begin{lemma} \label{lem:Yedla1}\cite[Theorem 1]{yedla2014asimple}
For any $\gamma>0$, there is $\omega_0<\infty$ such that for all $\omega>\omega_0$ and $L_2\in [1,\infty]$, the fixed point $\{ \lim_{t \to \infty} x_i(t) \}_{i \in [L_1]}$ of the coupled recursion in \eqref{eq:coupled-rec1}-\eqref{eq:coupled-rec2} satisfies the upper bound
\begin{equation} \label{eq:YedlaThm1}
    \max_{i\in[1:L_1]} x_i{(\infty)}   \leq \max \left( \operatorname*{argmin}_{x\in \mathcal{X}} V(x)\right) +  \gamma.
\end{equation} 
\end{lemma}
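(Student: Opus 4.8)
The plan is to recover the potential-function argument of Yedla et al., which rests on a \emph{coupled} potential $U$ whose stationary points are exactly the fixed points of \eqref{eq:coupled-rec1}--\eqref{eq:coupled-rec2}, and on a one-coordinate spatial shift of the fixed-point profile. First I would establish a well-structured fixed point. Composing the two half-steps gives a coordinatewise monotone map (since $f$ is non-decreasing and $g$ strictly increasing), so from $x_i(0)=x_{\max}$ the iterates decrease coordinatewise to the largest fixed point $\mathbf{x}$. Because each length-$\omega$ averaging window has weight $1/\omega$ and the extreme indices lie in fewer windows, the boundary coordinates are pinned near the smallest fixed point and the profile is unimodal; it thus suffices to analyze one monotone front, on which consecutive increments are $O(1/\omega)$ because passing from $x_i$ to $x_{i+1}$ swaps only one term into and one out of a sliding average of a bounded quantity.

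Next I would introduce the two potentials. Differentiating $V(x)=xg(x)-G(x)-F(g(x))$ gives $V'(x)=g'(x)\bigl(x-f(g(x))\bigr)$, so, as $g'>0$, the stationary points of $V$ are exactly the scalar fixed points $x=f(g(x))$. For the coupled system I define
\[ U(\mathbf{x})=\sum_{i=1}^{L_1}\bigl[x_ig(x_i)-G(x_i)\bigr]-\sum_{j=1}^{L_2}F\!\left(\sum_{k=1}^{L_1}B_{j,k}\,g(x_k)\right), \]
and a direct computation gives $\partial U/\partial x_i=g'(x_i)\bigl(x_i-\sum_{j}B_{j,i}\,f(\sum_k B_{j,k}g(x_k))\bigr)$; hence $\nabla U(\mathbf{x})=\mathbf 0$ precisely at fixed points, and the limit $\mathbf{x}$ is a stationary point of $U$. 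Moreover $\|\nabla^2U\|_{\op}$ is bounded uniformly in the dimensions, since $\|B\|_{\op}\le 1$ and $f,g$ have bounded derivatives on the compact domain $\mathcal X$.

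The crux is the shift estimate. Let $\mathbf{x}^{s}$ be $\mathbf{x}$ translated by one coordinate along the monotone front, with the flat saturated levels used as the boundary extension. Using the identity $[Bg(\mathbf{x}^{s})]_j=[Bg(\mathbf{x})]_{j+1}$, both sums in $U$ telescope and only the flat endpoint terms remain, giving the \emph{exact} identity $U(\mathbf{x}^{s})-U(\mathbf{x})=V(\bar x)-V(x_\star)$, where $\bar x:=\max_i x_i$ is the bulk level and $x_\star$ the pinned boundary level. On the other hand, because $\nabla U(\mathbf{x})=\mathbf 0$, Taylor's theorem gives $|U(\mathbf{x}^{s})-U(\mathbf{x})|\le\tfrac12\|\nabla^2U\|_{\op}\sum_i(x_{i+1}-x_i)^2$, and the increment bound of the first step yields $\sum_i(x_{i+1}-x_i)^2\le x_{\max}\max_i|x_{i+1}-x_i|=O(1/\omega)$. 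The monotone convergence from the maximal initialization, together with the telescoping identity, fixes the sign and anchors the reference of the shift inequality at the global minimum, giving $V(\bar x)\le\min_{\mathcal X}V+C/\omega$ for a constant $C=C(f,g,\mathcal X)$. To conclude, suppose $\bar x>x^*+\gamma$ with $x^*:=\max(\argmin_{\mathcal X}V)$; since $V$ is continuous on the compact $\mathcal X$ and $x^*$ is its largest global minimizer, $\Delta:=\inf\{V(x)-\min_{\mathcal X}V:\,x\in\mathcal X,\ x\ge x^*+\gamma\}>0$, whence $\Delta\le V(\bar x)-\min_{\mathcal X}V\le C/\omega$, which is impossible once $\omega>\omega_0:=C/\Delta$. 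This gives \eqref{eq:YedlaThm1}.

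I expect the shift estimate to be the main obstacle. The delicate ingredients are: localizing to a single monotone front so the telescoping is exact with flat endpoint windows (this is what handles the two-sided pinning); fixing the sign and reference level of the resulting inequality from the monotone convergence, so the bound is against $\min_{\mathcal X}V$ rather than merely against $V(x_\star)$; and the uniform-in-dimension bound on $\|\nabla^2U\|_{\op}$ combined with the $O(1/\omega)$ front-width estimate, which together convert stationarity into the quantitative gap. By comparison, the monotone-convergence and the gradient/derivative computations are routine.
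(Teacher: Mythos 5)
The paper does not actually prove this lemma: it is quoted verbatim as \cite[Theorem 1]{yedla2014asimple} and used as a black box, so what you have written is a from-scratch reconstruction of the external result rather than something that can be checked against an in-paper argument. Your reconstruction does follow the architecture of Yedla et al.'s proof (monotone convergence from the saturated initialization, a coupled potential $U$ whose critical points are the fixed points, the one-position shift, the exact telescoping identity, the second-order Taylor bound using $\nabla U(\mathbf{x})=\mathbf 0$, and the $O(1/\omega)$ bound on $\sum_i (x_{i+1}-x_i)^2$), and the derivative computations $V'(x)=g'(x)(x-f(g(x)))$ and $\partial U/\partial x_i = g'(x_i)\bigl(x_i-\sum_j B_{j,i} f([Bg(\mathbf{x})]_j)\bigr)$ are correct.

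There is, however, a genuine gap at the step you yourself flag as delicate. The telescoping identity produces $U(\mathbf{x}^s)-U(\mathbf{x}) = \pm\bigl(V(\bar x)-V(x_\star)\bigr)$ where $x_\star$ is the \emph{left padding level} of the profile, and for the recursion \eqref{eq:coupled-rec2} the reduced boundary weights $\sum_j B_{j,i}<1$ pin that level at (effectively) $0$, not at a global minimizer of $V$. Combining with the Taylor bound therefore yields $|V(\bar x)-V(x_\star)|\le C/\omega$, which does \emph{not} imply $V(\bar x)\le \min_{\mc X}V + C/\omega$ unless $V(x_\star)\le \min_{\mc X}V$; asserting that ``monotone convergence fixes the reference at the global minimum'' does not close this, and this is exactly where the Maxwell-saturation content (bounding by the \emph{largest minimizer} rather than merely exploiting a positive energy gap relative to $V(0)$) resides. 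The standard repair, which is what Yedla et al. do, is to compare with a modified recursion saturated from below at $x^{**}:=\max(\argmin_{\mc X}V)$ (e.g., replace $f$ by $\max(f,x^{**})$): since $g'>0$ forces every interior stationary point of $V$ to satisfy $x^{**}=f(g(x^{**}))$, the constant profile at $x^{**}$ is a scalar fixed point, the modified fixed point dominates the original one by monotonicity, and its padding level is exactly $x^{**}$, so the telescoped reference becomes $V(x^{**})=\min_{\mc X}V$ and your contradiction argument then goes through. Secondary items that are stated but not established --- the reduction to a single monotone front with flat windows of width $\ge\omega$ at both ends (needed for the telescoping to be exact), the $\max_i|x_{i+1}-x_i|=O(1/\omega)$ increment bound, and the case of small $L_2$ covered by the statement --- are standard and handled by separate lemmas in the cited work, but they would need to be proved for the argument to be complete.
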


\emph{Proof of  \eqref{eq:omLam_MSE}}: 
For the rest of the proof, we will take $\bW$ to be an $(\omega, \Lambda)$ base matrix, for which $\R = \Lambda+ \omega -1$ and $\C = \Lambda$.
We rewrite the state evolution recursion in \eqref{eq:mu_tau_q_Bayes}-\eqref{eq:mu_tau_p_Bayes}  in terms of parameters $\bar{x}_\sfr(t), \bar{y}_\sfr(t)$,  defined as follows for $\sfr \in [\R]$.
For $t \ge 0$, let:
\begin{align}
& \bar{x}_\sfr(t) := \delin \tp_{\sfr}(t), \label{eq:GAMP-cr0} \\
    & \bar{y}_\sfr(t+1) : =  \frac{- \delin}{\bar{x}_\sfr(t)}\Bigg(1    -  \frac{\delin}{\bar{x}_\sfr(t)}\E_{P_{\sfr}, Y_{\sfr}}  \left\{\var\left(Z_{\sfr} \mid P_{\sfr}, Y_{\sfr}; \frac{\bar{x}_\sfr(t)}{\delin}\right) \right\}\Bigg). \label{eq:GAMP-cr1}
     \end{align}
Then, from \eqref{eq:mu_tau_q_Bayes}-\eqref{eq:mu_tau_p_Bayes}, we have that:
\begin{align}
    \bar{x}_\sfr(t+1) = \sum_{\sfc =1}^{\C} W_{\sfr \sfc} 
    \mmse\Big(-\sum_{\sfr=1}^{\R} W_{\sfr \sfc} \bar{y}
_\sfr(t+1) \Big).
    \label{eq:GAMP-cr2}
\end{align}
In \eqref{eq:GAMP-cr1}, the joint distribution of $(Z_\sfr, P_{\sfr}, Y_{\sfr})$ is given by $Y_{\sfr}=\varphi(Z_{\sfr}, \, \bar{\veps})$ with $\veps \sim P_{\bar{\veps}} \perp Z_{\sfr}$, and  $(P_{\sfr}, Z_{\sfr})$ jointly Gaussian with mean zero and covariance matrix
\begin{align}
   & \begin{bmatrix}
   \E\{Z_{\sfr}^2 \} - \frac{\bar{x}_\sfr(t)}{\delin}  &   \E\{Z_{\sfr}^2 \} - \frac{\bar{x}_\sfr(t)}{\delin}  \\
  \E\{Z_{\sfr}^2 \} - \frac{\bar{x}_\sfr(t)}{\delin}  &  \E\{Z_{\sfr}^2 \} 
   \end{bmatrix}.
   \label{eq:Lrt1_def_x}
\end{align}
Here we recall from \eqref{eq:Lambda22} that
$\E\{Z_{\sfr}^2 \}  = \frac{\E\{ X^2 \}}{\delin} \sum_{\sfc =1}^{\C} W_{\sfr\sfc}$.  For an $(\omega, \Lambda)$ base matrix, from  Definition \ref{def:ome_lamb_rho}, we have
\begin{equation}
\E\{Z_{\sfr}^2 \} = 
\begin{cases}
     \frac{\E\{ X^2 \}}{\delin}  \frac{\sfr}{\omega} \, , &  1 \le \sfr \le (\omega-1), \\
    \frac{\E\{ X^2 \}}{\delin} \, , & \omega \le \sfr \le \Lambda,  \\
    \frac{\E\{ X^2 \}}{\delin}  \frac{\Lambda + \omega - \sfr}{\omega} \, , & \Lambda+1 \le \sfr \le \Lambda + \omega -1.
\end{cases}
\label{eq:EZr2_cases}
\end{equation}

We now define a slightly modified recursion with parameters $x_\sfr(t), y_\sfr(t)$, where the $\E\{ Z_{\sfr}^2 \}$ is replaced by $\E\{Z^2 \} = \E\{ X^2 \}/\delin$. For $\sfr \in [\R]$, let: 
\begin{align}
    & y_\sfr(t+1) : =  \frac{- \delin}{x_\sfr(t)}\Bigg(1  -  \frac{\delin}{x_\sfr(t)}\E_{P, Y}  \left\{\var\left(Z \mid P, Y; \frac{x_\sfr(t)}{\delin}\right) \right\}\Bigg). \label{eq:GAMP-cr1-mod} \\
     & x_\sfr(t+1) : = \sum_{\sfc =1}^{\C} W_{\sfr \sfc} 
    \mmse\Big(-\sum_{\sfr=1}^{\R} W_{\sfr \sfc} 
    y_\sfr(t+1) \Big).     \label{eq:GAMP-cr2-mod}
\end{align}
This recursion is initialized with $ x_{\sfr}(0) = \var(X)$ for $\sfr \in [\R]$.

 The only difference between the recursions \eqref{eq:GAMP-cr1}-\eqref{eq:GAMP-cr2} and \eqref{eq:GAMP-cr1-mod}-\eqref{eq:GAMP-cr2-mod} is that the joint distribution of $(Z,P,Y)$ in the latter is given by $Y=\varphi(Z, \, \bar{\veps})$ with $\bar{\veps} \sim P_{\bar{\veps}} \perp Z$, and  $(P, Z)$ jointly Gaussian with mean zero and covariance matrix
\begin{align}
   & \begin{bmatrix}
   \frac{\E\{ X^2 \}}{\delin} - \frac{x_\sfr(t)}{\delin}  &    \frac{\E\{ X^2 \}}{\delin} - \frac{x_\sfr(t)}{\delin}  \\
   \frac{\E\{ X^2 \}}{\delin} - \frac{x_\sfr(t)}{\delin}  &   \frac{\E\{ X^2 \}}{\delin}
   \end{bmatrix}.
   \label{eq:Lt1_def_x}
\end{align}
Comparing the covariance of $(P,Z)$ with that of $(P_{\sfr}, Z_{\sfr})$  (given in \eqref{eq:Lrt1_def_x}-\eqref{eq:EZr2_cases}), the key difference is that $\E\{Z_{\sfr}^2 \}$ is less than  $\E\{ X^2 \}/\delin$ for $1 \le \sfr \le (\omega-1)$ and for $\Lambda+1 \le \sfr \le \Lambda+\omega-1$, and equal to $\E\{ X^2 \}/\delin$ for all other $\sfr$.

The modified recursion \eqref{eq:GAMP-cr1-mod}-\eqref{eq:GAMP-cr2-mod} is an instance of the coupled recursion in 
\eqref{eq:coupledmatrix}-\eqref{eq:coupled-rec2}, which can be seen by taking $\boldsymbol{B} = \bW^{\sT}$, $L_2= \C = \Lambda$, and 
\begin{align}
   &  g(x) = \frac{-\delin}{x}\Bigg( 1 -  \frac{\delin}{x}\E_{P, Y}  \left\{\var\left(Z \mid P, Y; \frac{x}{\delin}\right) \right\} \Bigg), \label{eq:gdef} \\
   &  f(y)= \mmse(-y). \label{eq:fdef}
\end{align}

Letting $x_{\max} = \var(X)$ and $\mc{X} = [0, x_{\max}], \,  \mc{Y} = (-\infty, g(x_{\max})]$, we claim  that  $f:\mc{Y} \to \mc{X}$ is non-decreasing, and $g:\mc{X} \to \mc{Y}$ is strictly increasing. For $f$, this follows by noting that $y <0$ and the function $\mmse(\cdot)$ function  in \eqref{eq:mmse_scalar_def} is non-increasing in its argument \cite{guo2011estimation, barbier2016proof}. To show the monotonicity of $g(x)$ in $[0, \var(X)]$, we express it in an alternative form:
\begin{align}
    g(x) =  \frac{-\delin}{x} \,  \E\left( 
    \E\left\{  G \mid P, Y; \, \frac{x}{\delin} 
    \right \}^2\right),
    \label{eq:gx_alt}
\end{align}
where $G \sim  \normal(0, 1)$ $\perp$ $P \sim \normal(0, (\E X^2 - x)/\delin)$, $Z=P + \sqrt{\frac{x}{\delin}}G$ and $Y= \varphi(Z, \bar{\veps})$. The equivalence between the expressions in \eqref{eq:gdef} and \eqref{eq:gx_alt} is shown in \eqref{eq:alt_gout_exp}-\eqref{eq:alt_gout_exp1} in Appendix \ref{app:pot_fun_equiv}.

The function $g(x)$ and its derivative were analyzed in \cite[Appendix B.2]{barbier2019optimal}.  (The function $\Psi'_{P_{\text{out}}}(q)$ in Proposition 20 of that paper equals $-\frac{1}{2}g(x)$, with $q=\E X^2  - \frac{x}{\delin}$.)
In particular, \cite[Proposition 18, Eq. (199)]{barbier2019optimal} shows that $g'(x) >0$ for $x \in [0, \var(X)]$, hence $g(x)$ is increasing.

Since the recursion $\{x_{\sfr}(t), y_{\sfr}(t)\}_{t \ge 1}$ is a special case of the coupled recursion, with  the conditions satisifed, it converges to a fixed point which can be characterized using Lemma \ref{lem:Yedla1}. Moreover, using the I-MMSE relationship \cite{guo2005mutual}
\[ 
\frac{d}{ds} I(X \, ; \, \sqrt{s} X + G)  = \frac{1}{2} \mmse(s), \quad G \sim \normal(0,1) \, \perp \,  X \sim P_X,
\]
the potential function $V(x)$ computed according to \eqref{eq:PotFn_formula} equals $U(x; \delin)$ defined in \eqref{eq:pot}. Thus, by Lemma \ref{lem:Yedla1}, for any $\tilde{\gamma} >0$, for sufficiently large $\omega$ the fixed points $x_{\sfr}(\infty) := \lim_{t \to \infty} x_{\sfr}(t)$ satisfy:
\begin{equation} \label{eq:YedlaThm1_r}
    \max_{\sfr \in [\R]} \,  x_{\sfr}(\infty)   \leq \max \left( \argmin_{ x \in [0, \var(X)]} U(x; \delin)\right) +  \tilde{\gamma}.
\end{equation}

We claim that for $t \ge 0$:
\begin{align}
    \bar{x}_\sfr(t) \le x_\sfr(t), \quad \sfr \in [\R ].
    \label{eq:barx_xbound}
\end{align}
We prove the claim by induction. For $t=0$, we have $x_\sfr(0) = \var(X)$, and from \eqref{eq:GAMP-cr0} and \eqref{eq:BayesSEinit}: 
$$\bar{x}_\sfr(0) = \var(X) \sum_{\sfc=1}^{\Lc} W_{\sfr \sfc}
= \begin{cases}
    \var(X)  \frac{\sfr}{\omega} \, , &  1 \le \sfr \le (\omega-1), \\
    \var(X) \, , & \omega \le \sfr \le \Lambda,  \\
    \var(X)  \frac{\Lambda + \omega - \sfr}{\omega} \, , & \Lambda+1 \le \sfr \le \Lambda + \omega -1
\end{cases}
\le \,  x_\sfr(0).$$
Assume towards induction that 
\eqref{eq:barx_xbound} holds for some $t \ge 0$. Using this, we show that:
\begin{equation}
    y_{\sfr}(t+1) = g(x_\sfr(t)) \ge g(\bar{x}_\sfr(t)) \ge 
    \bar{y}_\sfr(t+1), \quad  r \in [\R].
    \label{eq:yrt_step}
\end{equation}
Then using \eqref{eq:yrt_step} in \eqref{eq:GAMP-cr2} and \eqref{eq:GAMP-cr2-mod}, and recalling that $f(y) = \mmse(-y)$ is non-increasing in $y$, we obtain that $ \bar{x}_\sfr(t+1) \le x_\sfr(t+1)$, for $ \sfr \in [\R ]$.

The first inequality in \eqref{eq:yrt_step} follows from the induction hypothesis  since $g(x)$ is increasing. The 
second inequality in \eqref{eq:yrt_step} uses \eqref{eq:GAMP-cr1}, \eqref{eq:gdef}, and the fact that for any $x >0$, we have
\begin{equation}
    \E_{P, Y}  \Big\{\var\left(Z \mid P, Y; \frac{x}{\delin}\right) \Big\} \ge 
\E_{P_{\sfr}, Y_{\sfr}}  \Big\{\var\left(Z_{\sfr} \mid P_{\sfr}, Y_\sfr; \frac{x}{\delin}\right) \Big\}, \quad \sfr \in [\R].
\label{eq:mmse_comp}
\end{equation}
To see \eqref{eq:mmse_comp}, we note that the left side is the MMSE of estimating $Z$ from the pair $(P,Y)$ where $ Z = P + G$ and $Y= \varphi(Z, \bar{\veps})$, with $P  \sim \normal(0, \E\{ Z^2\} - x/\delin)$, $G \sim \normal(0, x/\delin) \perp P$ and $\bar{\veps} \sim P_{\veps} \perp Z$. The right side is  the MMSE of estimating $Z_{\sfr}$ from the pair $(P_\sfr,Y_\sfr)$ where $Z_\sfr = P_\sfr  + G$ and $Y= \varphi(Z_\sfr, \bar{\veps})$ with $P_{\sfr
}  \sim \normal(0, \E\{ Z_{\sfr}^2\} - x/\delin)$, $G \sim \normal(0, x/\delin) \perp P_{\sfr}$ and  $\bar{\veps} \sim P_{\veps} \perp Z_{\sfr}$. We note that $Z$ and $Z_\sfr$ are both zero-mean Gaussian with variances $\E\{ Z^2\} \ge \E\{ Z_\sfr^2 \}$ for $\sfr \in [\Lr]$. Using arguments similar to those in the proof of \cite[Proposition 23]{barbier2019optimal}, it can be shown that $\E_{P, Y}  \Big\{\var\left(Z \mid P, Y; \frac{x}{\delin}\right) \Big\}$ is an increasing function of $\var(Z)$, and hence \eqref{eq:mmse_comp} holds. This completes the proof of the induction step, and the claim in \eqref{eq:barx_xbound} follows.

We now complete the proof by bounding the MSE of Bayes SC-GAMP in iteration $t$ in terms of $\max_{\sfr \in [\R]} \,  \bar{x}_{\sfr}(t)$, which can then be bounded via  \eqref{eq:barx_xbound} and \eqref{eq:YedlaThm1_r} in the limit as $t \to \infty$. Defining the shorthand $\psi_{\sfc}(t) := \mmse(\tq_{\sfc}(t)^{-1})$ and using \eqref{eq:GAMP-cr0} in \eqref{eq:mu_tau_p_Bayes}, for $\sfr \in [\R]$ we have:
\begin{align}
    &\bar{x}_{\sfr}(t)  = \sum_{\sfc=1}^{\C}
    W_{\sfr \sfc} \psi_{\sfc}(t) = \begin{cases}
   \frac{1}{\omega} \sum_{\sfc=1}^{\sfr} \psi_{\sfc}(t) \, , &  1 \le \sfr \le (\omega-1),  \\
  \frac{1}{\omega} \sum_{\sfc=\sfr-\omega+1}^{\sfr} \psi_{\sfc}(t) \, , & \omega \le \sfr \le \Lambda,  \\
\frac{1}{\omega} \sum_{\sfc=\sfr-\omega+1}^{\Lambda} \psi_{\sfc}(t) \, , & \Lambda+1 \le \sfr \le \Lambda + \omega -1.
\end{cases}
\label{eq:barxr_exp}
\end{align}
From \eqref{eq:MSE_exp}, the MSE of Bayes SC-GAMP converges almost surely as
 \begin{align}
   &  \lim_{n \to \infty} \,  \frac{\| \bx - \bargin^*(\bq(t))\|^2}{n}  = \frac{1}{\Lambda} \sum_{\sfc=1}^{\Lambda} \psi_\sfc(t) \nonumber \\
    & =\frac{1}{\Lambda} \left[ \sum_{\sfc=1}^{\omega} \psi_\sfc(t) + 
    \sum_{\sfc=\omega+1}^{2\omega} \psi_\sfc(t) + \ldots +
  \sum_{\sfc= \lceil \frac{ \Lambda}{\omega} -1 \rceil\omega+1}^{\Lambda} \psi_\sfc(t)
    \right],
    \label{eq:MSE_groups}
 \end{align}
where in the last line we have divided the sum
$\sum_{\sfc=1}^{\Lambda} \psi_\sfc(t)$
 into groups of $\omega$ non-intersecting consecutive terms. There are $\lceil\frac{\Lambda}{\omega}\rceil$ within the brackets, and using \eqref{eq:barxr_exp}, we have:
 \begin{align}
     \frac{1}{\Lambda} 
     \sum_{\sfc=1}^{\Lambda} \psi_\sfc(t)
     \leq  \frac{1}{\Lambda} \left\lceil{\frac{\Lambda}{\omega}} \right\rceil \omega \,  \max_{\sfr \in [\R]} \bar{x}_{\sfr}(t).
 \end{align}
Taking $t \to \infty$, and using \eqref{eq:barx_xbound} and \eqref{eq:YedlaThm1_r} we obtain that for sufficiently large $\omega$:
\begin{align}
     &  \limsup_{t \to \infty}   \frac{1}{\Lambda} 
     \sum_{\sfc=1}^{\Lambda} \psi_\sfc(t)
     \le \frac{\Lambda + \omega}{\Lambda} \max_{\sfr \in [\R]} \,  x_{\sfr}(t)  = \left( \max\left(\argmin_{x\in [0, \var(X)]} U(x; \delin)\right)+ \tilde{\gamma} \right) \frac{\Lambda+\omega}{\Lambda}.
     \label{eq:limsup_MSE}
\end{align}
The first line of \eqref{eq:MSE_groups} together with \eqref{eq:limsup_MSE} implies that for any $\gamma >0$, there exist finite $\omega_0, t_0$ such that for $\omega > \omega_0$ and $t >t_0$, we almost surely have:
    \begin{align}
    & \lim_{n\to\infty}\frac{1}{n}\| \bx \, - \, \bargin^*(\bq(t); \, t ) \|^2 
    \leq  \left( \max\left(\argmin_{x\in [0, \var(X)]} U(x; \delin)\right)+\gamma\right) \frac{\Lambda+\omega}{\Lambda}.
    \label{eq:omLam_MSE1}
\end{align}
This proves the first statement in Theorem \ref{thm:SC_FPs}.
\qed

\subsubsection{Proof of Part 2 of Theorem \ref{thm:SC_FPs}} For the i.i.d.~Gaussian sensing matrix, we have a $1 \times 1$ base matrix with $W_{11}=1$, and $\delin = \delta$.
From Corollary \ref{cor:BayesSE},  the state evolution equations for Bayes GAMP are, for $t \ge 1$:
\begin{align}
   &  \mu^q(t) = \mu^p(t) =1, \label{eq:muc_mur_Bayes_iid}\\
    &  \tp(t) = \frac{1}{\delta}  \mmse\big(\tq(t)^{-1} \big),  \label{eq:mu_tau_p_Bayes_2} \\
       & \frac{1}{ \tq_{\sfc}(t+1)} =  \frac{1}{\tp(t)}\Big(1  -     \frac{1}{\tp(t)}\E_{P, Y}  \left\{\var{(Z \mid P, Y ;  \, \tp(t))} \right\}\Big),
  \label{eq:mu_tau_q_Bayes_iid}
\end{align}
where 
$Y=\varphi(Z, \, \bar{\veps})$ with $\veps \sim P_{\bar{\veps}} \perp Z$, and   $(P, Z) \sim \normal(0, \Lambda(t))$ where
\begin{equation}
   \Lambda(t) = \begin{bmatrix}
   \frac{\E\{X^2 \}}{\delta}  -\tp(t)  &   \frac{\E\{X^2 \}}{\delta} - \tp(t) \\
    \frac{\E\{X^2 \}}{\delta} - \tp(t) &  \frac{\E\{X^2 \}}{\delta}
   \end{bmatrix}.
   \label{eq:ups_x_def_tau_p}
\end{equation}
Using this together with \eqref{eq:Bayes-SC-GAMP-MSE}, we have that the limiting MSE of Bayes GAMP for the i.i.d.~Gaussian case is 
\begin{align}
    \lim_{n \to \infty} \,  \frac{\| \bx - \bargin^*(\bq(t))\|^2}{n} 
    = \mmse(1/\tq(t)) =: x(t).
\end{align}
From \eqref{eq:mu_tau_p_Bayes_2}  we note that $x(t) = \delta \tp(t)$, and combining with \eqref{eq:mu_tau_q_Bayes_iid}, we can write the state evolution as a recursion in terms of the parameter $x(t)$:
\begin{align}
    x(t) = f(g( x(t-1)) ), \quad t \ge 1,
\end{align}
where the functions $f$ and $g$ are defined in \eqref{eq:fdef} and \eqref{eq:gdef}. From \eqref{eq:BayesSEinit}, the recursion is initialized with $x(0) = \var(X)$. Since $\mmse(s) < \var(X)$ for all $s >0$, we have $x(1) < x(0)$. This together with the fact that  $f$ and $g$ are strictly increasing implies that  $x(t) \in [0, \var(X)]$ is strictly decreasing and converges to to a limit $x(\infty) : = \lim_{t \to \infty} x(t)$. Since the recursion is initialized at $\var(X)$, the maximum value of $x(t)$, the limit is  given by the largest solution of the fixed point equation:
\begin{align}
    x = f(g(x)), \quad x \in [0, \var(X)].
    \label{eq:FP_fgx}
\end{align}

By construction (see \eqref{eq:PotFn_formula} and \eqref{eq:fdef}-\eqref{eq:gdef}), the potential function $U(x; \delta)$ is
\begin{align}
    U(x ; \delta) = xg(x) - \int_{0}^x g(z) dz - \int_{0}^{g(x)} f(z) dz,
\end{align}
Differentiating with respect to $x$, and setting $\frac{\partial U(x; \delta)}{\partial x}  =0$ yields the fixed point equation \eqref{eq:FP_fgx}. This completes the proof.  
\qed

\section{Discussion and future directions} \label{sec:conclusion}

In this work, we have shown that for a broad class of GLMs, a simple spatially coupled design combined with an efficient AMP algorithm  can achieve the asymptotic Bayes-optimal estimation error (of an i.i.d.~Gaussian design). Although we have assumed that the spatially coupled matrix has independent Gaussian entries, using the recent universality results in \cite{Wan22}, we expect that the SC-GAMP algorithm as well as Theorems \ref{thm:gen_sc_gamp_SE} and \ref{thm:SC_FPs} remain valid for spatial coupling with \emph{generalized white noise} matrices, a much larger class of sensing matrices with independent entries.  

We outline a few open questions and directions for future work. For noiseless compressed sensing, Donoho et al. \cite{donoho2013information} showed that with a spatially coupled design, perfect signal reconstruction is possible  whenever the sampling ratio is larger than $\bar{d}(P_X)$, the (upper) R{\'e}nyi information dimension. In particular, for discrete-priors, $\bar{d}(P_X)=0$, which implies that the signal of dimension $n$ can be efficiently reconstructed from $o(n)$ noiseless linear  measurements. A natural question is: what is the minimum sampling ratio required for  efficient  reconstruction in GLMs such as phase retrieval and rectified linear regression? The results in \cite[Fig. 1]{barbier2019optimal} for the Bernoulli-Gaussian prior suggest that the threshold is $\bar{d}(P_X)$ for phase retrieval, and $2\bar{d}(P_X)$ for rectified linear regression. Corollary \ref{corr:BayesOpt} provides a way to rigorously prove such results: one could analyze the corresponding potential function and show that the potential function $U(x; \delta)$ has a unique minimizer  at $x=0$ if and only if $\delta$ is above the postulated threshold.

Our model assumptions in Section \ref{subsec:model_assump} ensured that the signal could be reconstructed without sign ambiguity, and therefore SC-GAMP did not require any special initialization. For sign-symmetric output functions such as noiseless phase retrieval, this meant assuming a signal prior with nonzero mean. To extend SC-GAMP to zero-mean signal priors, one could initialize it with a spectral estimator \cite{mondelli2017fundamental,luo2019optimal}. Spectrally-initialized AMP for i.i.d.~Gaussian designs was analyzed in \cite{mondelli2021approximate}. However, our experiments  suggest that this spectral initialization performs poorly if directly applied to the spatially coupled setting.   This could be because the preprocessing function used for the spectral estimator is optimal for an i.i.d. design, but not  for the spatially coupled one. Designing effective spectral estimators for spatially coupled designs is an interesting direction for future work.

Corollary \ref{corr:BayesOpt} guarantees that the limiting MSE of SC-GAMP can be made arbitrarily close to the limiting MMSE, provided $\omega > \omega_0$ and $\Lambda$ is sufficiently large. The value of $\omega_0$ is unspecified, and as seen from the numerical results in Figures \ref{fig:sc-gamp-pr} and \ref{fig:sc-gamp-relu}, the gap from the Bayes-optimal curve can be significant if $\omega, \Lambda$ are not large enough.  A more refined analysis that clarifies how the MSE of SC-GAMP approaches the Bayes-optimal curve as $\omega, \Lambda$ increase would be valuable, and inform the  implementation of spatially coupled designs at finite dimensions. 

Another practically relevant research direction  is to investigate  spatially coupled designs composed of \emph{rotationally invariant} matrices,  a broad class which allows flexibility in choosing the spectrum of the underlying blocks.  AMP for linear and generalized linear models with  (uncoupled) rotationally invariant designs has been studied in a number of works recently, e.g., \cite{ma2017orthogonal, rangan2019vector, takeuchi2020rigorous, pandit2020inference, tian2022generalized, venkataramanan22estimation}.  In this setting, Ma et al. \cite{Ma_spectrum21} showed that the spectrum of the sensing matrix can significantly affect  estimation performance. For example, designs with `spikier' spectrum are better for models such as phase retrieval, whereas `flatter' spectra are better for others such as 1-bit compressed sensing. Moreover, rotationally invariant designs are  closely related to  DCT-based and Fourier-based designs which facilitate scalable and memory-efficient AMP implementations (see Appendix \ref{app:dct}).   Takeuchi \cite{takeuchi2023orthogonal} recently studied spatially coupled designs for linear models with rotationally  invariant row blocks.  Spatially coupled designs with the more general block-wise structure considered in this paper,  with each block being rotationally invariant,  could potentially yield   substantially smaller estimation error. Investigating such designs is an intriguing open question.

\newpage

\appendix

\section{Equivalence of potential functions and proof of Theorem \ref{thm:iidMMSE}} \label{app:pot_fun_equiv}

In this section, we describe the alternative potential function used in \cite{barbier2019optimal} and show how the limiting MMSE characterization  in Theorem \ref{thm:iidMMSE} can be obtained from the one in that paper.  As in  \cite[Section 4.1]{barbier2019optimal}, we will assume in this section that the GLM in  \eqref{eq:GLM_def} defined via $y = \varphi(z, \veps)$ induces a conditional density (or probability mass function) $\Pout(y  \mid z)$ so that 
\[ \Pout( \by  \mid \bz)  = \prod_{i=1}^m \, \Pout(y_i \mid z_i).   \]
  In addition, the MMSE characterization in \cite{barbier2019optimal} requires the following assumptions.

\textbf{(B1)} The components of $\bx$ and $\bveps$ are i.i.d.~according to $P_X$ and $P_{\bar{\veps}}$. The prior $P_X$ admits a finite third moment, and has at least two points in its support. 

\textbf{(B2)} The expectation $\E\left\{ \abs{ 
\varphi(G, \bar{\veps}) }^{2+\gamma}  \right\}$ is bounded for some $\gamma >0$, where $G \sim \normal (0,1) \, \perp \, \bar{\veps} \sim P_{\bar{\veps}}$.

\textbf{(B3)} For almost-all values of $\veps$ (w.r.t. the distribution $P_{\bar{\veps}}$, the function $z \mapsto \varphi(z, \veps)$ is continuous (Lebesgue) almost everywhere.

For brevity, in the following we let $\rho := \E\{ X^2\}$.

 \begin{defi}[Potential function of Barbier et al.  \cite{barbier2019optimal}]\label{def:pot_barb}
For $r>0$ and $q \in [ (\E\{ X\})^2, \, \rho]$, define the `replica-symmetric' potential function $\frs(q, r)$ as
\begin{align}
    \frs(q,r; \, \delta) &:= \psiin(r) \, + \, \delta \psiout(q; \, \delta) \, - \, \frac{rq}{2},  \quad \text{ where } \label{eq:frs_def} \\
    \psiin(r) & := \E_{X_0, G_0}\left\{ \ln \E_X\big\{ e^{rXX_0 + \sqrt{r}X G_0 - r X^2/2} \big\}  \right\}, \label{eq:psiin_def}\\
\psiout(q; \, \delta) & := \E_{V ,Y} \left\{
\ln \E_G\big\{ \Pout(Y \, | \,  \sqrt{q/\delta} \, V \, + \,  \sqrt{(\rho -q)/\delta} \, G ) \big\} 
 \right\}. \label{eq:psiout_def}   \end{align}
 In \eqref{eq:psiin_def}, $X \sim P_X, \, X_0 \sim P_X$, and $G_0 \sim \normal(0,1)$ are mutually independent.  In \eqref{eq:psiout_def}, $V, G \sim_{\normalfont \text{i.i.d.}} \normal(0,1)$  and $Y \sim \Pout(\cdot \mid \sqrt{q/\delta} \, V  +   \sqrt{(\rho -q)/\delta} \, G )$.
 \end{defi} 

Next, we define the set $\Gamma$ to be the critical points of $\frs$:
\begin{align}
\Gamma := \left\{ (q, r) \, \in \, [  (\E\{ X\})^2,  \, \rho] \times (\reals_+ \cup \{+\infty \}) 
\, \Big\vert \,   q= 2 \psiin^{\prime}(r), \ r = 2 \delta\psiout^{\prime}(q; \delta) \right\}.
\end{align}

We note that in \cite{barbier2019optimal}, the sensing matrix was scaled to have i.i.d.~zero-mean entries with variance $\frac{1}{n}$, whereas in our paper (including in Theorem \ref{thm:iidMMSE}) we assume that the entries have variance $\frac{1}{m}$. This is accounted for in the coefficients multiplying $V$ and $G$ in the definition of $\psiout$ in \eqref{eq:psiout_def}.

\begin{theorem}[MMSE for i.i.d.~Gaussian design \cite{barbier2019optimal}] Assume the setting of Theorem \ref{thm:iidMMSE}, along with the conditions \textbf{(B1)}, \textbf{(B2)}, \textbf{(B3)}, and that at least   one of \eqref{eq:sign_anti_sym} or \eqref{eq:exp_ne_0} holds. Further, assume that maximizer $(q^*, r^*) \in \Gamma $ of the potential $ \frs(q,r; \, \delta)$ is unique. Then,
 \begin{equation}
     \lim_{n \to \infty}  \, \frac{1}{n} \E\{ \| \bx - \E\{\bx \mid \Aiid, \, \by \} \|^2\} = \rho - q^*.
     \label{eq:MMMSE_q}
  \end{equation}
  \label{thm:iidMMSE_alt}
 \end{theorem}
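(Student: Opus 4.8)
The plan is to obtain Theorem \ref{thm:iidMMSE_alt} directly from the scalar-MMSE result of \cite{barbier2019optimal}, the only substantive adjustment being the variance scaling of the sensing matrix: here $A_{ij}\sim\normal(0,1/m)$, whereas \cite{barbier2019optimal} uses variance $1/n$. Since $m/n\to\delta$, the two models differ only by an overall rescaling of the effective signal-to-noise ratio by the factor $\delta$, and this is precisely the role of the $\sqrt{q/\delta}$ and $\sqrt{(\rho-q)/\delta}$ coefficients appearing in $\psiout$ in \eqref{eq:psiout_def}. After this reparametrization the potential $\frs(q,r;\delta)$ of Definition \ref{def:pot_barb} and its critical-point set $\Gamma$ coincide exactly with the replica-symmetric potential analyzed in \cite{barbier2019optimal}, so the statement becomes a transcription of their result into our normalization.

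First I would recall the form of the limiting mutual information established in \cite{barbier2019optimal}: under assumptions \textbf{(B1)}--\textbf{(B3)}, the normalized mutual information $\tfrac{1}{n}I(\bx;\by\mid\Aiid)$ converges to the maximum of $\frs(q,r;\delta)$ over $\Gamma$, a formula proved there via the adaptive interpolation method together with concentration of the overlap $\tfrac{1}{n}\langle\bx,\bx'\rangle$, where $\bx'$ is an independent sample from the posterior. Next I would invoke the accompanying overlap-concentration statement: by the Nishimori identities the posterior-mean estimator is the MMSE estimator and $\tfrac{1}{n}\E\{\|\E\{\bx\mid\Aiid,\by\}\|^2\}=\tfrac{1}{n}\E\{\langle\bx,\bx'\rangle\}$, whose limit is the maximizing coordinate $q^*$. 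Hence $\tfrac{1}{n}\E\{\|\bx-\E\{\bx\mid\Aiid,\by\}\|^2\}=\rho-\tfrac{1}{n}\E\{\|\E\{\bx\mid\Aiid,\by\}\|^2\}\to\rho-q^*$, which is \eqref{eq:MMMSE_q}. The assumed uniqueness of the maximizer $(q^*,r^*)\in\Gamma$ is exactly what guarantees the overlap concentrates on a single value, so that this limit is well defined.

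The remaining step is to justify the role of the sign-symmetry conditions \eqref{eq:sign_anti_sym}--\eqref{eq:exp_ne_0}. When $\varphi$ is even in its first argument and $\E\{X\}=0$, the signal is identifiable only up to a global sign, and the scalar MMSE $\rho-q^*$ is not attained; in that degenerate case \cite{barbier2019optimal} instead characterizes the MMSE of the rank-one matrix $\bx\bx^{\sT}$ in their Theorem~2. Assuming at least one of \eqref{eq:sign_anti_sym} or \eqref{eq:exp_ne_0} breaks this ambiguity, and I would verify that under either condition the overlap $q^*$ is recovered with the correct sign, so that the matrix-MMSE characterization transfers to the vector-MMSE statement \eqref{eq:MMMSE_q}.

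The main obstacle is not located in this appendix at all: the hard analytic content, namely the interpolation argument and the overlap concentration, resides in \cite{barbier2019optimal} and is simply imported. Within our treatment the delicate points are the bookkeeping of the $\delta$-rescaling, so that $\frs(q,r;\delta)$ matches their potential verbatim, and the sign-symmetry reduction just described. I emphasize that establishing the change of variables that links $\frs(q,r;\delta)$ to the potential $U(x;\delta)$ of Definition \ref{def:pot}, and thereby recovering Theorem \ref{thm:iidMMSE} from Theorem \ref{thm:iidMMSE_alt}, is a logically separate computation carried out in the equivalence argument that follows.
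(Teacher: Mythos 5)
Your proposal matches the paper's treatment: Theorem \ref{thm:iidMMSE_alt} is not proved in the paper but imported verbatim from \cite{barbier2019optimal}, with the only adjustments being the $1/m$ versus $1/n$ variance normalization (absorbed into the $\sqrt{q/\delta}$, $\sqrt{(\rho-q)/\delta}$ coefficients of $\psiout$) and the sign-symmetry caveat handled by assuming \eqref{eq:sign_anti_sym} or \eqref{eq:exp_ne_0}. Your sketch of the underlying interpolation and overlap-concentration argument, and your observation that the equivalence with $U(x;\delta)$ is a separate computation, are both consistent with what the paper does.
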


We will show that the statement in Theorem \ref{thm:iidMMSE} is equivalent to that of Theorem \ref{thm:iidMMSE_alt}  by showing that the two potential functions $\pot(x; \delta)$  and $\frs(q,r; \, \delta)$ are equivalent in the following sense.

\begin{prop}
Suppose that $x^* \in [0, \var(X)]$ is the unique minimizer of $\pot(x; \delta)$ (in Definition \ref{def:pot}). Then,
\begin{align}
    q^*= \rho - x^*, \qquad r^* = \frac{\delta}{x^*} \ups(x^*;  \delta),
    \label{eq:qr_star}
\end{align}
lies in $\Gamma$ and is the unique maximizer of  $\frs(q,r; \, \delta)$. Here $\ups(x^*;  \delta)$ is defined in \eqref{eq:ups_x_def}. 
\label{prop:pot_fn_equiv}
\end{prop}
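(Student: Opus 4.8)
The plan is to build a dictionary between the two potentials by differentiating $\psiin$ and $\psiout$ and matching them to $\mmse(\cdot)$ and $\ups(\cdot\,;\delta)$, and then to verify that the change of variables \eqref{eq:qr_star} carries the stationarity condition of $\pot(x;\delta)$ exactly onto the two defining equations of $\Gamma$, while reversing the sign of the potential up to an additive constant. Uniqueness of the maximizer of $\frs$ over $\Gamma$ then follows from uniqueness of the minimizer of $\pot$.

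\textbf{Step 1 (input channel).} First I would establish $\psiin(r) = \frac{r\rho}{2} - I(X;\sqrt{r}X + G_0)$. Writing $\tilde Y = \sqrt r X_0 + G_0$, the exponent in \eqref{eq:psiin_def} equals $\sqrt r\,\tilde Y X - rX^2/2$, so completing the square turns $\E_X\{e^{\sqrt r \tilde Y X - rX^2/2}\}$ into $e^{\tilde Y^2/2}\sqrt{2\pi}\,p(\tilde Y)$, where $p$ is the density of $\sqrt r X + G_0$. Taking $\E_{X_0,G_0}$ and using $\E\{\tilde Y^2\}=r\rho+1$ together with $h(\tilde Y) = I(X;\tilde Y) + \tfrac12\ln(2\pi e)$ yields the identity. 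Differentiating and invoking the I-MMSE relation $\frac{d}{dr}I(X;\sqrt r X+G_0)=\tfrac12\mmse(r)$ gives $2\psiin'(r) = \rho - \mmse(r)$.

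\textbf{Step 2 (output channel).} Next I would show $2\delta\,\psiout'(\rho-x;\delta) = \frac{\delta}{x}\,\ups(x;\delta)$ for $x\in(0,\var(X)]$. Setting $q=\rho-x$ in \eqref{eq:psiout_def} makes the internal Gaussian pair coincide with $(P,Z)\sim\normal(0,\Lambda)$ of \eqref{eq:ups_x_def}: indeed $\sqrt{q/\delta}\,V + \sqrt{(\rho-q)/\delta}\,G = Z$ with $P=\sqrt{q/\delta}\,V$, so $\psiout(\rho-x;\delta)=-h(Y\mid P)$. The derivative of this conditional entropy with respect to the split of $\var(Z)=\rho/\delta$ between the observed and hidden parts is governed by $\E_{P,Y}\{\var(Z\mid P,Y;x/\delta)\}$; this is the content of \cite[Prop.~18, Prop.~20]{barbier2019optimal}, which I would invoke after translating their variance-$1/n$ normalization (where the argument is $\rho-x/\delin$) into the variance-$1/m$ normalization used here (where it is $\rho-x$). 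The chain-rule factor produced by this reparametrization is the delicate point, and I expect matching the two normalizations to be the main obstacle.

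\textbf{Step 3 (assembling the correspondence).} From the proof of Part~2 of Theorem~\ref{thm:SC_FPs}, $\partial_x \pot(x;\delta)=0$ is equivalent to $x=\mmse(\tfrac{\delta}{x}\ups(x;\delta))$. At such a stationary $x^*$, setting $r^*=\tfrac{\delta}{x^*}\ups(x^*;\delta)$ and $q^*=\rho-x^*$, Step 2 gives $r^*=2\delta\psiout'(q^*;\delta)$ immediately, while Step 1 combined with $x^*=\mmse(r^*)$ gives $q^*=\rho-\mmse(r^*)=2\psiin'(r^*)$; hence $(q^*,r^*)\in\Gamma$. The same two identities show every point of $\Gamma$ arises this way, so \eqref{eq:qr_star} is a bijection from the stationary points of $\pot(\cdot;\delta)$ in $[0,\var(X)]$ onto $\Gamma$. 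To transfer uniqueness, define $\tilde f(x):=\frs(\rho-x,\tfrac{\delta}{x}\ups(x;\delta);\delta)$ and differentiate: substituting $2\psiin'(r)=\rho-\mmse(r)$, the $\psiout'$ contribution cancels $r(x)/2$ and the residual $\psiin'$ term reduces to $r'(x)\,(x-\mmse(r(x)))/2$, so that $\tilde f'(x) = -\tfrac12\,\pot'(x;\delta)$ (using $\pot'(x;\delta)=r'(x)(\mmse(r(x))-x)$ from Part~2). Thus $\tilde f(x) = -\tfrac12\pot(x;\delta)+C$ on the image of the map, a strictly decreasing affine function of $\pot$; the unique global minimizer $x^*$ of $\pot$ therefore corresponds to the unique maximizer $(q^*,r^*)$ of $\frs$ over $\Gamma$, with $q^*=\rho-x^*$, as claimed.
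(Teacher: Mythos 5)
Your proposal is correct and follows essentially the same route as the paper: the entropy identity expressing $\psiin(r)$ via $I(X;\sqrt{r}X+G_0)$, the identification $2\delta\,\psiout'(\rho-x;\delta)=\tfrac{\delta}{x}\ups(x;\delta)$ via \cite[Prop.~20]{barbier2019optimal} (whose normalization is already absorbed into the paper's definition of $\psiout$, so the chain-rule issue you flag does not actually arise), and the observation that $\frs$ restricted to the curve $(q(x),r(x))$ is a decreasing affine function of $\pot(x;\delta)$. The only cosmetic difference is that you obtain the relation $\frs(q(x),r(x);\delta)=-\tfrac12\pot(x;\delta)+C$ by differentiating along the curve and integrating back, whereas the paper substitutes the integral representation of $\psiout$ directly; your explicit verification of the two $\Gamma$-conditions is, if anything, more careful than the paper's.
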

Theorem \ref{thm:iidMMSE} follows immediately from Theorem \ref{thm:iidMMSE_alt} together with Proposition \ref{prop:pot_fn_equiv}.

\paragraph{Proof of Proposition \ref{prop:pot_fn_equiv}.} 
For $X_0 \sim P_X \, \perp \, G_0 \sim \normal(0,1)$, it can be verified that  $\psiin(r)$ in \eqref{eq:psiin_def} satisfies
\[  H(\sqrt{r}X_0  +  G_0) = - \psiin(r) +  \frac{r \rho}{2}  + \frac{\ln (2 \pi e)}{2} \, ,  
\]
where $H(\cdot)$ is the Shannon entropy in nats. Using this in the definition of $\frs$ in \eqref{eq:frs_def}, we  have that
\begin{align}
    -2\frs(q,r; \, \delta)
    = 
    2I(X_0; \,\sqrt{r} X_0 + G_0) - (\rho-q)r - 2\delta \psiout(q; \delta), 
    \label{eq:equiv_pot_rep}
\end{align}
where we used $H(G_0) = \frac{1}{2} \ln(2\pi e)$. 
Using the mapping
\begin{align}
q \equiv q(x) := \rho -x, \quad r \equiv r(x) 
:= \frac{\delta}{x} \ups(x; \delta),
\end{align}
from the definition of $\pot(x; \delta)$ in \eqref{eq:pot}, we have that
\begin{align}
   \pot(x; \delta) = (q  - \rho)r \, + \, 
   \int_0^{\rho - q} \frac{\delta}{z} \ups(z;  \delta) \,  dz  \,  +  \, 2I\left(X_0 \, ;  \sqrt{r} \,  X + G_0 \right).
   \label{eq:Uxdel_alt0}
\end{align}
We claim that
\begin{align}
    2\delta \psiout(q; \delta)= 
    - \int_0^{\rho - q} \frac{\delta}{z} \ups(z;  \delta) \,  dz  \,  + \, C_\delta, \, 
    \label{eq:psiout_alt}
\end{align}
where $C_\delta$ is a constant that does not depend on $q$. Using \eqref{eq:psiout_alt} in \eqref{eq:Uxdel_alt0} gives
\begin{align}
       \pot(x; \delta) = (q  - \rho)r \, - \, 
 2\delta \psiout(q; \delta) + C_\delta   \,  +  \, 2I\left(X_0 \, ;  \sqrt{r} \,  X + G_0 \right).
   \label{eq:Uxdel_alt}
\end{align}
Comparing \eqref{eq:Uxdel_alt} with \eqref{eq:equiv_pot_rep}, we observe that $\pot(x; \delta)=-2\frs\left(q(x),r(x)\right)+C_\delta$. Therefore,  
if $x^* \in [0, \var(X)]$ is the unique minimizer of $U(x;  \delta)$, then 
\begin{align} (q^*, r^*) \equiv  (q(x^*), r(x^*)) = \argmax_{(q,r) \in \Gamma}  \, \frs(q,r; \, \delta).
\end{align}
Here we have used the fact that  $x^*$ being a minimizer  satisfies $\frac{\partial \pot(x; \delta)}{\partial x} =0$; therefore from the alternative representation of $\pot(x; \delta)$ in \eqref{eq:Uxdel_alt}, we have that $q(x^*)$ and $ r(x^*)$ satisfy $\frac{\partial \frs(q,r; \delta)}{\partial q} =0$ and $\frac{\partial \frs(q,r; \delta)}{\partial r} =0$, respectively. That is, $(q^*, r^*) \in \Gamma$.

We complete the proof by showing the claim in \eqref{eq:psiout_alt}. For $q \in [ (\E\{ X\})^2, \, \rho]$, we can write
\begin{align}
     2\delta \psiout(q; \delta) =  2\delta \int_{(\E\{ X\})^2}^{q} \psiout'(u; \delta)du \,  + \,  c_{\delta},
     \label{eq:psiout_integ}
\end{align}
for some constant $c_\delta$. We then use the following alternative representation for $\psiout^{\prime}(u; \delta)$ from \cite[Proposition 20]{barbier2019optimal}:
\begin{align}
    & \psiout^{\prime}(u; \delta) = \frac{1}{2(\rho - u)}\E\left\{ \big( \E\{ G \, |  \, Y, \, V \, ; (\rho-u)/\delta \} \big)^2 \right\},  \nonumber  \\
& \hspace{1in} \text{ where }  V, G \sim_{\normalfont\text{i.i.d.}} \normal(0,1) \  \text{ and  } Y \sim \Pout(\cdot \mid \sqrt{u/\delta} \, V  +   \sqrt{(\rho -u)/\delta} \, G ).
\label{eq:psiout_pr}
\end{align}
Letting $P= \sqrt{u/\delta} \, V$ and $Z= P + \sqrt{(\rho -u)/\delta} \, G $, we observe that $(P, Z)$ are jointly Gaussian with covariance matrix of the form in \eqref{eq:ups_x_def} with $x= (\rho-u)$. Moreover, we claim that
\begin{align}
 \E\left\{  \E\{ G \, |  \, Y, \, V \, ; (\rho-u)/\delta  \}^2 \right\}
    =  1 - \frac{\delta}{\rho-u} 
    \E_{P,Y}  \left\{\var\left(Z \, \middle| \, P, Y; \, (\rho-u)/\delta \right) \right\} = \ups(\rho-u ;  \delta),
    \label{eq:alt_gout_exp}
\end{align}
where $\var\left(Z \, \middle| \, P, Y; \, (\rho-u)/{\delta} \right)$  and $\ups(\rho-u ;  \delta)$ are as defined in  Definition \ref{def:pot}. To see \eqref{eq:alt_gout_exp},  we can start with the right side and write $Z= P + \sqrt{(\rho -u)/\delta} \, G$  to deduce that
\begin{align}
    & 1 - \frac{\delta}{\rho-u} 
    \E_{P,Y}  \left\{\var\left(Z \, \middle| \, P, Y; \, \frac{\rho-u}{\delta} \right) \right\}  = 1 - 
    \E_{P,Y}  \left\{\var\left(G \, \middle| \, P, Y; \, \frac{\rho-u}{\delta} \right) \right\} \nonumber \\
    & \quad  = 1 -  \left(\E\{ G^2\} \,  - \, 
    \E\left\{  \E\left\{ G \, |  \, Y, \, V \, ; \frac{\rho-u}{\delta}  \right\}^2  \right\}\right) = 
    \E\left\{  \E\left\{ G \, |  \, Y, \, V \, ; \frac{\rho-u}{\delta}  \right\}^2  \right\},
    \label{eq:alt_gout_exp1}
\end{align}
where we used $G\sim \normal(0,1)$. Using \eqref{eq:alt_gout_exp} in \eqref{eq:psiout_pr} and then in \eqref{eq:psiout_integ} we obtain
\begin{align}
     2\delta \psiout(q; \delta) & =  \int_{(\E\{ X\})^2}^{q} \frac{\delta}{\rho-u} \ups(\rho-u ;  \delta) du \,  + \,  c_{\delta} \nonumber \\
     & = \, 
     - \int_{\var(X)}^{\rho-q} \frac{\delta}{z} \ups(z ;  \delta) dz \, + \, c_\delta =
     - \int_{0}^{\rho-q} \frac{\delta}{z} \ups(z ;  \delta) dz \,
     + \, C_\delta,
\end{align}
which completes the proof of the claim in \eqref{eq:psiout_alt}. The proposition follows. \qed

\section{Proof of Corollary \ref{cor:BayesSE}} \label{app:corr_proof}

The proof uses the definitions of $\ginbayes$ and $\goutbayes$ in the state evolution equations \eqref{eq:tq_t1_def}-\eqref{eq:Lambda22}, along with Stein's lemma and elementary properties of conditional expectation. We first  prove that $\mu^{p}_{\sfr}(t)=1$ for all $\sfr$.  Using the definition of $\ginbayes$ from \eqref{eq:ginbayes} and the tower property of conditional expectation we have that:
\begin{align}
    \E\{ X \, \ginbayes(Q_{\sfc}(t), \sfc \, ;  \, t)\} ) = \E\{ \E\{ X \ginbayes(Q_{\sfc}(t), \sfc \, ;  \, t)  \mid Q_{\sfc}(t) \} \} = \E\big\{  \ginbayes(Q_{\sfc}(t), \sfc \, ;  \, t) ^2 \big\}, \quad \sfc \in [\C].
    \label{eq:EXginbayes}
\end{align}
Using this in \eqref{eq:Lambda11}, \eqref{eq:Lambda12},  and  \eqref{eq:mup_t1_def}, we obtain that $\mu^{p}_{\sfr}(t)=1$ for $r \in [\R]$. 

To prove the other results, we show that for $\sfr \in [\R]$ and $t \ge 0$:
\begin{align}
\E\{ \goutbayes( P_{\sfr}(t), \, \varphi(Z_{\sfr}, \bar{\veps}), \sfr \,  ; \, t )^2 \} 
& = - \E\{ \goutbayes'( P_{\sfr}(t), \, \varphi(Z_{\sfr}, \bar{\veps}), \sfr \,  ; \, t ) \} \nonumber \\
& = \E\{ \partial_z \goutbayes( P_{\sfr}(t), \, \varphi(Z_{\sfr}, \bar{\veps}), \sfr \,  ; \, t ) \},
 \label{eq:Epartialgout}
\end{align}
where we recall that 
$\goutbayes'(p, \varphi(z, \veps), \sfr \, ; \, t)$ denotes the derivative with respect to $p$, and  $\partial_z \, \goutbayes(p, \varphi(z, \veps), \sfr \, ; \, t)$ the derivative with respect to $z$. Using the second equality of \eqref{eq:Epartialgout} in \eqref{eq:muq_t1_def} yields $\mu_{\sfc}^q(t)=1$ for $t \ge 1$. And using the first equality of \eqref{eq:Epartialgout} in \eqref{eq:tq_t1_def}, and recalling that $Y_{\sfr} = \varphi(Z_{\sfr}, \bar{\veps})$, we obtain:
\begin{equation}
    \frac{1}{\tq_{\sfc}(t+1)} = -  \sum_{\sfr =1}^{\R} W_{\sfr \sfc} \, \E\{ \goutbayes'( P_{\sfr}(t), \, Y_{\sfr}, \sfr \,  ; \, t )\}, \ \text{ for } t \ge 0.
    \label{eq:tq_t1_Bayes}
\end{equation}
The derivative $\goutbayes'$ is given by \cite[Eq. (37)]{rangan2010generalized}:
\begin{align}
    -\goutbayes'(P_{\sfr}(t),  Y_{\sfr}, \sfr \, ;  \, t)
= \frac{1}{\tp_{\sfr}(t)}\left( 1 - \frac{\var\{ Z_{\sfr} \mid P_{\sfr}(t), \, Y_{\sfr} \}}{\tp_{\sfr}(t)} \right), \ \text{ for } \sfr \in [\R].
\label{eq:goutB_der}
\end{align}
Using \eqref{eq:goutB_der} in \eqref{eq:tq_t1_Bayes} gives the result in \eqref{eq:mu_tau_q_Bayes} for $t \ge 1$.

For $\tp_{\sfr}(t)$, we use the formula in \eqref{eq:mup_t1_def} and note that $[\Lambda_{\sfr}(t)]_{11} = [\Lambda_{\sfr}(t)]_{12}$, by \eqref{eq:EXginbayes}. This gives:
\begin{align}
    \tp_{\sfr}(t) & =  \frac{\E\{ X^2 \}}{\delin} \sum_{\sfc \in [\C]} W_{\sfr, \sfc}   \,  - \, [\Lambda_{\sfr}(t)]_{11}, \ \text{ for } \sfr \in [\Lr].
       \label{eq:tp_r_bayes0}
\end{align}
Using the formula for $[\Lambda_{\sfr}(t)]_{11}$ in  \eqref{eq:Lambda11} and the definition of $\ginbayes$ from \eqref{eq:ginbayes}, \eqref{eq:tp_r_bayes0} gives:
\begin{align}
    \tp_{\sfr}(t) =  \frac{1}{\delin} \sum_{\sfc \in [\C]} W_{\sfr, \sfc} \left[ \E\{ X^2 \} \,  - \, \E\big\{ (\E\{X \mid Q_{\sfc}(t)\})^2 \big\} \right] = \frac{1}{\delin} \sum_{\sfc \in [\C]} W_{\sfr, \sfc} \mmse(\tq_{\sfc}(t)^{-1}).
    \label{eq:tp_r_bayes}
\end{align}
The last equality in \eqref{eq:tp_r_bayes} follows from the representation of $Q_{\sfc}(t)$ in \eqref{eq:PtQt_def} by noting that $\mu^{q}_{\sfc}(t)=1$. 

The identity $\alpha_\sfc^q(t+1) = \tq_\sfc(t+1)$, for $t \ge 0$, can be obtained by using \eqref{eq:goutB_der} in the definition \eqref{eq:alpha_qtq_def}. Substituting $\alpha_\sfc^q(t+1)=\tq_\sfc(t+1)$ in \eqref{eq:alpha_pt_def} gives 
\begin{align}
    \alpha_\sfr^p(t+1) =\frac{1}{\delin}\sum_{\sfc=1}^\C W_{\sfr,\sfc}\, \tq_\sfc(t+1)\,\E\left\{\ginbayes^\prime\left(Q_{\sfc}(t+1), \sfc\,;\,t+1\right)\right\},\quad \text{for }\sfr\in[\R],\label{eq:alphap-exp}
\end{align}
Recalling the definition of $\ginbayes$ in \eqref{eq:ginbayes}, its derivative  can be written as \cite[Eq. (32)]{rangan2010generalized}:
\begin{equation}
    \ginbayes^\prime\left(Q_{\sfc}(t+1), \sfc\,;\,t+1\right) = \frac{\var\left(X \mid Q_{\sfc}(t+1), \sfc\,;\, t+1\right)}{\tq_c(t+1)}. \label{eq:ginderiv-var}
\end{equation}
Moreover, as in \eqref{eq:tp_r_bayes} we have that
\begin{equation}
    \mmse\left(\tq_\sfc(t+1)^{-1}\right)= \E\big\{ (X - \E\{X \mid Q_{\sfc}(t+1)\})^2 \big\} = \E\left\{\var\left(X \mid Q_{\sfc}(t+1), \sfc\,;\,t\right)\right\}. \label{eq:mmse-var}
\end{equation}
Using \eqref{eq:ginderiv-var} and \eqref{eq:mmse-var} in \eqref{eq:alphap-exp}, and comparing with the expression in \eqref{eq:tp_r_bayes} gives $\alpha^p_\sfr(t+1) = \tp_\sfr(t+1)$.

It remains to prove the two claims in \eqref{eq:Epartialgout}. For  the first claim, from the definition of $\goutbayes$ in \eqref{eq:goutbayes}, we have that
\begin{align}
    \E\{ \goutbayes( P_{\sfr}(t), \, Y_{\sfr}, \sfr \,  ; \, t )^2 \} &  = \frac{1}{\left( \tp_{\sfr}(t) \right)^2}\left(
    \E\{ P_{\sfr}(t) ^2\} + 
    \E\big\{ \E\{ Z_{\sfr} \mid P_{\sfr}(t), \, Y_{\sfr} \}^2\big\} - 2 \E\{ P_{\sfr}(t) \E\{ Z_{\sfr} \mid P_{\sfr}(t), Y_{\sfr} \}  \right) \nonumber \\
    & \stackrel{(\rm{i})}{=}\frac{1}{\left( \tp_{\sfr}(t) \right)^2}\left(  \E\{ P_{\sfr}(t) ^2\} + \E\big\{ \E\{ Z_{\sfr} \mid P_{\sfr}(t), \, Y_{\sfr} \}^2\big\}  -
    2 \E\{  Z_{\sfr} P_{\sfr}(t) \} \right) \nonumber \\
    & \stackrel{(\rm{ii})}{=} \frac{1}{\tp_{\sfr}(t)} \left(  1 - \frac{\E\left\{\var(Z_{\sfr} \mid P_{\sfr}(t), Y_{\sfr})\right\}}{\tp_{\sfr}(t)} \right). \label{eq:goutBvar}
\end{align}
Here, the equality $(\rm{i})$ is obtained using the tower property of expectation, and $(\rm{ii})$ using the covariance of $(P_{\sfr}(t), Z_{\sfr})$ in \eqref{eq:PZ_dist_alt}, which along with  $\mu^p_{\sfr}(t)=1$, yields $ Z_{\sfr} = P_{\sfr}(t)  +  G^p_\sfr(t)$, implying that:
\[ 
\E\{ P_{\sfr}(t)^2 \} = \E\{ P_{\sfr}(t)  Z_{\sfr} \} = 
\E\{ Z_{\sfr}^2 \} - \tp_{\sfr}(t).
\]
Comparing \eqref{eq:goutB_der} and \eqref{eq:goutBvar}, we obtain the first equality in \eqref{eq:Epartialgout}. Finally, we show that the first and third terms in \eqref{eq:Epartialgout} are equal. Using
the formula for $\goutbayes$ in \eqref{eq:goutbayes}, for $\sfr \in [\Lr]$ we write:
\begin{align}
    \E\{ \goutbayes( P_{\sfr}(t), \, \varphi(Z_{\sfr}, \bar{\veps}), \sfr \,  ; \, t )^2 \} & = 
    \frac{1}{\tp_{\sfr}(t)} \E\left\{ (\E\{ Z_{\sfr} \mid P_{\sfr}(t), \, Y_{\sfr} \} \, - \, P_{\sfr}(t)) \cdot  \goutbayes( P_{\sfr}(t), \, Y_{\sfr}, \sfr \,  ; \, t ) \right\} \nonumber  \\
    & =  \frac{1}{\tp_{\sfr}(t)} \E\left\{ (Z_\sfr - P_{\sfr}(t)) \cdot  \goutbayes( P_{\sfr}(t), \, Y_{\sfr}, \sfr \,  ; \, t ) \right\}.
    \label{eq:Egout2_alt}
\end{align}
where the second equality holds due to the tower property of conditional expectation. Recalling that 
$Z_\sfr =  P_{\sfr}(t) + G^p_\sfr(t)$,
with $G^p_\sfr(t) \sim \normal(0, \tp_{\sfr}(t))$ independent of $P_{\sfr}(t) \sim \normal(0, \E\{ Z_{\sfr}^2\}-\tp_{\sfr}(t))$, the expectation on right side of \eqref{eq:Egout2_alt} can be written as
\begin{align}
 &  \E\left\{ (Z_\sfr - P_{\sfr}(t)) \cdot  \goutbayes( P_{\sfr}(t), \, Y_{\sfr}, \sfr \,  ; \, t )
    \right\}  = 
\E\left\{ G^p_\sfr(t) \,  \goutbayes( P_{\sfr}(t), \, \varphi(P_{\sfr}(t) + G^p_\sfr(t), \,  \bar{\veps}), \sfr \,  ; \, t ) 
    \right\} \nonumber \\ 
    &  \hspace{1in} = \E\left\{  \, \E\left\{ G^p_\sfr(t) \,  \goutbayes( P_{\sfr}(t), \, \varphi(P_{\sfr}(t) + G^p_\sfr(t), \,  \bar{\veps}), \sfr \,  ; \, t )  \mid P_{\sfr}(t)
    \right\} \right\}.
    \label{eq:inner_exp0}
\end{align}
We now apply Stein's lemma (stated as Lemma \ref{lem:stein} below) to the inner expectation, noting that $P_{\sfr}(t) \perp G^p_\sfr(t)$, to obtain
\begin{align}
   &  \E\left\{ G^p_\sfr(t) \,  \goutbayes( P_{\sfr}(t), \, \varphi(P_{\sfr}(t) + G^p_\sfr(t), \,  \bar{\veps}), \sfr \,  ; \, t )  \mid P_{\sfr}(t)
    \right\} \nonumber \\
    & = \tp_{\sfr}(t) \, \E\{ \partial_z \goutbayes( P_{\sfr}(t), \, \varphi(P_{\sfr}(t) + G^p_\sfr(t), \,  \bar{\veps}), \sfr \,  ; \, t )  \mid P_{\sfr}(t)
    \}.
    \label{eq:stein_app}
\end{align}
Substituting \eqref{eq:stein_app} in \eqref{eq:inner_exp0} and using the result in  \eqref{eq:Egout2_alt}, we obtain
\begin{equation}
  \E\{ \goutbayes( P_{\sfr}(t), \, \varphi(Z_{\sfr}, \bar{\veps}), \sfr \,  ; \, t )^2 \}  
  = \E\{ \partial_z \goutbayes( P_{\sfr}(t), \, \varphi(P_{\sfr}(t) + G^p_\sfr(t), \,  \bar{\veps}), \sfr \,  ; \, t ) \}.
\end{equation}
where we used $Y_\sfr = \varphi(Z_\sfr, \,  \bar{\veps} ) = \varphi(P_{\sfr}(t) + G^p_\sfr(t), \,  \bar{\veps})$. This completes the proof of \eqref{eq:Epartialgout}, and  the corollary follows.  \qed

\begin{lemma}[Stein's lemma \cite{stein1972bound}]
Let $U$ be a Gaussian random variable  with zero mean,
and let $f: \reals \to \reals$ be any function  such that $\E\{ U f(U)\}$ and $\E\{f'(U)\}$ exist. Then,
\[
\E\{ U f(U)\} = \var(U) 
\, \E\{ f'(U)  \}.
\]
\label{lem:stein}
\end{lemma}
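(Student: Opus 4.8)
The plan is to prove the identity by integration by parts against the Gaussian density, exploiting the fact that a zero-mean Gaussian density satisfies a first-order linear differential equation. Write $\sigma^2 = \var(U)$ and let $\phi(u) = \frac{1}{\sqrt{2\pi}\,\sigma}\exp(-u^2/(2\sigma^2))$ be the density of $U$. The crucial elementary identity is that $\phi'(u) = -\tfrac{u}{\sigma^2}\,\phi(u)$, equivalently $u\,\phi(u) = -\sigma^2\,\phi'(u)$. First I would use this to rewrite the left-hand side as
\[
\E\{U f(U)\} \;=\; \int_{-\infty}^{\infty} u\, f(u)\,\phi(u)\,du \;=\; -\sigma^2 \int_{-\infty}^{\infty} f(u)\,\phi'(u)\,du .
\]

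Next I would integrate by parts. Assuming $f$ is absolutely continuous so that $f$ is an indefinite integral of $f'$ and the product rule for the Lebesgue--Stieltjes integral applies, this yields
\[
-\sigma^2 \int_{-\infty}^{\infty} f(u)\,\phi'(u)\,du \;=\; -\sigma^2\Big[ f(u)\,\phi(u)\Big]_{-\infty}^{\infty} \;+\; \sigma^2 \int_{-\infty}^{\infty} f'(u)\,\phi(u)\,du .
\]
The surviving integral is exactly $\sigma^2\,\E\{f'(U)\} = \var(U)\,\E\{f'(U)\}$, the claimed right-hand side, so the entire content of the lemma reduces to showing that the boundary term $\big[f(u)\phi(u)\big]_{-\infty}^{\infty}$ vanishes.

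The main obstacle is precisely this boundary term, which is where the loose hypothesis (``any $f$ such that the expectations exist'') must be converted into something usable. The cleanest route is to carry out the integration by parts first on a finite window $[0,R]$ and let $R \to \infty$: the term $\int_0^R f'(u)\phi(u)\,du$ converges since $\E\{f'(U)\}$ is finite, and $\int_0^R u f(u)\phi(u)\,du$ converges since $\E\{U f(U)\}$ is finite, so the remaining quantity $f(R)\phi(R)$ must converge to some limit $\ell$ as $R\to\infty$. If $\ell \neq 0$ then $f(u) \sim \ell/\phi(u)$ grows like $e^{u^2/(2\sigma^2)}$, which forces $\int^{\infty} u\,f(u)\,\phi(u)\,du$ to diverge, contradicting the assumed finiteness of $\E\{U f(U)\}$; hence $\ell = 0$. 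An identical argument on $(-\infty,0]$ handles the other endpoint. Once both boundary limits are shown to be zero, the two displays above combine directly to give $\E\{U f(U)\} = \var(U)\,\E\{f'(U)\}$, completing the proof.
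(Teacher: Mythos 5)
The paper does not actually prove this lemma: it is quoted with a citation to Stein's 1972 paper, so there is no in-house argument to compare against, and the question is only whether your blind proof is sound. It is. Your argument is the classical one: the zero-mean Gaussian density satisfies $u\,\phi(u) = -\sigma^2\phi'(u)$, integration by parts converts $\E\{ U f(U)\}$ into $\sigma^2\,\E\{f'(U)\}$ plus a boundary term, and the entire content is showing that the boundary term vanishes. Your finite-window argument for this is the right way to handle it: on $[0,R]$, convergence of $\int_0^R u f(u)\phi(u)\,du$ and of $\int_0^R f'(u)\phi(u)\,du$ as $R\to\infty$ forces $f(R)\phi(R)$ to converge to some limit $\ell$, and $\ell\neq 0$ would make $u f(u)\phi(u)\sim \ell u$ non-integrable at infinity, contradicting finiteness of $\E\{U f(U)\}$ (read, as is standard, as $\E\{|U f(U)|\}<\infty$); the same argument applies at $-\infty$.

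Two remarks on hypotheses, both in your favor. First, you correctly and necessarily strengthened the informal hypothesis (``any $f$ such that the expectations exist'') to absolute continuity of $f$: without it the statement is literally false --- the Cantor function, extended by $0$ on $(-\infty,0]$ and by $1$ on $[1,\infty)$, has $f'=0$ almost everywhere, so $\E\{f'(U)\}=0$, while $\E\{U f(U)\}>0$ --- and absolute continuity is exactly what your integration-by-parts (fundamental theorem of calculus) step requires; Stein's original formulation imposes it. Second, for completeness: an alternative standard route avoids boundary terms entirely by writing $f(u)-f(0)=\int_0^u f'(t)\,dt$, substituting into $\E\{U f(U)\}$ (using $\E\{U\}=0$), and applying Fubini with the identity $\int_t^\infty u\,\phi(u)\,du = \sigma^2\phi(t)$; that version trades your limiting argument for a justification of the interchange of integrals. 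Either way, your proof is complete and correct.
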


\section{Computational details} \label{app:comp_details}
\subsection{Computing the state evolution parameters}\label{app:comp_SE}
For iteration $t$, the functions $\bargin$ and $\bargout$ in the SC-GAMP decoder in \eqref{eq:SC-GAMP1} and \eqref{eq:SC-GAMP2}, as well as  vectors $\bttq(t+1)$ and $\bttp(t+1)$, will  in general depend on the state evolution parameters $\tp_\sfr(t), \mu^p_\sfr(t), \tq_\sfc(t), \tq_\sfc(t+1), \mu^q_\sfc(t)$ and $\mu^q_\sfc(t+1)$, for $\sfr \in [\R], \sfc \in [\C]$.  

For Bayes SC-GAMP, from Corollary \ref{cor:BayesSE} we have  $\mu^p_\sfr(t)=\mu^q_\sfc(t)=1$, for $t\geq 1$ and $\sfr \in [\R], \sfc \in [\C]$. Instead of computing the parameters $\tq_\sfc(t)$ and $\tp_\sfr(t)$  offline  through the deterministic recursion in \eqref{eq:mu_tau_q_Bayes}-\eqref{eq:mu_tau_p_Bayes}, in our implementation they are estimated at runtime using the outputs $\bp(t)$ and $\bq(t)$ of the SC-GAMP decoder at each iteration $t$. These estimates, denoted by $\htq_\sfc(t)$ and $\htp_\sfr(t)$, are computed as follows, starting from the initialization $\tp_\sfr(0)$ in \eqref{eq:BayesSEinit}. For $t \ge 0$:
\begin{align}
    \htq_\sfc(t+1) & = \left( \sum_{\sfr=1}^\R W_{\sfr,\sfc}\frac{1}{m/\R}\sum_{i\in \mc{I}_\sfr}\left(-\goutbayes^\prime\left({{p}}_i({t}), y_i, \sfr\,;\,t\right)\right) \right)^{-1},\quad \text{for }\sfc\in[\C], \label{eq:tauq-data}\\
    \htp_\sfr(t+1) & =\frac{1}{\delin}\sum_{\sfc=1}^\C \left(W_{\sfr,\sfc}\, \tq_\sfc(t+1)\frac{1}{n/\C}\sum_{j\in \mc{J}_\sfc}\left(\ginbayes^\prime\left({{q}}_j({t+1}), \sfc\,;\,t+1\right)\right)\right),\quad \text{for }\sfr\in[\R].\label{eq:taup-data}
\end{align}
 The estimates above are derived by replacing expectations by empirical averages. Indeed, using \eqref{eq:goutB_der}, \eqref{eq:mu_tau_q_Bayes} can be rewritten as
\begin{equation}
  \tq_\sfc(t+1)  = \left(\sum_{\sfr=1}^\R W_{\sfr,\sfc}\E\left\{-\goutbayes^\prime(P_\sfr(t), Y_r, \sfr\,;\,t)\right\} \right)^{-1}. \label{eq:tauq-exp}
\end{equation}
Estimating $\E\{-\goutbayes^\prime(P_\sfr(t), Y_r, \sfr\,;\,t)$ via the empirical average over the $\sfr$th row block, for each $r \in [\Lr]$, gives  \eqref{eq:tauq-data}. To derive \eqref{eq:taup-data}, we recall from Corollary \ref{cor:BayesSE} that $\alpha^p_\sfr(t+1) = \tp_\sfr(t+1)$, and use the expression for $\alpha^p_\sfr(t+1)$ in
\eqref{eq:alphap-exp},  replacing $\E\left\{\ginbayes^\prime\left(Q_{\sfc}(t+1), \sfc\,;\,t+1\right)\right\}$ by the empirical average over the $\sfc$th column block, for $\sfc \in [\Lc]$.

\subsection{Explicit expressions for $\ginbayes$, $\goutbayes$ and potential function} \label{app:comp_GLM}
\subsubsection{Phase Retrieval}
In the phase retrieval model, we have
\begin{equation}
    y_i=\left(\bA\bx\right)_i^2+w_i=z_i^2+w_i,\quad \text{ for }i\in[m],
    \label{eq:noisypr}
\end{equation}
where $w_i\sim \normal(0,\sigma^2)$. 
Under this model,  the conditional  density of $Y$ given $Z$ is
\begin{equation}
 \Pout\left(Y=y|Z=z\right)=\frac{1}{\sqrt{2\pi \sigma^2}}\exp\left(\frac{-\left(y-z^2\right)^2}{2\sigma^2}\right). \label{eq:pyz-pr}
\end{equation}

\paragraph{Output function $\goutbayes$:} 
The expression for $\goutbayes$ is obtained by computing the conditional expectation $\E\left\{ Z_{\sfr}|P_\sfr(t),Y_\sfr \right\}$ in \eqref{eq:goutbayes},  using the joint distribution of $(Z_\sfr, P_\sfr(t))$ given by \eqref{eq:PZ_dist_alt}, and conditional distribution of  $Y_\sfr$ and $Z_\sfr$ given by \eqref{eq:pyz-pr}.
For noiseless phase retrieval, we first obtain the expression for $\E\left\{ Z_{\sfr}|P_\sfr(t),Y_\sfr \right\}$  for $\sigma >0$, and then take the limit  $\sigma \to 0$. This gives the following expression for $\goutbayes$:
\begin{equation}
    \goutbayes\left({p}, {y}; \tp\right)
    =\frac{1}{\tp}\left(\sqrt{y}\tanh{\left(\tfrac{p\sqrt{y}}{\tp}\right)}H(y)-p\right), \label{eq:gout-pr}
\end{equation}
where $H(y)$ is the Heaviside step function defined as
\begin{equation} \label{eq:heaviside}
    H(y)=\begin{cases}
    1\quad\text{ for }y>0,\\
    0\quad\text{ for }y\leq0.
    \end{cases}
\end{equation}
The  derivative of $\goutbayes$ (with respect to its first argument) can be obtained directly by differentiating \eqref{eq:gout-pr} and is given by
\begin{equation}
    \goutbayes^{\prime}\left({p}, {y}; \tp\right)=\frac{\partial}{\partial p}\goutbayes\left({p}, {y}; \tp\right)=\frac{-1}{\tp}\left(1 - \frac{y}{\tp}\left(1-\tanh^2{\left(\tfrac{p\sqrt{y}}{\tp}\right)}\right)H(y)\right).
    \label{eq:gout_deriv_pr}
\end{equation}
\paragraph{Input function $\ginbayes$:} The Bayes-optimal denoising function $\ginbayes$, defined in \eqref{eq:ginbayes} depends only on the signal distribution $p_X$.

For the numerical simulations presented in Sec. \ref{sec:numerical_results}, a non-zero mean signal distribution  was chosen for phase retrieval in order to avoid meeting both sign-symmetry conditions in \eqref{eq:sign_anti_sym}, \eqref{eq:exp_ne_0}. The signal distribution was of the form
\begin{equation*}
    p_X = \alpha \delta_a + (1-\alpha) \delta_{-a},
\end{equation*}
where $a=\sqrt{1/\left(1-\left(2\alpha-1\right)^2\right)}$ is chosen such that $\var(X)=1$. For our experiments, we used $\alpha =0.6$.
Under this prior,  we have
\begin{equation}
\begin{split}
    \ginbayes(q, t) 
   &  = \E\{ X \mid X+ \sqrt{\tq(t)}G_0 =q \}   \\
    %
    &=\dfrac{\alpha a\ \phi\left(\tfrac{q-a}{\sqrt{\tq(t)}}\right) - (1-\alpha) a \ \phi\left(\tfrac{q+a}{\sqrt{\tq(t)}}\right)}{ \alpha \ \phi\left(\tfrac{q-a}{\sqrt{\tq(t)}}\right) + (1- \alpha) \ \phi\left(\tfrac{q+a}{\sqrt{\tq(t)}}\right)},
\end{split}\label{eq:ginbayes-pr}
\end{equation}
where $G_0\sim\normal(0,1)$ and $\phi(x)=\left({1}/{\sqrt{2\pi}}\right)e^{-\tfrac{x^2}{2}}$ is the standard normal density. 
\paragraph{Potential function.}
From \eqref{eq:pot}, \eqref{eq:ups_x} we observe that the potential function is expressed in terms of $\ups(x;  \delta) =1 - \frac{\delta}{x} \, \E_{P,Y}  \left\{\var\left(Z\middle|P, Y; \, \frac{x}{\delta} \right)\right\}$. From \eqref{eq:goutB_der},  $\var\left(Z\middle|P, Y; \, \frac{x}{\delta} \right)$ can be expressed in terms of $\goutbayes^\prime$ as derived in \eqref{eq:gout_deriv_pr}, with $\tp=x/\delta$. For phase retrieval, this results in
\begin{equation}
    \ups(x;  \delta)= 1 -  \dfrac{\delta}{x}\mathbb{E}_{P, Y} \left\{ {y}\left(1-\tanh^2{\left(\tfrac{p\sqrt{y}}{x/\delta}\right)}\right)H(y)\right\}. \label{eq:ups_pr}
\end{equation}
The potential function then takes the form in \eqref{eq:pot}, using the expression in \eqref{eq:ups_pr} for $\ups(x;  \delta)$.

\subsubsection{Rectified Linear Regression}
In the noiseless rectified linear regression model,  we have 
\begin{equation}
   y_i =\text{max}\left((\bA\bx)_i, \, 0 \right) = \text{max}\left(z_i, \, 0 \right) , \quad \text{ for } i \in [m]. 
\end{equation} 
Therefore, the output function, often referred to as ReLU (rectified linear unit), acts as a thresholding function  that maps negative $z_i$ values to 0.
This implies that the conditional density of $Y$ given $Z$ can be written as
\begin{equation}
    {\Pout}\left(Y=y|Z=z\right)=\mathbbm{1}\left\{ y=z\right\}\cdot \mathbbm{1}\left\{ z>0\right\}\  + \ \mathbbm{1}\left\{ y=0\right\}\cdot \mathbbm{1}\left\{ z\leq0\right\}.
\end{equation}

\paragraph{Output function $\goutbayes$:} To compute  $\goutbayes$ using \eqref{eq:goutbayes},  we consider two cases for $y$ separately. For $y>0$, since $y=z$, we have $\E\{Z_\sfr | P_\sfr(t)=p,Y_\sfr=y\}=y$.
For $y=0$, which corresponds to all  negative values of $z$, the expectation $\E\{Z_\sfr | P_\sfr(t)=p,Y_\sfr=y\}$ can be computed explicitly in terms of the standard normal density and distribution functions.
Overall, we can express $\goutbayes$ as
\begin{equation} \label{eq:gout-relu}
    \goutbayes\left({p}, y; \tp\right)=\begin{cases}\quad\frac{-1}{\sqrt{\tp}}\frac{\phi\left(\frac{p}{\sqrt{\tp}}\right)}{\Phi\left(\frac{-p}{\sqrt{\tp}}\right)}\quad&\text{if }y=0,\\ \\\quad\frac{1}{\tp}\left(y-p\right)\quad&\text{ if }y>0,\end{cases}
\end{equation}
where $\phi$ and $\Phi$ are the standard normal density and distribution functions.
The derivative $\goutbayes^{\prime}$, obtained by differentiating Equation \eqref{eq:gout-relu}, is given by
\begin{equation}
    \frac{\partial}{\partial p} \goutbayes\left({p}, y; \tp\right) =\begin{cases}\quad\frac{-1}{\tp}\left(\frac{-\frac{p}{\sqrt{\tp}}\cdot\phi\left(\frac{p}{\sqrt{\tp}}\right)\Phi\left(\frac{-p}{\sqrt{\tp}}\right) + \phi^2\left(\frac{p}{\sqrt{\tp}}\right)}{\Phi^2\left(\frac{-p}{\sqrt{\tp}}\right)}\right)&\quad\text{if }y=0,\\ \\\quad\frac{-1}{\tp}&\quad\text{ if }y>0.\end{cases}
    \label{eq:dgout-ReLU}
\end{equation}

\paragraph{Input function $\ginbayes$:} 
For the numerical simulations presented in Figure  \ref{fig:sc-gamp-relu}, we use a  sparse prior, where each signal entry takes one of three values $\{-b, 0, b\}$. The signal distribution was of the form
\begin{equation*}
    p_X =  (1-\alpha) \delta_0 + \frac{\alpha}{2} \delta_{-b} +  \frac{\alpha}{2} \delta_{b},
\end{equation*}
where $b=\sqrt{1/\alpha}$ is chosen such that $\var(X)=1$. For our experiments, we used $\alpha =0.5$. Under this prior, we have
\begin{equation}
\begin{split}
    \ginbayes(q, t) &= \E\{ X \mid X+ \sqrt{\tq(t)}G_0 =q \}  \\
    %
    &=\dfrac{\frac{\alpha}{2}b\ \phi\left(\tfrac{q-b}{\sqrt{\tq(t)}}\right) - \frac{\alpha}{2}b\ \phi\left(\tfrac{q+b}{\sqrt{\tq(t)}}\right)}{(1-\alpha)\ \phi\left(\tfrac{q}{\sqrt{\tq(t)}}\right) + \frac{\alpha}{2}\ \phi\left(\tfrac{q-b}{\sqrt{\tq(t)}}\right) + \frac{\alpha}{2}\ \phi\left(\tfrac{q+b}{\sqrt{\tq(t)}}\right)}.
\end{split}\label{eq:ginbayes-relu}
\end{equation}

\paragraph{Potential function.}
From \eqref{eq:pot}, \eqref{eq:ups_x} we observe that the potential function is expressed in terms of $\ups(x;  \delta) =1 - \frac{\delta}{x} \, \E_{P,Y}  \left\{\var\left(Z\middle|P, Y; \, \frac{x}{\delta} \right)\right\}$. From \eqref{eq:goutB_der},  $\var\left(Z\middle|P, Y; \, \frac{x}{\delta} \right)$ can be expressed in terms of $\goutbayes^\prime$ as defined in \eqref{eq:dgout-ReLU} with $\tp=x/\delta$. For rectified linear regression, this results in
\begin{equation}
    \ups(x;  \delta)= \mathbb{E}_{P, Y} \left\{\mathbbm{1}\{y=0\}\left(\frac{-\frac{p}{\sqrt{x/\delta}}\cdot\phi\left(\frac{p}{\sqrt{x/\delta}}\right)\Phi\left(\frac{-p}{\sqrt{x/\delta}}\right) + \phi^2\left(\frac{p}{\sqrt{x/\delta}}\right)}{\Phi^2\left(\frac{-p}{\sqrt{x/\delta}}\right)}\right) + \mathbbm{1}\{y>0\}\right\}. \label{eq:ups_relu}
\end{equation}
The potential function then takes the form in \eqref{eq:pot}, using the expression in \eqref{eq:ups_relu} for $\ups(x;  \delta)$.

\subsection{Potential function} \label{app:pot}
The global minimizer (orange) and largest stationary point (green) curves in Figures \ref{fig:sc-gamp-pr} and \ref{fig:sc-gamp-relu} are generated by finding the stationary points of the potential function $\pot(x; \delta)$ defined in \eqref{eq:pot} as
 \begin{equation} 
 \begin{split}
    &  U(x;  \delta) :=  - {\delta} \ups(x;  \delta)   \, +  \, \int_0^x \frac{\delta}{z} \ups(z;  \delta) \,  dz  \,  +  \, 2I\left(X \, ;  \sqrt{( \delta/x) \ups(x; \delta)} \,  X + G_0 \right).
\end{split} \label{eq:pot-comp-det}
 \end{equation} 
For each sampling ratio $\delta$, the potential function is evaluated at 200 data points between 0.005 and 0.995. Then, the stationary points and global minimizer are found by comparing these 200 evaluations. The full range $x\in[0, 1]$ is truncated to avoid computation errors. Furthermore, the lower limit of the integral in \eqref{eq:pot-comp-det} is set to 0.001 (instead of 0), again to avoid computational instabilities. In addition, the expression for $\ups(x;  \delta)$ includes the expectation $\E_{P,Y}  \left\{\var\left(Z\middle|P, Y; \, \frac{x}{\delta} \right)\right\}$, which corresponds to a double integral over the support of $P$ and $Y$. All integrals are computed using numerical integration rather than Monte Carlo methods, since the potential function must be smooth in order to reliably find its stationary points. 

\subsection{DCT-based Implementation} \label{app:dct}
When computing SC-GAMP for larger base matrix parameters such as $(\omega=20,\Lambda=200)$, the signal dimension $n$ in the numerical simulations needs to be very large (greater than 20,000)  to ensure that the sizes of the column blocks and the row blocks   ($n/\C$ and $m/\R$, respectively) are sufficiently large. However, this is not computationally feasible while explicitly constructing the sensing matrix $\bA$ due to the high memory requirements of the matrix-vector operations. This problem can be averted by using a discrete cosine transform (DCT) based construction of $\bA$. 

The DCT is used to generate the sensing matrix $\bA$ by randomly picking $m/\R$ rows and $n/\C$ columns from a larger DCT coefficient matrix for each block in $\bA$. For $\sfr\in [\R]$ and $\sfc\in [\C]$, the entries in the ($\sfr, \sfc$)th block are then normalized by $\sqrt{W_{\sfr, \sfc}/\left(m/\Lr\right)}$ such that each column of $\bA$ has norm 1. Matrix-vector multiplications with this DCT-based sensing matrix can be computed using the Fast Fourier Transform (FFT), reducing the memory requirements of the algorithm and the decoding complexity of the matrix-vector multiplications from $O(n^2)$ to $O(n \log n)$. This approach has been used before in the context of sparse regression codes \cite{rush2017capacity,barbier2017approximate} and further details about the implementation can be found in \cite[Sec. 2.5]{hsieh2021thesis}.

It is important to clarify that the entries in the DCT-based sensing matrix are not i.i.d.~Gaussian. Therefore, the potential function and state evolution results presented in this paper will not apply to the DCT-based design. Nevertheless, as shown by the numerical results in Figure \ref{fig:sc-gamp-relu}, the DCT-based design provides a practical and effective SC-GAMP implementation for large spatial coupling parameters. Furthermore, for large dimensions, the DCT-based design was found to have a similar error performance to the i.i.d.~Gaussian design.
Obtaining a rigorous state evolution characterization for DCT-based spatially coupled designs is an open question.

\bibliographystyle{IEEEtran}
{\small{
\bibliography{sc_gamp}

\begin{thebibliography}{10}
\providecommand{\url}[1]{#1}
\csname url@samestyle\endcsname
\providecommand{\newblock}{\relax}
\providecommand{\bibinfo}[2]{#2}
\providecommand{\BIBentrySTDinterwordspacing}{\spaceskip=0pt\relax}
\providecommand{\BIBentryALTinterwordstretchfactor}{4}
\providecommand{\BIBentryALTinterwordspacing}{\spaceskip=\fontdimen2\font plus
\BIBentryALTinterwordstretchfactor\fontdimen3\font minus
  \fontdimen4\font\relax}
\providecommand{\BIBforeignlanguage}[2]{{%
\expandafter\ifx\csname l@#1\endcsname\relax
\typeout{** WARNING: IEEEtran.bst: No hyphenation pattern has been}%
\typeout{** loaded for the language `#1'. Using the pattern for}%
\typeout{** the default language instead.}%
\else
\language=\csname l@#1\endcsname
\fi
#2}}
\providecommand{\BIBdecl}{\relax}
\BIBdecl

\bibitem{Donoho1}
D.~L. Donoho, ``Compressed sensing,'' \emph{IEEE Trans. Inf. Theory}, vol.~52,
  pp. 489--509, April 2006.

\bibitem{eldar2012compressed}
Y.~C. Eldar and G.~Kutyniok, \emph{Compressed sensing: Theory and
  applications}.\hskip 1em plus 0.5em minus 0.4em\relax Cambridge University
  Press, 2012.

\bibitem{shechtman2015phase}
Y.~Shechtman, Y.~C. Eldar, O.~Cohen, H.~N. Chapman, J.~Miao, and M.~Segev,
  ``Phase retrieval with application to optical imaging: a contemporary
  overview,'' \emph{IEEE Signal Processing Magazine}, vol.~32, no.~3, pp.
  87--109, 2015.

\bibitem{fannjiang2020numerics}
A.~Fannjiang and T.~Strohmer, ``The numerics of phase retrieval,'' \emph{Acta
  Numerica}, vol.~29, pp. 125--228, 2020.

\bibitem{boufounos20081}
P.~T. Boufounos and R.~G. Baraniuk, ``1-bit compressive sensing,'' in
  \emph{Conf. on Inf. Sciences and Systems (CISS)}, 2008, pp. 16--21.

\bibitem{reeves2019thereplica}
G.~{Reeves} and H.~D. {Pfister}, ``The replica-symmetric prediction for random
  linear estimation with {G}aussian matrices is exact,'' \emph{IEEE Trans. Inf.
  Theory}, vol.~65, no.~4, pp. 2252--2283, 2019.

\bibitem{barbier2020mutual}
J.~{Barbier}, N.~{Macris}, M.~{Dia}, and F.~{Krzakala}, ``Mutual information
  and optimality of approximate message-passing in random linear estimation,''
  \emph{IEEE Trans. Inf. Theory}, vol.~66, no.~7, pp. 4270--4303, 2020.

\bibitem{barbier2019optimal}
J.~Barbier, F.~Krzakala, N.~Macris, L.~Miolane, and L.~Zdeborov{\'a}, ``Optimal
  errors and phase transitions in high-dimensional generalized linear models,''
  \emph{Proc. Natl. Acad. Sci. U.S.A.}, vol. 116, no.~12, pp. 5451--5460, 2019.

\bibitem{Tibs96}
R.~Tibshirani, ``{Regression shrinkage and selection with the Lasso},''
  \emph{J. Royal. Statist. Soc B}, vol.~58, pp. 267--288, 1996.

\bibitem{hastie2019statistical}
T.~Hastie, R.~Tibshirani, and M.~Wainwright, \emph{Statistical learning with
  sparsity: the {L}asso and generalizations}.\hskip 1em plus 0.5em minus
  0.4em\relax Chapman and Hall/CRC, 2019.

\bibitem{netrapalli2013phase}
P.~Netrapalli, P.~Jain, and S.~Sanghavi, ``Phase retrieval using alternating
  minimization,'' in \emph{Neural Information Processing Systems}, 2013.

\bibitem{mondelli2017fundamental}
M.~Mondelli and A.~Montanari, ``Fundamental limits of weak recovery with
  applications to phase retrieval,'' \emph{Foundations of Computational
  Mathematics}, vol.~19, pp. 703--773, 2019.

\bibitem{lu2020phase}
Y.~M. Lu and G.~Li, ``Phase transitions of spectral initialization for
  high-dimensional non-convex estimation,'' \emph{Information and Inference: A
  Journal of the IMA}, vol.~9, no.~3, pp. 507--541, 2020.

\bibitem{plan2013one}
Y.~Plan and R.~Vershynin, ``One-bit compressed sensing by linear programming,''
  \emph{Communications on Pure and Applied Mathematics}, vol.~66, no.~8, pp.
  1275--1297, 2013.

\bibitem{jacques2013robust}
L.~Jacques, J.~N. Laska, P.~T. Boufounos, and R.~G. Baraniuk, ``Robust 1-bit
  compressive sensing via binary stable embeddings of sparse vectors,''
  \emph{IEEE Trans. Inf. Theory}, vol.~59, no.~4, pp. 2082--2102, 2013.

\bibitem{Kab03}
Y.~Kabashima, ``{A CDMA multiuser detection algorithm on the basis of belief
  propagation},'' \emph{J.~Phys. A}, vol.~36, pp. 11\,111--11\,121, 2003.

\bibitem{donoho2009message}
D.~L. Donoho, A.~Maleki, and A.~Montanari, ``Message-passing algorithms for
  compressed sensing,'' \emph{Proc. Natl. Acad. Sci. U.S.A.}, vol. 106, no.~45,
  pp. 18\,914--18\,919, 2009.

\bibitem{bayati2011thedynamics}
M.~Bayati and A.~Montanari, ``The dynamics of message passing on dense graphs,
  with applications to compressed sensing,'' \emph{IEEE Trans. Inf. Theory},
  vol.~57, no.~2, pp. 764--785, Feb. 2011.

\bibitem{krzakala2012statistical}
F.~Krzakala, M.~M\'ezard, F.~Sausset, Y.~F. Sun, and L.~Zdeborov\'a,
  ``Statistical-physics-based reconstruction in compressed sensing,''
  \emph{Phys. Rev. X}, vol.~2, p. 021005, May 2012.

\bibitem{rangan2010generalized}
S.~{Rangan}, ``Generalized approximate message passing for estimation with
  random linear mixing,'' in \emph{Proc. IEEE Int. Symp. Inf. Theory}, July
  2011, pp. 2168--2172.

\bibitem{schniter2014compressive}
P.~Schniter and S.~Rangan, ``Compressive phase retrieval via generalized
  approximate message passing,'' \emph{IEEE Trans. Signal Process.}, vol.~63,
  no.~4, pp. 1043--1055, 2014.

\bibitem{ma2019optimization}
J.~Ma, J.~Xu, and A.~Maleki, ``Optimization-based {AMP} for phase retrieval:
  The impact of initialization and $\ell_2$ regularization,'' \emph{IEEE Trans.
  Inf. Theory}, vol.~65, no.~6, pp. 3600--3629, 2019.

\bibitem{maillard2020phase}
A.~Maillard, B.~Loureiro, F.~Krzakala, and L.~Zdeborov{\'a}, ``Phase retrieval
  in high dimensions: Statistical and computational phase transitions,'' in
  \emph{Neural Information Processing Systems (NeurIPS)}, 2020.

\bibitem{Tan23c}
N.~Tan and R.~Venkataramanan, ``Mixed regression via approximate message
  passing,'' \emph{Journal of Machine Learning Research}, vol.~24, no. 317, pp.
  1--44, 2023.

\bibitem{deshpande2014information}
Y.~Deshpande and A.~Montanari, ``Information-theoretically optimal sparse
  {PCA},'' in \emph{IEEE Int. Symp. Inf. Theory}, June 2014, pp. 2197--2201.

\bibitem{fletcher2018iterative}
A.~K. Fletcher and S.~Rangan, ``Iterative reconstruction of rank-one matrices
  in noise,'' \emph{Information and Inference: A Journal of the IMA}, vol.~7,
  no.~3, pp. 531--562, 2018.

\bibitem{lesieur2017constrained}
T.~Lesieur, F.~Krzakala, and L.~Zdeborov{\'a}, ``Constrained low-rank matrix
  estimation: Phase transitions, approximate message passing and
  applications,'' \emph{Journal of Statistical Mechanics: Theory and
  Experiment}, vol. 2017, no.~7, p. 073403, 2017.

\bibitem{MV21estimation}
A.~Montanari and R.~Venkataramanan, ``Estimation of low-rank matrices via
  approximate message passing,'' \emph{Annals of Statistics}, vol.~45, no.~1,
  pp. 321--345, 2021.

\bibitem{LiFanWei_Z2}
G.~Li, W.~Fan, and Y.~Wei, ``Approximate message passing from random
  initialization with applications to $\mathbb{Z}_2$ synchronization,''
  \emph{Proc. Natl. Acad. Sci. U.S.A.}, vol. 120, no.~31, p. e2302930120, 2023.

\bibitem{rossetti2023approximate}
R.~Rossetti and G.~Reeves, ``Approximate message passing for the matrix tensor
  product model,'' 2023, arXiv: 2306.15580.

\bibitem{felstrom1999time}
A.~J. Felstrom and K.~S. Zigangirov, ``Time-varying periodic convolutional
  codes with low-density parity-check matrix,'' \emph{IEEE Trans. Inf. Theory},
  vol.~45, no.~6, pp. 2181--2191, Sept. 1999.

\bibitem{kudekar2011threshold}
S.~Kudekar, T.~J. Richardson, and R.~L. Urbanke, ``Threshold saturation via
  spatial coupling: Why convolutional {LDPC} ensembles perform so well over the
  {BEC},'' \emph{IEEE Trans. Inf. Theory}, vol.~57, no.~2, pp. 803--834, Feb
  2011.

\bibitem{kudekar2013spatially}
S.~Kudekar, T.~Richardson, and R.~L. Urbanke, ``Spatially coupled ensembles
  universally achieve capacity under belief propagation,'' \emph{IEEE Trans.
  Inf. Theory}, vol.~59, no.~12, pp. 7761--7813, Dec 2013.

\bibitem{costello2014spatially}
D.~J. Costello, L.~Dolecek, T.~E. Fuja, J.~Kliewer, D.~G.~M. Mitchell, and
  R.~Smarandache, ``Spatially coupled sparse codes on graphs: theory and
  practice,'' \emph{IEEE Commun. Mag.}, vol.~52, no.~7, pp. 168--176, July
  2014.

\bibitem{mitchell2015spatially}
D.~G.~M. Mitchell, M.~Lentmaier, and D.~J. Costello, ``Spatially coupled {LDPC}
  codes constructed from protographs,'' \emph{IEEE Trans. Inf. Theory},
  vol.~61, no.~9, pp. 4866--4889, Sept. 2015.

\bibitem{kudekar2010theeffect}
S.~Kudekar and H.~D. Pfister, ``The effect of spatial coupling on compressive
  sensing,'' in \emph{48th Annual Allerton Conf. Commun., Control, and
  Computing}, 2010.

\bibitem{takeuchi2011cdmaspatialcoupling}
K.~{Takeuchi}, T.~{Tanaka}, and T.~{Kawabata}, ``Improvement of {BP}-based
  {CDMA} multiuser detection by spatial coupling,'' in \emph{Proc. IEEE Int.
  Symp. Inf. Theory}, 2011, pp. 1489--1493.

\bibitem{takeuchi2015performance}
K.~Takeuchi, T.~Tanaka, and T.~Kawabata, ``Performance improvement of iterative
  multiuser detection for large sparsely spread {CDMA} systems by spatial
  coupling,'' \emph{IEEE Trans. Inf. Theory}, vol.~61, no.~4, pp. 1768--1794,
  2015.

\bibitem{donoho2013information}
D.~L. Donoho, A.~Javanmard, and A.~Montanari, ``Information-theoretically
  optimal compressed sensing via spatial coupling and approximate message
  passing,'' \emph{IEEE Trans. Inf. Theory}, vol.~59, no.~11, pp. 7434--7464,
  Nov. 2013.

\bibitem{barbier2015approximate}
J.~Barbier, C.~Sch{\"{u}}lke, and F.~Krzakala, ``Approximate message-passing
  with spatially coupled structured operators, with applications to compressed
  sensing and sparse superposition codes,'' \emph{Journal of Statistical
  Mechanics: Theory and Experiment}, vol. 2015, no.~5, p. P05013, 2015.

\bibitem{barbier2017approximate}
J.~Barbier and F.~Krzakala, ``Approximate message-passing decoder and capacity
  achieving sparse superposition codes,'' \emph{IEEE Trans. Inf. Theory},
  vol.~63, no.~8, pp. 4894--4927, Aug. 2017.

\bibitem{venkataramanan19monograph}
R.~Venkataramanan, S.~Tatikonda, and A.~Barron, ``Sparse regression codes,''
  \emph{Foundations and Trends{\textregistered} in Communications and
  Information Theory}, vol.~15, no. 1-2, pp. 1--195, 2019.

\bibitem{rush2020capacity}
C.~Rush, K.~Hsieh, and R.~Venkataramanan, ``Capacity-achieving spatially
  coupled sparse superposition codes with {AMP} decoding,'' \emph{IEEE Trans.
  Inf. Theory}, vol.~67, no.~7, pp. 4446--4484, 2021.

\bibitem{hsieh2022GMAC}
K.~{Hsieh}, C.~{Rush}, and R.~{Venkataramanan}, ``Near-optimal coding for
  many-user multiple access channels,'' \emph{IEEE Journal on Special Areas in
  Inf. Theory}, vol.~3, no.~1, pp. 21--36, March 2022.

\bibitem{barbier2019universal}
J.~Barbier, M.~Dia, and N.~Macris, ``Universal sparse superposition codes with
  spatial coupling and {GAMP} decoding,'' \emph{IEEE Trans. Inf. Theory},
  vol.~65, no.~9, pp. 5618--5642, 2019.

\bibitem{yedla2014asimple}
A.~Yedla, Y.-Y. Jian, P.~S. Nguyen, and H.~D. Pfister, ``A simple proof of
  {M}axwell saturation for coupled scalar recursions,'' \emph{IEEE Trans. Inf.
  Theory}, vol.~60, no.~11, pp. 6943--6965, 2014.

\bibitem{mondelli2021approximate}
M.~Mondelli and R.~Venkataramanan, ``Approximate message passing with spectral
  initialization for generalized linear models,'' in \emph{Int. Conf.
  Artificial Intelligence and Stat. (AISTATS)}, 2021, pp. 397--405.

\bibitem{javanmard2013state}
A.~{Javanmard} and A.~{Montanari}, ``State evolution for general approximate
  message passing algorithms, with applications to spatial coupling,''
  \emph{Information and Inference: A Journal of the IMA}, vol.~2, no.~2, pp.
  115--144, 2013.

\bibitem{zdeborova2016statistical}
L.~Zdeborov{\'a} and F.~Krzakala, ``Statistical physics of inference:
  thresholds and algorithms,'' \emph{Advances in Physics}, vol.~65, no.~5, pp.
  453--552, 2016.

\bibitem{fengAMP22}
O.~Y. Feng, R.~Venkataramanan, C.~Rush, and R.~J. Samworth, ``A unifying
  tutorial on approximate message passing,'' \emph{Foundations and Trends in
  Machine Learning}, vol.~15, no.~4, pp. 335--536, 2022.

\bibitem{kumar2014threshold}
S.~Kumar, A.~J. Young, N.~Macris, and H.~D. Pfister, ``Threshold saturation for
  spatially coupled {LDPC} and {LDGM} codes on {BMS} channels,'' \emph{IEEE
  Trans. Inf. Theory}, vol.~60, no.~12, pp. 7389--7415, 2014.

\bibitem{kudekar2015wave}
S.~{Kudekar}, T.~J. {Richardson}, and R.~L. {Urbanke}, ``Wave-like solutions of
  general 1-{D} spatially coupled systems,'' \emph{IEEE Trans. Inf. Theory},
  vol.~61, no.~8, pp. 4117--4157, 2015.

\bibitem{barbier2016proof}
J.~{Barbier}, M.~{Dia}, and N.~{Macris}, ``Proof of threshold saturation for
  spatially coupled sparse superposition codes,'' in \emph{Proc. IEEE Int.
  Symp. Inf. Theory}, July 2016, pp. 1173--1177.

\bibitem{Ger21}
C.~Gerbelot and R.~Berthier, ``Graph-based approximate message passing
  iterations,'' \emph{Information and Inference: A Journal of the IMA},
  vol.~12, no.~4, pp. 2562--2628, 2023.

\bibitem{guo2011estimation}
D.~{Guo}, Y.~{Wu}, S.~S. {Shitz}, and S.~{Verd{\'u}}, ``Estimation in
  {G}aussian noise: Properties of the minimum mean-square error,'' \emph{IEEE
  Trans. Inf. Theory}, vol.~57, no.~4, pp. 2371--2385, 2011.

\bibitem{guo2005mutual}
D.~Guo, S.~{Shamai}, and S.~{Verd{\'u}}, ``Mutual information and minimum
  mean-square error in {G}aussian channels,'' \emph{IEEE Trans. Inf. Theory},
  vol.~51, no.~4, pp. 1261--1282, 2005.

\bibitem{Wan22}
T.~Wang, X.~Zhong, and Z.~Fan, ``{Universality of approximate message passing
  algorithms and tensor networks},'' \emph{The Annals of Applied Probability},
  vol.~34, no.~4, pp. 3943 -- 3994, 2024.

\bibitem{luo2019optimal}
W.~Luo, W.~Alghamdi, and Y.~M. Lu, ``Optimal spectral initialization for signal
  recovery with applications to phase retrieval,'' \emph{IEEE Trans. Signal
  Process.}, vol.~67, no.~9, pp. 2347--2356, 2019.

\bibitem{ma2017orthogonal}
J.~Ma and L.~Ping, ``Orthogonal {AMP},'' \emph{IEEE Access}, vol.~5, pp.
  2020--2033, 2017.

\bibitem{rangan2019vector}
S.~Rangan, P.~Schniter, and A.~K. Fletcher, ``Vector approximate message
  passing,'' \emph{IEEE Trans. Inf. Theory}, vol.~65, no.~10, pp. 6664--6684,
  2019.

\bibitem{takeuchi2020rigorous}
K.~Takeuchi, ``Rigorous dynamics of expectation-propagation-based signal
  recovery from unitarily invariant measurements,'' \emph{IEEE Trans. Inf.
  Theory}, vol.~66, no.~1, pp. 368--386, 2020.

\bibitem{pandit2020inference}
P.~Pandit, M.~Sahraee-Ardakan, S.~Rangan, P.~Schniter, and A.~K. Fletcher,
  ``Inference with deep generative priors in high dimensions,'' \emph{IEEE
  Journal on Special Areas in Inf. Theory}, vol.~1, no.~1, pp. 336--347, 2020.

\bibitem{tian2022generalized}
F.~Tian, L.~Liu, and X.~Chen, ``Generalized memory approximate message passing
  for generalized linear model,'' \emph{IEEE Trans. Signal Process.}, vol.~70,
  pp. 6404--6418, 2022.

\bibitem{venkataramanan22estimation}
R.~Venkataramanan, K.~K{\"o}gler, and M.~Mondelli, ``Estimation in rotationally
  invariant generalized linear models via approximate message passing,'' in
  \emph{Proc. Int. Conf. on Machine Learning (ICML)}, 2022.

\bibitem{Ma_spectrum21}
J.~Ma, J.~Xu, and A.~Maleki, ``Toward designing optimal sensing matrices for
  generalized linear inverse problems,'' \emph{IEEE Trans. Inf. Theory},
  vol.~70, no.~1, pp. 482--508, 2024.

\bibitem{takeuchi2023orthogonal}
K.~Takeuchi, ``Orthogonal approximate message-passing for spatially coupled
  linear models,'' \emph{IEEE Trans. Inf. Theory}, vol.~70, no.~1, pp.
  594--631, 2024.

\bibitem{stein1972bound}
C.~Stein, ``A bound for the error in the normal approximation to the
  distribution of a sum of dependent random variables,'' in \emph{Proc. Sixth
  Berkeley Symp. on Mathematical Statistics and Probability}, vol.~6, 1972, pp.
  583--603.

\bibitem{rush2017capacity}
C.~Rush, A.~Greig, and R.~Venkataramanan, ``Capacity-achieving sparse
  superposition codes via approximate message passing decoding,'' \emph{IEEE
  Trans. Inf. Theory}, vol.~63, no.~3, pp. 1476--1500, Mar. 2017.

\bibitem{hsieh2021thesis}
\BIBentryALTinterwordspacing
K.~Hsieh, ``Spatially coupled sparse regression codes for single- and
  multi-user communications,'' Ph.D. dissertation, Dept. Eng., Cambridge Univ.,
  Cambridge, UK, 2021. [Online]. Available:
  \url{{https://doi.org/10.17863/CAM.70721}}
\BIBentrySTDinterwordspacing

\end{thebibliography}
}}
\end{document}